\newcommand{\myeq}[1]{\mathrel{\overset{\makebox[0pt]{\mbox{\normalfont\tiny\sffamily #1}}}{=}}}
\newcommand{\myleq}[1]{\mathrel{\overset{\makebox[0pt]{\mbox{\normalfont\tiny\sffamily #1}}}{\leq}}}
\newcommand*{\Scale}[2][4]{\scalebox{#1}{$#2$}}%
\newcommand{\n}{\noindent}
\theoremstyle{plain}
\newtheorem{lemma}{Lemma}
\newtheorem{Cor}{Corollary}
\newtheorem{prop}{Proposition}
\theoremstyle{definition}
\newtheorem{Def}{Definition}
\declaretheoremstyle[
  spaceabove=\topsep, spacebelow=\topsep,
  headfont=\normalfont\bfseries,
  notefont=\mdseries, notebraces={(}{)},
  bodyfont=\normalfont,
  postheadspace=1em,
  qed=\qedsymbol
]{mythmstyle}
\declaretheorem[style=mythmstyle]{example}
\declaretheoremstyle[
  spaceabove=\topsep, spacebelow=\topsep,
  headfont=\normalfont\bfseries,
  notefont=\mdseries, notebraces={(}{)},
  bodyfont=\normalfont,
  postheadspace=1em,
  qed=$\diamond$
]{mythmstyle}
\declaretheorem[style=mythmstyle]{remark}
\DeclareMathOperator*{\minn}{minimize}
\begin{document}

\allowdisplaybreaks

\sloppy

\title{Fog-Aided Wireless Networks for Content Delivery: Fundamental Latency Trade-Offs }

\author{Avik Sengupta, \IEEEmembership{Member,~IEEE}, Ravi Tandon, \IEEEmembership{Senior Member,~IEEE},\\ Osvaldo Simeone, \IEEEmembership{Fellow,~IEEE}\\
\thanks{A. Sengupta is with the Next Generation and Standards (NGS) Group at Intel Corporation, Santa Clara CA 95054, USA. Email: avik.sengupta@intel.com. R. Tandon is with the Department of Electrical and Computer Engineering, University of Arizona, Tucson, AZ $85721$ USA. Email: tandonr@email.arizona.edu. O. Simeone is with the Centre for Telecommunications Research at the Department of Informatics at King's College London, UK. Email: osvaldo.simeone@kcl.ac.uk. O. Simeone has received funding from the European Research Council (ERC) under the European Union’s Horizon $2020$ research and innovation programme (grant agreement No $725731$). The work of O. Simeone was also partially supported by the U.S. NSF through grant CCF-$1525629$.}
\thanks{This work was presented in part at the $50$th Annual Conference on Information Sciences and Systems (CISS), March $2016$, IEEE International Symposium on Information Theory (ISIT), July $2016$, IEEE International Workshop on Signal Processing Advances in Wireless Communications (SPAWC) $2016$ and IEEE Globecom $2016$}}

\maketitle
\thispagestyle{plain}
\pagestyle{plain}
\begin{abstract}
A fog-aided wireless network architecture is studied in which edge-nodes (ENs), such as base stations, are connected to a cloud processor via dedicated fronthaul links, while also being endowed with caches. Cloud processing enables the centralized implementation of cooperative transmission strategies at the ENs, albeit at the cost of an increased latency due to fronthaul transfer. In contrast, the proactive caching of popular content at the ENs allows for the low-latency delivery of the cached files, but with generally limited opportunities for cooperative transmission among the ENs. The interplay between cloud processing and edge caching is addressed from an information-theoretic viewpoint by investigating the fundamental limits of a high Signal-to-Noise-Ratio (SNR) metric, termed normalized delivery time (NDT), which captures the worst-case coding latency for delivering any requested content to the users. The NDT is defined under the assumptions of either serial or pipelined fronthaul-edge transmission, and is studied as a function of fronthaul and cache capacity constraints. Placement and delivery strategies across both fronthaul and wireless, or edge, segments are proposed with the aim of minimizing the NDT. Information-theoretic lower bounds on the NDT are also derived. Achievability arguments and lower bounds are leveraged to characterize the minimal NDT in a number of important special cases, including systems with no caching capabilities, as well as to prove that the proposed schemes achieve optimality within a constant multiplicative factor of $2$ for all values of the problem parameters.
\end{abstract}

\begin{IEEEkeywords}
Caching, Cloud Radio Access Network (C-RAN), Fog Radio Access Network, edge processing, 5G, degrees-of-freedom, latency, wireless networks, interference channel.
\end{IEEEkeywords}

\section{Introduction}\label{sec:intro}
Traffic over wireless networks is undergoing a significant transformation to become increasingly dominated by multimedia content, and particularly by video \cite{bastug}. As a result, caching of popular content during off-peak traffic periods at various levels of the wireless network architecture has emerged as a major technology trend for the next generation ($5$G) of wireless systems. Moving the location of the caches closer to the edge of the network has the advantage of reducing the latency required for accessing and delivering users' requests. In particular, caching at the edge nodes (ENs), such as at macro or small-cell base stations, allows the delivery of content to mobile users with limited need for backhaul usage to connect to a remote content server (see \cite{cache_5g} and references therein).

While potentially reducing delivery latency and backhaul load, edge caching generally limits the operation of ENs to non-cooperative transmission strategies. This is because, with edge caching, each EN can design its transmitted signal based only on its local cached content, which may only partially overlap with that of other ENs, hence preventing cooperative transmission schemes such as joint beamforming. The \textit{localized} processing afforded by edge caching is in contrast to the \textit{centralized} processing that is instead possible in network architectures in which the ENs are controlled by a \textit{cloud} processor. An important example of this class of networks is the Cloud Radio Access Network (C-RAN) architecture, in which the ENs are connected to a cloud processor by means of so called \emph{fronthaul} links. In a C-RAN, the signals transmitted by the ENs are produced at the cloud based on a direct connection to the content server and forwarded to the ENs on the fronthaul links. As such, cloud processing in C-RAN enables the implementation of cooperative transmission strategies across the ENs, but at the cost of a potentially large latency, owing to the time required for fronthaul transfer (see, e.g., \cite{fronthaul,SimeoneMPSY15}).
\begin{figure}[!t]
\centering
\includegraphics[width=3.8in,height=3.3in]{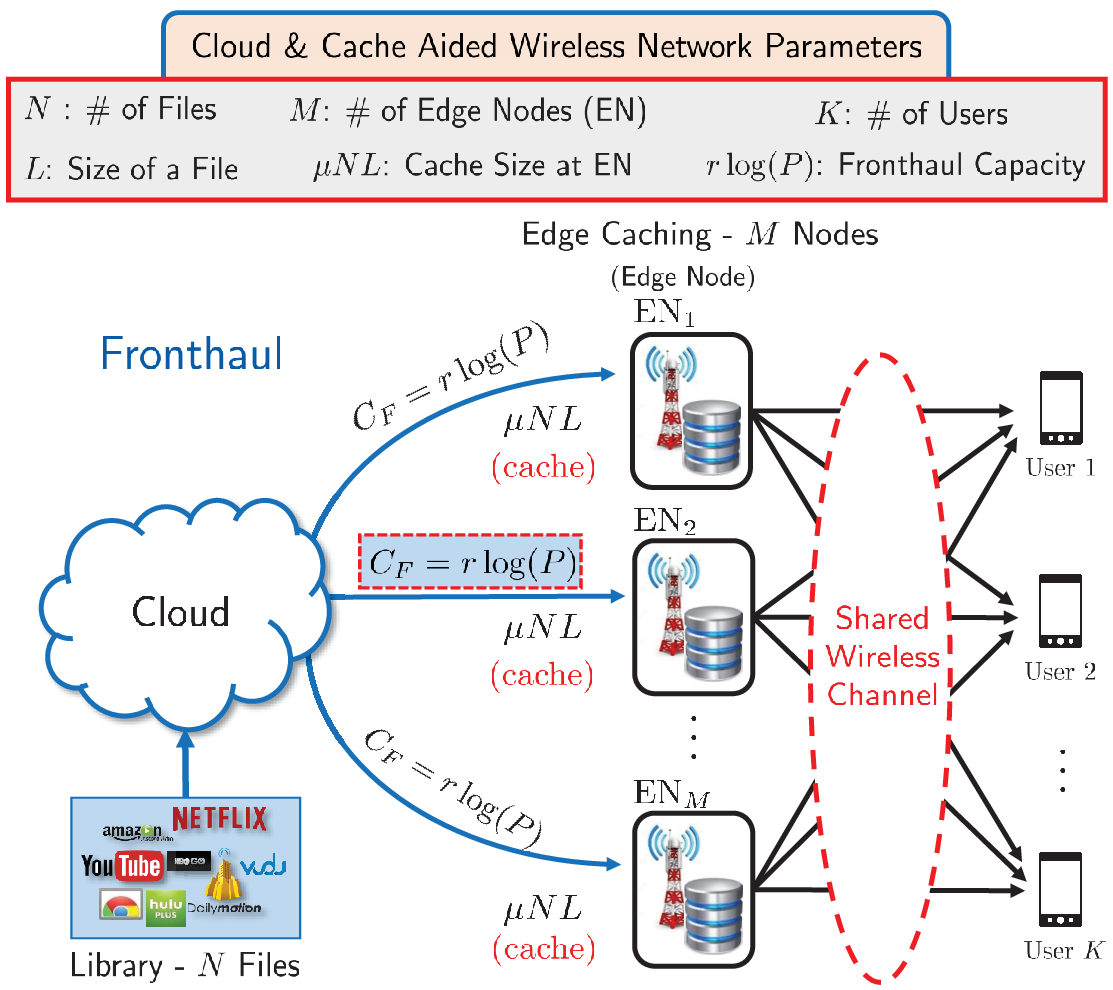}
\caption{Information-theoretic model for a cloud and cache-aided wireless system, referred to as Fog-Radio Access Network (F-RAN).}
\label{fig:sysmod}
\vspace{-12pt}
\end{figure}

Motivated by the complementary benefits highlighted above between cloud-based and edge-based architectures, in this work we consider a \textit{cloud and cache-aided} wireless network architecture, which we term \textit{Fog Radio Access Network} (F-RAN). The F-RAN architecture reflects major trends in the evolution towards 5G systems, which are envisaged to represent a profound paradigm shift with respect to previous generations at both the \textit{architectural} and \textit{functional} levels. At the architectural level, the trend is towards a \textit{fog architecture}, which encompasses cloud and edge segments \cite{fog_commag}. The cloud segment contains network servers, while the edge segment includes ENs. The two segments are connected by a transport, or \textit{fronthaul}, network. The key novelty of the fog architecture is the availability of computing and storage resources at both cloud servers and ENs, which can be programmed and reconfigured via software. At the functional level, the key technology of Network Function Virtualization (NFV) enables the flexible and adaptive allocation of the network functions that make up a network service, such as computing tasks and caching, on the software controlled network elements in the cloud and edge segments \cite{NFV_commag}. The reconfigurability afforded by the fog architecture via NFV contrasts with the rigid allocation of network functions in current cellular systems. 


The proposed system model accounts for a fog architecture in which the network functionalities that contribute to content delivery can be flexibly allocated across cloud and edge segments. In an F-RAN, as seen in Fig. \ref{fig:sysmod}, the ENs are connected to a cloud processor via dedicated fronthaul links, while also being endowed with caches that can be used to proactively store popular content \cite{Peng_fog}. The design of F-RAN systems involves two key design questions: (\emph{i}) \textit{What to cache at the ENs?}; and (\emph{ii}) \textit{How to deliver the requested content across the fronthaul and wireless, or edge, segments?} The two questions pertain to network functions, namely caching and delivery, that operate at different time scales: while caches are updated only at the time scale over which popular content is expected to change, e.g., every night, delivery is performed in each transmission interval in order to satisfy the current users' requests from the content library. Nevertheless, the two questions are strongly intertwined since delivery strategies need to operate by leveraging the existing cached content, as well as cloud processing.

In order to address the design of F-RAN, in this work, we adopt as a performance metric the \emph{worst-case coding based delivery latency} accrued when serving any set of users' requests in a given transmission interval. We aim to characterize optimal caching and delivery strategies that minimize the delivery latency. To enable analytical insights, we specifically propose a latency metric, termed \textit{Normalized Delivery Time} (NDT), which captures the high signal-to-noise-ratio (SNR) ratio of the latency achievable in an F-RAN, with given fronthaul and caching limitations, as compared to that of a \emph{reference} system with unlimited caching capability and interference-free links to the users. We first focus on delivery strategies in which fronthaul and wireless channels are operated in a \textit{serial} manner, so that the overall latency is the sum of the time spent for fronthaul communication between cloud and ENs and of the time required for wireless transmission from ENs to users. We then extend the analysis to characterize the NDT for systems using delivery strategies in which fronthaul and wireless channels are operated in a \textit{pipelined} (parallel) manner, so that fronthaul and wireless transmissions can take place at the same time (see, e.g., \cite{Leconte}).

\begin{figure}[t]
\centering
\subfigure[]{
\includegraphics[width=3.3in,height=2.425in]{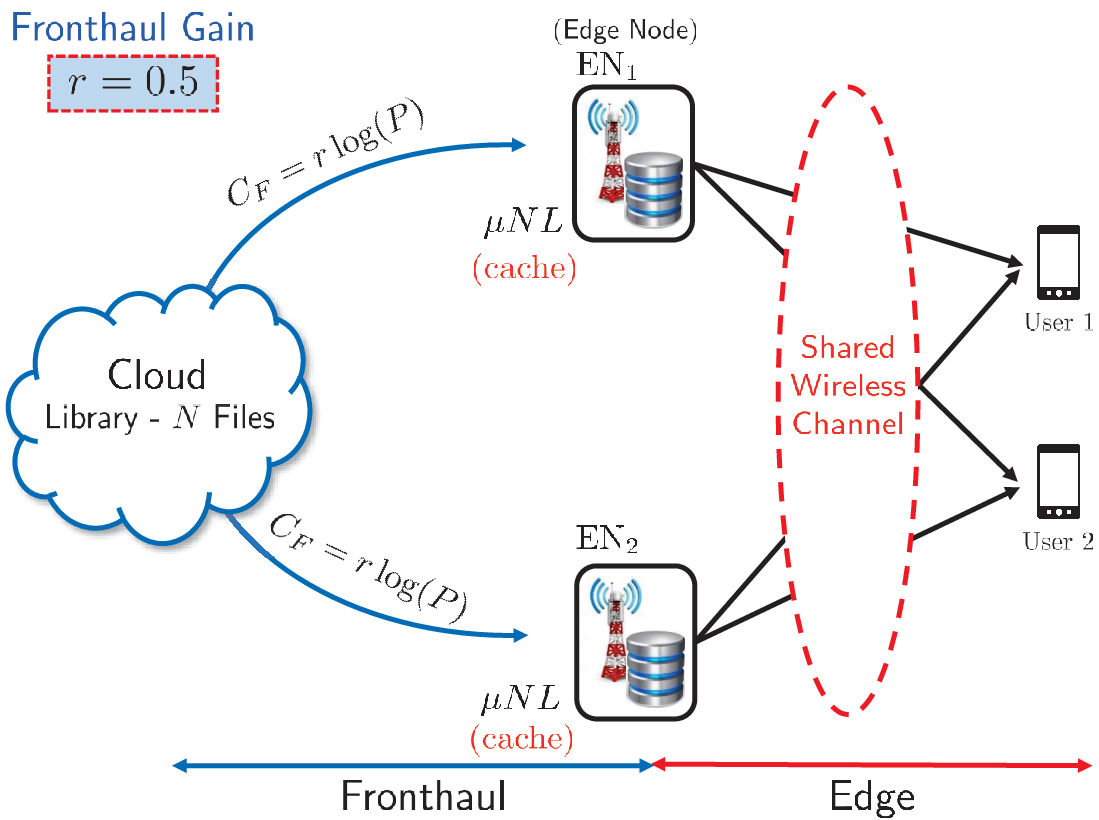}
\label{fig:ITmodel}
}
\subfigure[]{
\includegraphics[width=3.0in,height=2.55in]{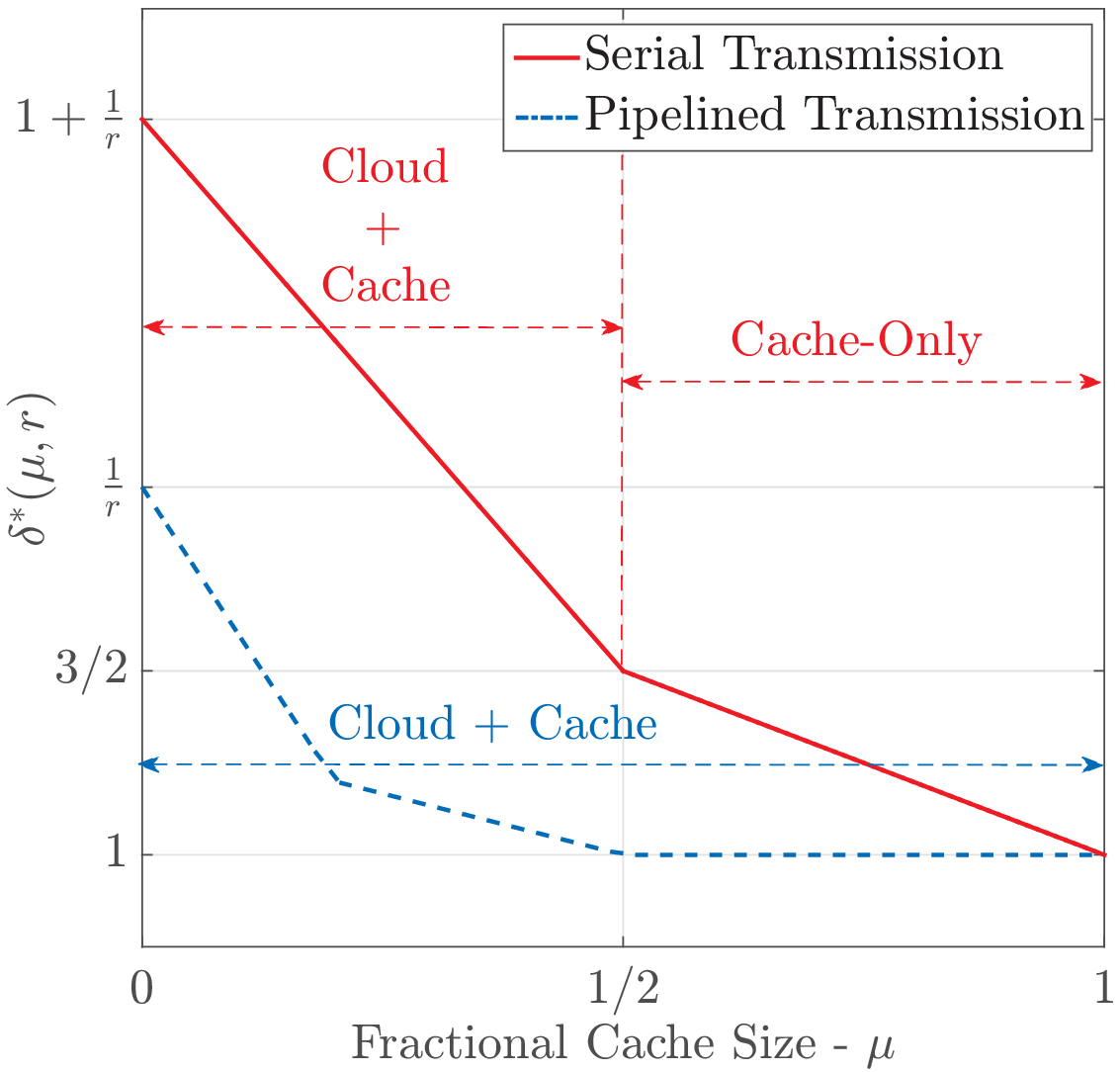}
\label{fig:MK22}
}
\caption{{(a)} Information-theoretic model for an F-RAN with $M=2$ ENs serving $K=2$ users and a fronthaul rate $r = 0.5$; {(b)} Trade-off between the normalized delivery time (NDT) and the fractional cache size $\mu$ in the presence of full CSI at ENs, users and the cloud.} \label{fig:MK22_ex}\vspace{-10pt}
\end{figure}
\begin{example}
To exemplify the analysis put forth in this paper, we briefly illustrate here the F-RAN set-up of Fig. \ref{fig:ITmodel}, in which two ENs (labeled as $\textrm{EN}_1$ and $\textrm{EN}_2$) are deployed to serve two users over a shared wireless channel. The ENs are connected to the cloud via fronthaul links whose capacity scales with the SNR $P$ of the wireless edge links as $r\log(P)$, with $r\geq 0$ being defined as the \textit{fronthaul rate}. We assume that there is a library of $N\geq 2$ popular files, each of a given size, and that each EN can cache at most a fraction $\mu \in [0,1]$ of the library content, where $\mu$ is defined as the \emph{fractional cache size}. Full Channel State Information (CSI) of the shared wireless channel is assumed at all nodes. For this example, the \textit{information-theoretically optimal} trade-off $\delta^{*}(\mu,r)$ between the NDT and the fractional cache size $\mu$ is shown in Fig. \ref{fig:MK22} for $r=0.5$ for serial as well as for pipelined fronthaul-edge transmission. The NDT captures the worst-case delivery latency required by the F-RAN to deliver all files requested by the users across both fronthaul and wireless segments. An NDT $\delta^{*}(\mu,r)$ indicates that an F-RAN with fractional cache size $\mu$ and fronthaul rate $r$ requires a total latency that is $\delta^{*}(\mu,r)$ times the time required by the mentioned reference system with unlimited caching and no interference. 

Among other conclusions, as illustrated in Fig. \ref{fig:MK22_ex}, the analysis presented in this paper reveals that, for serial fronthaul-edge transmission, in the regime of low fronthaul capacity ($r \leq 0.5$), the latency due to fronthaul transfer makes cloud processing not useful in reducing the overall delivery latency when the cache capacity is large enough $(\mu\geq 1/2)$. In contrast, for pipelined fronthaul-edge transmission, cloud processing is instrumental in obtaining the minimum delivery latency for all values of $\mu$, even when the fronthaul capacity is small. This is because, with pipelined transmission, the ENs need not wait for the fronthaul transmission to be completed before communicating to the users on the edge links. For the same reason, pipelined fronthaul-edge transmission generally improves the NDT compared to serial transmission. In particular, even with partial caching, that is, with $\mu<1$, the ideal NDT $\delta^*=1$ is achievable with pipelined fronthaul-edge transmission, while this is not the case with serial transmission. More details can be found in Sections \ref{ssec:ndtapprox} and \ref{ssec:nk22_pipe}. 
\end{example}

\n \textbf{Related Work:} The line of work pertaining to the information-theoretic analysis of cache-aided communication systems can be broadly classified into studies that consider caching at the end-users' devices or at the ENs. This research direction was initiated by \cite{Maddah-Ali,Maddah-Ali-decentralized} for a set-up that consists of a multicast link with cache-aided receivers. This work demonstrates that coded multicasting enables global caching gains to be reaped, as opposed to the conventional local caching gains of uncoded transmission. Follow-up papers on related models with receiver-end caching include \cite{Maddah-Ali-nonuniform, Maddah-Ali-online,ISWCS_Ji, shanmugam_finite,aviksg-isit15,ghasemi,tifr,sahraei_multi-library,Motahari_multi-server_arxiv,wan_moreusers,Gastpar_newconv,tuninetti_optimal,aviksg-tifs,fund_ji}. The present paper is instead inscribed in the parallel line of work that concerns caching at the ENs of a wireless network. A pioneering effort on this subject is \cite{Femto-journal}, in which ``femto-caching'', that is, caching at small-cell base stations, is introduced as a means to reduce backhaul usage and delivery latency. This and follow-up papers, including \cite{Gunduz_ML,Gunduz_ISIT,aviksg-iswcs,mds-cache,PES_journal,diggavi_hcc,diggavi_hetnet,diggavi_mlc}, assume that cache-aided ENs are not allowed to cooperate on the basis of the cached content to mitigate or cancel mutual interference. In contrast, references \cite{MA-CAIC,MA-CAIC-arxiv} investigate, from an information-theoretic viewpoint, an interference-limited wireless system with cache-aided ENs that can carry out coordinated transmission strategies, such as interference alignment, as well as cooperative transmission schemes, such as joint beamforming. Specifically, in \cite{MA-CAIC,MA-CAIC-arxiv} an upper bound on the worst-case delivery latency, which is formulated in terms of the inverse of the degrees-of-freedom metric, is derived for $M=3$ ENs and $K=3$ users by proposing a specific caching and delivery policy. Upper and lower bounds on this metric are obtained in \cite{avestimehr-caim} by accounting for caching at both ENs and users, under the assumption of delivery strategies based on linear precoding on the wireless channel. Related works that focus on the optimization of signal processing strategies at the ENs can be found in \cite{letaief,yu_cran,Azari15,osv_fog,Sezgin_cran}. This work was partially presented in \cite{aviksg-ciss,aviksg-ciss-arxiv,aviksg-spawc,RT_ISIT16,aviksg-gc16} and a brief informal summary was provided in \cite{osv_com_mag}. Reference \cite{meixia_fdt} extends the lower bounds in \cite{aviksg-ciss} to the case with caching also at the receivers and proposes upper bounds on delivery latency for systems with two and three receivers. \vspace{2.5pt}

\n \textbf{Main Contributions:} The main contributions of this paper are summarized as follows.

\begin{itemize}
\item An information-theoretic model of a fog-aided system, termed F-RAN, is presented, along with a novel latency metric, namely the \textit{normalized delivery time} (NDT). The NDT measures the worst-case latency required to deliver an arbitrary vector of requests to the users in the high-SNR regime, as compared to a reference system with full caching and no interference on the wireless channel. 

\item Under the assumptions of uncoded inter-file caching (but allowing for arbitrary intra-file coding), full CSI at all nodes and serial transmission, we develop general information-theoretic lower bounds on the minimum NDT for an F-RAN with any number of ENs and users as a function of the caching and fronthaul limitations as defined by the parameters $\mu$ and $r$, respectively. The lower bounds are derived by adopting cut-set arguments that are tailored to the set-up at hand that includes both fronthaul and wireless segments.
\item We present a general upper bound on the NDT of an arbitrary F-RAN with serial transmission by leveraging file-splitting between cloud-aided and cache-aided transmission strategies. For the cloud-aided scheme, we consider a novel \textit{soft-transfer} fronthauling approach, inspired by the standard operation of C-RAN \cite{SimeoneMPSY15}, which is based on the transmission of quantized encoded signals on the fronthaul links. For cache-aided strategies, we leverage both coordination via interference alignment and cooperation via joint beamforming at the ENs based on cached content. A number of alternative strategies are also considered for reference, including the conventional \textit{hard-transfer} of uncached content on the fronthaul links.
\item The proposed achievable schemes are shown to achieve the minimum NDT to within a factor of $2$ for all values of the system parameters for serial transmission.
\item The minimum NDT is characterized exactly in a number of important special cases for serial transmission. These include: cloud-only F-RANs, also known as C-RAN; cache-only F-RANs, that is the cache-aided wireless system studied in \cite{MA-CAIC,MA-CAIC-arxiv} for extremal values of fractional cache size $\mu$; and general F-RAN models with both cloud processing and caching for the case when the number of users exceeds the number of ENs in the low fronthaul regime. We present a specific case study for the $2\times 2$ F-RAN, where the minimum NDT is completely characterized by the proposed bounds (see Fig. \ref{fig:MK22}).
\item We define and investigate an F-RAN model in which the fronthaul and wireless edge segments can be operated in a pipelined, or parallel, manner. We show that, in comparison to serial transmission, pipelined fronthaul-edge transmission can improve the NDT by a multiplicative factor of at most $2$.
\item We present a general lower bound on the minimum NDT for the the pipelined fronthaul-edge transmission model as well as achievable schemes which leverage block-Markov encoding along with file-splitting between cloud and cache-aided transmission strategies. 
\item We characterize the minimum NDT for cloud-only F-RAN with pipelined fronthaul-edge transmission. Furthermore, for a general $M\times K$ F-RAN with pipelined fronthaul-edge transmission, the proposed schemes are shown to achieve the minimum NDT to within a factor of $2$ for all values of system parameters. We present the case study for the $2\times 2$ F-RAN for which the minimum NDT is completely characterized by the proposed bounds (see Fig. \ref{fig:MK22}).
\end{itemize}
The remainder of the paper is organized as follows. Section \ref{sec:sysmod} presents the information-theoretic model for a general $M\times K$ F-RAN and introduces the NDT metric for serial fronthaul-edge transmission. Lower bounds on the NDT for an F-RAN are derived in Section \ref{sec:lb}, while achievable schemes are proposed in Section \ref{sec:ub}. In Section \ref{sec:opt}, we present the mentioned finite-gap and exact characterization of the minimum NDT. 
Section \ref{sec:pipe} discusses the F-RAN model with pipelined fronthaul-edge transmissions. General upper and lower bounds on the minimum NDT for this model are presented along with a finite-gap characterization of the minimum NDT. Section \ref{sec:open} highlights some of the open problems and directions for future work, while Section \ref{sec:conc} concludes the paper. \vspace{5pt}

\n \textbf{Notation:} For any two integers $a$ and $b$ with $a\leq b$, we define the notation $[a:b] \triangleq \{a,a+1,\ldots,b\}$. We also use the notation $b\in[a,c]$ to imply that $b$ lies in the interval $a\leq b \leq c$ for any $a,b,c$. Furthermore, $b\in (a,c]$ denotes $a<b\leq c$. We use the notation $x\in\{a,b,\ldots,c\}$ to denote that the variable $x$ takes the values in the set $\{a,b,\ldots,c\}$. We define the function $(x)^+ \triangleq \max\{0,x\}$. The set of all positive integers is denoted by $\mathbb{N}^+$ and the set of all complex numbers is denoted by $\mathbb{C}$.

\section{System Model and Performance Metrics}\label{sec:sysmod}
In this section, we first present a model for the cloud and cache aided F-RAN system under study. Then, we introduce the normalized delivery time (NDT) metric, along with a number of remarks to provide additional context on the adopted model and performance metric.

\subsection{System Model}\label{ssec:sysmod}
We consider an $M\times K$ F-RAN, shown in Fig. \ref{fig:sysmod}, where $M$ ENs serve a total of $K$ users through a shared wireless channel. The ENs can cache content from a library of $N$ files, $F_1,\ldots,F_N$, where each file is of size $L$ bits, for some $L\in \mathbb{N}^+$. Formally, the files $F_n$ are independent and identically distributed (i.i.d.) as
\begin{align}
F_n \sim \text{Unif}\left\{1,2,\ldots, 2^{L}\right\}, ~~ \forall n \in [1:N].
\end{align}
Each EN is equipped with a cache in which it can store $\mu NL$ bits, where the fraction $\mu$, with $\mu\in[0,1]$, is referred to as the \textit{fractional cache size} and can be interpreted as the fraction of each file which can be cached at an EN. The cloud has full access to the library of $N$ files, and each EN is connected to the cloud by a fronthaul link of capacity of $C_F$ bits per symbol, where a symbol refers to a channel use of the downlink wireless channel.

In a transmission interval, each user $k\in[1:K]$ requests one of the $N$ files from the library. The demand vector is denoted by $\mathbf{D}\triangleq \left(d_1,\ldots,d_K\right) \in [1:N]^K$. This vector is known at the beginning of a transmission interval by both cloud and ENs, which attempt to satisfy the users' demands within the lowest possible delivery latency. As illustrated in Fig. \ref{fig:sysmod}, we first assume a serial operation over the fronthaul and wireless segments, whereby the cloud first communicates to the ENs and then the ENs transmit on the shared wireless channel to the users. As a result, the total latency is the sum of fronthaul and edge latencies (see Remark \ref{rem:pipe} for additional discussion on this point).

All the nodes have access to the global CSI about the wireless channels $\mathbf{H} = \left\{\left\{h_{km}\right\}:{k=[1:K], m=[1:M]}\right\}$, where $h_{km}\in \mathbb{C}$, denotes the wireless channel between user $k\in [1:K]$ and $\text{EN}_m$, $m\in[1:M]$. The coefficients are assumed to be drawn i.i.d. from a continuous distribution and to be time-invariant within each transmission interval.

As mentioned, the design of the system entails the definition of caching and delivery policies, which are formalized next for the case of serial fronthaul-edge transmission. Various generalizations of the definition below are presented in Sections \ref{sec:lb}, \ref{sec:ub} and \ref{sec:pipe}.

\begin{Def}[\textit{Policy}]\label{def:pol}
A caching, fronthaul, edge transmission, and decoding policy $\pi = (\pi_c, \pi_f, \pi_e,\pi_d)$ is characterized by the following functions.

\noindent\textit{a)~{Caching Policy $\pi_{c}$}:} The caching policy at each edge node $\textrm{EN}_m, ~m\in[1:M]$, is defined by functions $\pi_{c,n}^{m}(\cdot)$ that maps each file $F_n$ to its cached content $S_{m,n}$ as
\begin{align}\label{eq:cache}
S_{m,n} \triangleq \pi_{c,n}^{m}\left(F_n\right), ~~ \forall n\in[1:N].
\end{align}
The mapping is such that $H(S_{m,n})\leq \mu L$ in order to satisfy the cache capacity constraints. The overall cache content at $\textrm{EN}_m$ is given by $S_m = \left(S_{m,1},S_{m,2},\ldots,S_{m,N}\right)$. Note that the caching policy $\pi_{c,n}$ allows for arbitrary coding within each file, but it does not allow for inter-file coding.
Furthermore, the caching policy is kept fixed over multiple transmission intervals and is thus agnostic to the demand vector $\mathbf{D}$ and the global CSI $\mathbf{H}$.

\noindent {\textit{b)~Fronthaul Policy} $\pi_f$}: A fronthaul policy is defined by a function $\pi_f (\cdot)$, which maps the set of files $F_{[1:N]}$, the demand vector $\mathbf{D}$ and CSI $\mathbf{H}$ to the fronthaul message
\begin{align}\label{eq:fh}
\mathbf{U}_m^{T_F} = \left\{U_m[t]\right\}_{t=1}^{T_F} = \pi_f^m\left(\{F_{[1:N]}\},S_m,\mathbf{D},\mathbf{H}\right),
\end{align}
which is transmitted to $\text{EN}_m$ via the fronthaul link of capacity $C_F$ bits per symbol. Here, $T_F$ is the duration of the fronthaul message. In keeping with the definition of fronthaul capacity $C_F$, all time intervals, including $T_F$, are normalized to the symbol transmission time on the downlink wireless channel. Thus, the fronthaul message cannot exceed $T_FC_F$ bits.

\noindent {\textit{c)~Edge Transmission Policy} $\pi_e$}: After fronthaul transmission, each edge node $\textrm{EN}_m$ follows an edge transmission policy $\pi^m_e(\cdot)$ to map the demand vector $\mathbf{D}$ and global CSI $\mathbf{H}$, along with its local cache content and the  received fronthaul message, to output a codeword
\begin{align}
\mathbf{X}_m^{T_E} \hspace{-2pt}= \left\{X_m[t]\right\}_{t=1}^{T_E} \hspace{-2pt}= \pi^m_e\Big(S_m,\mathbf{U}_m^{T_F},\mathbf{D},\mathbf{H}\Big),
\end{align}
which is transmitted to the users on the shared wireless link. Here, $T_E$ is the duration of the transmission on the wireless channel, on which an average power constraint of $P$ is imposed for each codeword $\mathbf{X}_m^{T_E}$.
Note that the fronthaul policy $\pi_f$ and the edge transmission policy $\pi_e$ can adapt to the instantaneous demands and CSI at each transmission interval, unlike the caching policy, $\pi_c$, which remains unchanged over multiple transmission intervals.

\noindent {\textit{d)~Decoding Policy} $\pi_d$}: Each user $k \in [1:K]$, receives a channel output given by
\begin{align}\label{eq:chop}
\mathbf{Y}_k^{T_E} \hspace{-5pt}=\left\{Y_k[t]\right\}_{t=1}^{T_E} \hspace{-5pt}= \sum_{m=1}^M h_{km}\mathbf{X}_m^{T_E} + \mathbf{n}_k^{T_E},
\end{align}
where the noise $\mathbf{n}_k^{T_E} = \{n_k[t]\}_{t=1}^{T_E}$ is such that $n_k[t] \sim \mathcal{CN}(0,1)$ is i.i.d. across time and users. Each user $k \in [1:K]$, implements a decoding policy $\pi_d(\cdot)$, which maps the channel outputs, the receiver demands and the channel realization to the estimate
\begin{align}
\widehat{F}_{d_k} \triangleq \pi^k_d\Big(\mathbf{Y}_k^{T_E},d_k,\mathbf{H}\Big)
\end{align}
of the requested file ${F}_{d_k}$. The caching, fronthaul, edge transmission and decoding policies together form the policy $\pi = (\pi_c^{m},\pi_f^{m},\pi_e^m,\pi_d^k)$ that defines the operation of the F-RAN system. The probability of error of a policy $\pi$ is defined as
\begin{align}
P_e = \max_{\mathbf{D}}\max_{k\in [1:K]} \mathbb{P}\left(\widehat{F}_{d_k} \neq {F}_{d_k}\right),
\end{align}
which is the worst-case probability of decoding error measured over all possible demand vectors $\mathbf{D}$ and over all users $k\in[1:K]$. A sequence of policies, indexed by the file size $L$, is said to be \emph{feasible} if, for almost all channel realizations $\mathbf{H}$, i.e., with probability $1$, we have $P_e \rightarrow 0$ when $L\rightarrow \infty$.
\end{Def}

\subsection{Performance Metric: Normalized Delivery Time}\label{ssec:ndt}
We next define the proposed performance metric of normalized delivery time (NDT) by first introducing the notion of delivery time per bit.
\begin{Def}[\textit{Delivery time per bit}]\label{def:delta}
A \textit{delivery time per bit} $\Delta(\mu,C_F,P)$ is achievable if there exists a sequence of feasible policies such that
\begin{align}\label{eq:delta}
\Delta(\mu,C_F,P) &= \limsup_{L\rightarrow\infty} \frac{T_F + T_E}{L}.
\end{align}
\end{Def}
The delivery time per bit accounts for the latency within each transmission interval. Specifically, the total latency is given by the sum of the fronthaul and edge contributions, namely $T_F$ and $T_E$, respectively. In order to obtain a vanishing probability of error, as required by Definitions \ref{def:pol} and \ref{def:delta}, the latencies $T_F$ and $T_E$ need to scale with $L$, and it is this scaling that is measured by \eqref{eq:delta}. We also observe that the definition of delivery time per bit in \eqref{eq:delta} is akin to the completion time studied in \cite{erkip_comptime,erkip_comptime2} for standard channel models such as broadcast and multiple access channels.

 While the delivery time per bit $\Delta(\mu,C_F,P)$ generally depends on the power level $P$, as well as on the fronthaul capacity $C_F$ and fractional cache size $\mu$, we next define a more tractable metric that reflects the latency performance in the high-SNR regime. To this end, we let the fronthaul capacity scale with the SNR parameter $P$ as $C_F = r \log(P)$, where the fronthaul rate $r$ measures high-SNR ratio between the fronthaul capacity and the capacity of each EN-to-user wireless link in the absence of interference.
\vspace{-3pt}
\begin{Def}[\textit{NDT}]\label{def:ndt}
For any achievable $\Delta(\mu,C_F,P)$, with $C_F = r\log(P)$, the \textit{normalized delivery time} (NDT), is defined as
\begin{align} \label{eq:ndt1}
\delta(\mu,r) = \lim_{\substack{P\rightarrow \infty}}\frac{\Delta(\mu,r\log(P),P)}{1/\log P}.
\end{align}
Moreover, for any given pair $(\mu,r)$, the minimum NDT is defined as
\begin{align}\label{eq:ndt}
\delta^*(\mu,r) = \inf \left\{\delta(\mu,r):\delta(\mu,r) ~\text{is achievable} \right\}.
\end{align}
\end{Def}

\begin{remark}[\textit{Operational significance of NDT}]\label{rem:opsig}
In \eqref{eq:ndt1}, the delivery time per bit \eqref{eq:delta} is normalized by the term $1/\log P$. The latter is the delivery time per bit in the high-SNR regime for a reference baseline system with no interference and unlimited caching, in which each user can be served by a dedicated EN which has locally stored all the files. An NDT of $\delta^*$ hence indicates that the worst-case time required to serve any possible request vector $\mathbf{D}$ is $\delta^*$ times larger than the time needed by this reference baseline system. Note that, as a result, the NDT \eqref{eq:ndt} is always greater than or equal to one. We also remark that the mentioned baseline system can also be used as a reference to study the performance of an F-RAN model that enables also caching at the receivers, as proposed in \cite{meixia_fdt}. In this case, however, the NDT can be less than one, since the presence of caching at the receiver can yield lower latencies as compared to the reference system.
\end{remark}

\begin{remark}[\textit{Cache-Only F-RAN and Cloud-Only F-RAN}]\label{rem:nofront}
Throughout this paper, we will often consider separately the two important special cases of cache-only F-RAN and cloud-only F-RAN. The former corresponds to the case in which the fronthaul capacity is zero, i.e., $r=0$, while the latter, which amounts to a C-RAN system (see Section \ref{sec:intro}), is obtained by setting $\mu=0$. We observe that, in a cache-only F-RAN, as studied in \cite{MA-CAIC,MA-CAIC-arxiv}, it is required that the collective cache size of the $M$ ENs be large enough to completely store the entire library of $N$ files in order to obtained a finite worst-case delivery latency. This requires the condition $M\times \mu NL \geq N L $, i.e., $ \mu \geq {1}/{M}$, holds. Therefore, for this case, it suffices to focus on the range $\mu\in[1/M,1]$ of fractional cache capacity.
\end{remark}

\begin{remark}[\textit{NDT vs. DoF}]\label{rem:dof}
For the specific case of a cache-only F-RAN, the NDT in \eqref{eq:ndt} is proportional to the inverse of the more conventional degrees of freedom (DoF) metric $\text{DoF}(\mu)$ defined in \cite{MA-CAIC,MA-CAIC-arxiv}. Specifically, we have the relationship $\delta^{*}(\mu,0)=K/\text{DoF}(\mu)$.
\end{remark}

We show next that the NDT is convex in the fractional size $\mu$ for any value of the fronthaul rate $r\geq 0$. The proof follows from a \textit{file-splitting and cache-sharing} argument, whereby files are split into two fractions, with the two fractions being served by different policies that share the cache resources and whose delivery times add up to yield the overall NDT.


\begin{lemma}[\textit{Convexity of Minimum NDT}]\label{lem:conv}
 The minimum NDT, $\delta^{*}(\mu,r)$, is a convex function of $\mu$ for every value of $r\geq0$.
\end{lemma}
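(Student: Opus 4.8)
The plan is to prove convexity directly from the definition via a time-sharing / file-splitting argument, exactly as hinted in the text. Fix a fronthaul rate $r\geq 0$ and two fractional cache sizes $\mu_1,\mu_2\in[0,1]$, and let $\delta^*(\mu_1,r)$ and $\delta^*(\mu_2,r)$ be achieved (to within $\epsilon$) by sequences of feasible policies $\pi^{(1)}$ and $\pi^{(2)}$ respectively. For $\lambda\in[0,1]$ set $\mu_\lambda=\lambda\mu_1+(1-\lambda)\mu_2$; I want to exhibit a feasible policy for cache size $\mu_\lambda$ with NDT at most $\lambda\delta^*(\mu_1,r)+(1-\lambda)\delta^*(\mu_2,r)+\epsilon'$. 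Since $\epsilon$ is arbitrary and then $L\to\infty$, convexity of $\delta^*$ follows.

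First I would split each library file $F_n$ of size $L$ bits into two independent subfiles $F_n^{(1)}$ of size $\lambda L$ bits and $F_n^{(2)}$ of size $(1-\lambda)L$ bits. Treat the collection $\{F_n^{(1)}\}$ as a library of $N$ files of size $\lambda L$, and apply policy $\pi^{(1)}$ to it, using a cache partition of size $\lambda\mu_1 NL$ bits per EN; similarly apply $\pi^{(2)}$ to $\{F_n^{(2)}\}$ with a cache partition of size $(1-\lambda)\mu_2 NL$ bits per EN. The total per-EN cache used is $\lambda\mu_1 NL+(1-\lambda)\mu_2 NL=\mu_\lambda NL$ bits, so the cache constraint for $\mu_\lambda$ is met. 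On the fronthaul, the two policies' fronthaul messages are sent one after the other; on the wireless channel, the two edge phases are likewise concatenated in time. Because $\pi^{(1)}$ uses fronthaul rate $r$ when its file size is $\lambda L$ (the fronthaul rate is a property of the scaling $C_F=r\log P$, not of $L$), and similarly for $\pi^{(2)}$, the combined policy also operates at fronthaul rate $r$. Each user decodes $F_{d_k}^{(1)}$ from the first wireless phase and $F_{d_k}^{(2)}$ from the second, then concatenates; the worst-case error probability is at most the sum of the two, which vanishes as $L\to\infty$, so the combined policy is feasible.

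It remains to add up the latencies. The fronthaul time of the combined scheme is $T_F^{(1)}+T_F^{(2)}$ and the edge time is $T_E^{(1)}+T_E^{(2)}$, so
\begin{align}
\Delta(\mu_\lambda,r\log P,P) \leq \limsup_{L\to\infty}\frac{T_F^{(1)}+T_E^{(1)}}{L}+\limsup_{L\to\infty}\frac{T_F^{(2)}+T_E^{(2)}}{L}.
\end{align}
Since $\pi^{(1)}$ is run on files of size $\lambda L$, the term $(T_F^{(1)}+T_E^{(1)})/L = \lambda\cdot(T_F^{(1)}+T_E^{(1)})/(\lambda L)$ tends to $\lambda\,\Delta(\mu_1,r\log P,P)$, and similarly the second term tends to $(1-\lambda)\,\Delta(\mu_2,r\log P,P)$. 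Dividing by $1/\log P$ and letting $P\to\infty$ gives $\delta(\mu_\lambda,r)\leq \lambda\,\delta^{(1)}(\mu_1,r)+(1-\lambda)\,\delta^{(2)}(\mu_2,r)$ for the achievable NDTs of $\pi^{(1)},\pi^{(2)}$; taking these arbitrarily close to the respective minima and invoking the infimum in the definition of $\delta^*$ yields $\delta^*(\mu_\lambda,r)\leq\lambda\,\delta^*(\mu_1,r)+(1-\lambda)\,\delta^*(\mu_2,r)$, which is convexity.

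The only genuinely delicate point — and the step I would be most careful about — is the interchange of the two $\limsup$'s and the claim that running a policy designed for file size $L'$ on files of size $L'=\lambda L$ yields, after normalization by $L$, a factor $\lambda$ times the original delivery-time-per-bit; this uses that the NDT is a high-SNR limit of a ratio that is, by Definition~\ref{def:delta}, already normalized per bit, so the scaling is linear in the file size. One should also note the boundary case $\lambda\in\{0,1\}$ (trivial) and the degenerate subfile of size $0$, for which the corresponding policy contributes nothing. Everything else — cache budget, fronthaul rate, feasibility via the union bound on error — is routine bookkeeping.
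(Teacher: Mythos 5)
Your proposal is correct and follows essentially the same file-splitting and cache-sharing argument as the paper: split each file into fractions $\lambda L$ and $(1-\lambda)L$, serve them with the two policies sharing the cache, concatenate the fronthaul and edge phases, and use the linear scaling of delivery time in file size to conclude. The only difference is that you handle the infimum via an $\epsilon$-approximation rather than assuming the minima are attained, which is a minor refinement of the paper's reasoning rather than a different route.
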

\begin{proof}
Consider any two feasible policies $\pi_1$ and $\pi_2$, where policy $\pi_i$ requires a fractional cache capacity and fronthaul rate pair $(\mu_i,r)$ and achieves an NDT of  $\delta(\mu_i,r)$ for $i=1,2$. Given an F-RAN system with cache storage capacity $\mu = \alpha \mu_1 + (1-\alpha)\mu_2$ and fronthaul rate $r$ for some $\alpha \in [0,1]$, we consider the following policy. Each file is split into two parts of sizes $\alpha L$ and $(1-\alpha) L$, respectively, where the first is delivered by using policy $\pi_1$ and the second by using policy $\pi_2$. Note that a fractional cache capacity $\mu$ is sufficient to support the operation of this policy. The NDT achieved by this policy can be computed as $\delta(\mu,r) =  \alpha\delta(\mu_1,r) +(1-\alpha)\delta(\mu_2,r)$ since, by \eqref{eq:delta} and \eqref{eq:ndt1}, the NDT is proportional to the file size. Applying this argument to two policies that achieve minimum NDTs $\delta^*(\mu_i,r)$ for $i=1,2$ proves the inequality
\begin{align}\label{eq:lemconv}
&\delta^{*}\left(\alpha \mu_1 + (1-\alpha)\mu_2,r \right) \leq  \alpha\delta^{*}(\mu_1,r) +(1-\alpha)\delta^{*}(\mu_2,r),
\end{align}
since the right-hand side of \eqref{eq:lemconv} is achievable by file-splitting. This shows the convexity of the minimum NDT  $\delta^{*}(\mu,r)$ as a function of $\mu$ for every value of $r\geq 0$.
\end{proof}

\begin{remark}[\textit{Pipelined Fronthaul-Edge Transmission}]\label{rem:pipe}
As discussed, the system model presented in this section adopts a serial delivery model, whereby fronthaul transmission is followed by edge transmission. An alternative model is of pipelined delivery in which the ENs can simultaneously receive on fronthaul links and transmit on the wireless channel. In this case, each edge node $\text{EN}_m$ starts transmitting at the beginning of the transmission interval using an edge transmission policy $\pi^{m}_{\mathsf{P},e}(\cdot)$, such that, at any time instant $t$, the EN maps the demand vector $\mathbf{D}$, the global CSI $\mathbf{H}$, the local cache content $S_m$ and the fronthaul messages received up to time $(t-1)$, to the transmitted signal at time $t$ as
\begin{align}\label{eq:tx_pipe}
X_m[t] = \pi^{m}_{\mathsf{P},e}\Big(S_m,\left\{U_m[1], U_m[2], \ldots, U_m[t-1]\right\},\mathbf{D},\mathbf{H}\Big), ~~t\in[1:T]
\end{align}
The overall latency is given by $T$ and the NDT can be defined in a manner analogous to Definition \ref{def:ndt}, namely
\begin{align}\label{eq:ndt_pipe}
\delta_{\mathsf{P}}(\mu,r) = \lim_{P\rightarrow \infty} \limsup_{L\rightarrow \infty} \frac{T}{L/\log P}.
\end{align}
We observe that the serial fronthaul-edge transmission policies in Definition \ref{def:pol} are included as special cases in the class of pipelined fronthaul-edge transmission schemes. As a result, the minimum NDT $\delta^*_{\mathsf{P}}(\mu,r)$ under pipelined operation can be no larger than that under serial operation. Furthermore, following the same arguments as in Lemma \ref{lem:conv}, the minimum NDT $\delta^*_{\mathsf{P}}(\mu,r)$ can be seen to be a convex function of $\mu$ for any $r\geq 0$. We provide a detailed study of the pipelined delivery model in Section \ref{sec:pipe}.
\end{remark}

\section{Lower Bound on minimum NDT}\label{sec:lb}

In this section, we provide a general lower bound on the minimum NDT for the $M\times K$ F-RAN described in the previous section. The main result is stated in the following proposition.

\begin{prop}[\textit{Lower Bound on Minimum NDT}]\label{prop:glb}
For an F-RAN with $M$ ENs, each with a fractional cache size $\mu \in [0,1]$, $K$ users, a library of $N\geq K$ files and a fronthaul capacity of $C_F = r\log(P)$ bits per symbol, the minimum NDT is lower bounded as
\begin{align}\label{eq:glb}
\delta^*(\mu,r) \geq  \delta_{LB}(\mu,r),
\end{align}
where $\delta_{LB}(\mu,r)$ is the minimum value of the following linear program (LP)
\begin{align}
						\minn     &~~\delta_F + \delta_E \\
\mathrm{subject~ to}: &~~\ell \delta_E + (M-\ell) r \delta_F \geq K -(M-\ell) (K-\ell) \mu, \label{eq:c1}\\
						          &~~\delta_F \geq 0, ~\delta_E \geq 1,\label{eq:c2}
\end{align}

\n where \eqref{eq:c1} is a family of constraints with $\ell \in [0:\min\{M,K\}]$.
\end{prop}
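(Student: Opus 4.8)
# Proof Proposal

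The plan is to establish the bound via cut-set arguments applied to carefully chosen subsets of the edge nodes, combined with the standard high-SNR degrees-of-freedom counting on the wireless side and a simple counting of information that must flow over the fronthaul links. For a fixed integer $\ell \in [0:\min\{M,K\}]$, I would partition the $M$ ENs into a set $\mathcal{A}$ of $\ell$ ``active'' nodes and its complement of $M-\ell$ nodes. The idea is that the $M-\ell$ complementary ENs can supply, through their caches and fronthaul links, only a bounded amount of information about the requested files, so the remaining deficit must be carried by the $\ell$ nodes in $\mathcal{A}$ over the wireless channel, which limits how fast delivery can proceed.

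The key steps, in order, are as follows. First, I would lower bound the total entropy that must be reliably conveyed to an appropriate subset of users in terms of the file size $L$, using the fact that the files are i.i.d. uniform and the $N \geq K$ assumption guarantees a worst-case demand vector with all-distinct requests, so the users jointly need $K L$ bits (up to $o(L)$ terms from Fano). Second, I would apply a cut separating the $\ell$ active ENs plus the fronthaul inputs of the $M-\ell$ inactive ENs from the users, and bound the mutual information across this cut: the wireless contribution of the $\ell$ active ENs over $T_E$ channel uses is at most $\ell T_E \log P + o(T_E \log P)$ (this is the point-to-point / MIMO DoF bound, exploiting that with $\ell$ transmit signals one cannot exceed $\ell$ spatial degrees of freedom), while the fronthaul links feeding the $M-\ell$ inactive ENs carry at most $(M-\ell) T_F C_F = (M-\ell) r T_F \log P$ bits. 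Third, I would account for what the inactive ENs already know from their caches: their cached content about the (at most) $K$ distinct requested files totals at most $(M-\ell)\,\mu\,L$ bits per file; the subtle point is to get the combinatorial coefficient $(K-\ell)$ rather than $K$, which I expect comes from choosing the user subset and the file subset so that only files not already ``covered'' by the active side contribute — essentially, $K - \ell$ files' worth of cached side information at each of the $M-\ell$ inactive ENs, giving the term $(M-\ell)(K-\ell)\mu L$. Putting these together and dividing by $L \log P$, then taking the limits defining the NDT, yields $\ell \delta_E + (M-\ell) r \delta_F \geq K - (M-\ell)(K-\ell)\mu$, which is exactly \eqref{eq:c1}. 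The constraints $\delta_F \geq 0$ and $\delta_E \geq 1$ follow respectively from nonnegativity of latency and from the interference-free single-EN bound (even one user served by one EN needs $T_E \geq L/\log P$ asymptotically). Finally, since every feasible policy must satisfy all these inequalities simultaneously for all $\ell$, its NDT $\delta_F + \delta_E$ is at least the optimum of the LP, giving \eqref{eq:glb}.

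The main obstacle I anticipate is making the genetic cut-set / side-information bookkeeping rigorous enough to extract the precise coefficient $(M-\ell)(K-\ell)$ rather than a looser bound. This requires a careful choice of which users to include on the ``receiver'' side of the cut and which files' cached fragments to credit to the inactive ENs, together with a symmetrization or averaging over demand permutations so that the uncoded-inter-file-caching assumption (each $S_{m,n}$ depends only on $F_n$ with $H(S_{m,n}) \le \mu L$) can be invoked cleanly. One must also be careful that the genie-provided information (e.g., handing the decoders the messages of the active ENs, or handing some ENs extra files) is consistent across the chain of inequalities and that all the $o(L)$ and $o(T \log P)$ error terms vanish in the appropriate order of limits ($L \to \infty$ first, then $P \to \infty$) as dictated by Definitions \ref{def:delta} and \ref{def:ndt}. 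The wireless DoF bound itself is standard, so the real work is purely in the information-theoretic accounting on the fronthaul-plus-cache side and in the combinatorics of the subset selection.
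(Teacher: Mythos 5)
Your proposal follows essentially the same route as the paper's proof in Appendix A: a cut-set bound indexed by $\ell$ that combines Fano's inequality, the high-SNR bound $\ell T_E\log P$ on the wireless side, the $(M-\ell)rT_F\log P$ fronthaul budget, and the cached side information of the remaining ENs, and your guess for the origin of the $(K-\ell)$ coefficient is exactly right --- the $\ell$ received signals already determine the $\ell$ corresponding requested files, so conditioning the cache contents on them (and on the $N-K$ unrequested files) leaves only $(K-\ell)$ files' worth of cached content, each bounded by $\mu L$ per EN via the uncoded per-file constraint $H(S_{m,n})\leq\mu L$. The only cosmetic differences are that the paper indexes the cut by $\ell$ users' received signals and argues decodability via invertibility of the resulting linear system (rather than by $\ell$ active ENs' transmit signals), and that no symmetrization over demand permutations is needed for the fixed worst-case demand.
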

\begin{proof}
The proof of Proposition \ref{prop:glb} is presented in Appendix \ref{ap:thglb}.
\end{proof}

 In Proposition \ref{prop:glb}, and henceforth, we refer to \emph{fronthaul-NDT} as the normalized delivery time for fronthaul transmission, that is, 
\begin{align} \label{eq:delf}
\delta_F = \lim_{\substack{P\rightarrow \infty}}\frac{T_F \log(P)}{L},
\end{align}
and \emph{edge-NDT} as the normalized delivery time for edge transmission, that is, 
\begin{align} \label{eq:dele}
\delta_E = \lim_{\substack{P\rightarrow \infty}}\frac{T_E \log(P)}{L}.
\end{align}
Note that the NDT \eqref{eq:ndt1} is the sum of fronthaul-NDT and edge-NDT i.e., $\delta = \delta_F + \delta_E$. Proposition \ref{prop:glb} hence provides a lower bound on the minimum NDT by means of lower bounds on linear combinations of fronthaul and edge NDTs.

 The proof of the main bound \eqref{eq:c1} is based on a \textit{cut-set-like argument}, which is illustrated in Fig.  \ref{fig:proofspawc}. Specifically, it can be argued that, for all sequence of feasible policies guaranteeing a vanishing probability of error, in the high-SNR regime, any $K$ requested files must be decodable with low error probability from the received signal of $\ell$ users along with the cache contents and fronthaul messages of the remaining $(M-\ell)$ ENs. This is because, any $\ell\leq \min\{M,K\}$ received signals $\mathbf{Y}^{T_E}_{[1:\ell]}$ are functions of $M$ channel inputs $\mathbf{X}^{T_E}_{[1:M]}$, which in turn are functions of the $M$ user caches and their corresponding fronthaul messages $\mathbf{U}^{T_F}_{[1:M]}$. Thus, using these $\ell$ signals and the contents of $(M-\ell)$ caches, $S_{[1:(M-\ell)]}$ and associated fronthaul messages $\mathbf{U}^{T_F}_{[1:(M-\ell)]}$, all the inputs can be decoded using the invertible linear system of the form of \eqref{eq:chop}, neglecting the noise in the high-SNR regime. The proposition is proved by carefully bounding the joint entropy of these random variables, which upper bounds the amount of information that can be reliably conveyed in the given time intervals $T_E$ and $T_F$ or edge-NDT $\delta_E$ and fronthaul-NDT $\delta_F$.

\begin{figure}[t]
\begin{centering}
\includegraphics[width=4.7in, height=2.8in]{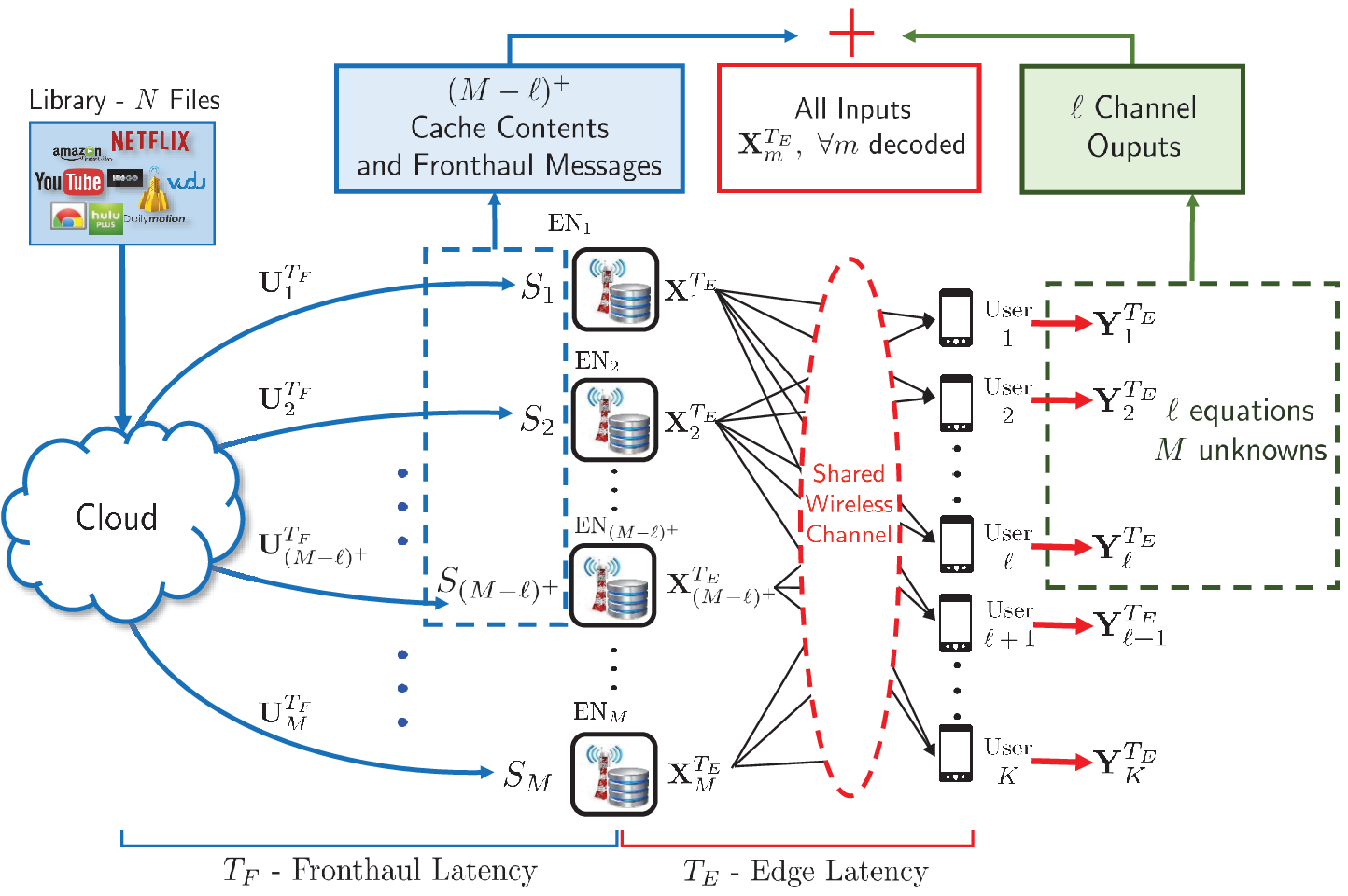}
\par\end{centering}
\protect\caption{Illustration of the proof of Proposition \ref{prop:glb}.}
\label{fig:proofspawc}\vspace{-15pt}
\end{figure}%

 We next present a sequence of corollaries that specialize the lower bound of Proposition \ref{prop:glb} to the settings of cache-only and cloud-only F-RANs (see Remark \ref{rem:nofront}).
\begin{Cor}[\textit{Lower Bound for Cache-Only F-RAN}]\label{cor:lb}
For an $M\times K$ cache-only F-RAN ($r=0$) with $\mu \in [1/M,1]$, $K$ users and a library of $N\geq K$ files, the NDT is lower bounded as
\begin{align}\label{eq:cor-ca}
\delta^*(\mu,0) \geq \max_{\ell \in [1:\min\{M,K\}]} \frac{K - (M - \ell)(K - \ell) \mu}{\ell}.
\end{align}
\end{Cor}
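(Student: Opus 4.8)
The plan is to derive the corollary directly as a specialization of the linear program in Proposition \ref{prop:glb} to the case $r=0$, so essentially no new argument is needed beyond simplifying the LP. First I would substitute $r=0$ into the family of constraints \eqref{eq:c1}. For $\ell=0$ the constraint reads $0 \geq K - MK\mu = K(1-M\mu)$, which holds automatically under the standing assumption $\mu \geq 1/M$ of a cache-only F-RAN (see Remark \ref{rem:nofront}); hence this constraint is vacuous. For each $\ell \in [1:\min\{M,K\}]$ the constraint becomes $\ell\delta_E \geq K - (M-\ell)(K-\ell)\mu$, i.e. $\delta_E \geq \big(K - (M-\ell)(K-\ell)\mu\big)/\ell$.

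Next I would observe that, with $r=0$, the fronthaul-NDT $\delta_F$ no longer appears in any constraint of the LP other than $\delta_F \geq 0$. Therefore the objective $\delta_F+\delta_E$ is minimized by taking $\delta_F=0$, and the problem reduces to minimizing $\delta_E$ subject to $\delta_E \geq 1$ and $\delta_E \geq \big(K-(M-\ell)(K-\ell)\mu\big)/\ell$ for all $\ell\in[1:\min\{M,K\}]$. The optimal value is thus
\begin{align}
\delta_{LB}(\mu,0) = \max\left\{1,\ \max_{\ell\in[1:\min\{M,K\}]}\frac{K-(M-\ell)(K-\ell)\mu}{\ell}\right\}.
\end{align}

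Finally I would argue that the explicit constraint $\delta_E\geq 1$ is redundant, so that the outer $\max$ with $1$ can be dropped. Taking $\ell=\min\{M,K\}$ in the inner maximum: if $M\leq K$ this gives $\big(K-(M-M)(K-M)\mu\big)/M = K/M \geq 1$; if $M> K$ it gives $\big(K-(M-K)\cdot 0\cdot\mu\big)/K = 1$. In either case the inner maximum is at least $1$, so $\delta_{LB}(\mu,0)=\max_{\ell\in[1:\min\{M,K\}]}\big(K-(M-\ell)(K-\ell)\mu\big)/\ell$, and combining with $\delta^*(\mu,0)\geq\delta_{LB}(\mu,0)$ from Proposition \ref{prop:glb} yields \eqref{eq:cor-ca}. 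There is no genuine obstacle here: the only points requiring a line of justification are the vacuity of the $\ell=0$ constraint (which uses $\mu\geq 1/M$) and the redundancy of $\delta_E\geq 1$ (which uses the $\ell=\min\{M,K\}$ term), both of which are immediate.
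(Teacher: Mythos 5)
Your proposal is correct and follows essentially the same route as the paper: substitute $r=0$ into constraint \eqref{eq:c1} of Proposition \ref{prop:glb}, use $\delta_F\geq 0$ so that $\delta^*\geq \delta_E$, and vary $\ell\in[1:\min\{M,K\}]$. Your additional observations (vacuity of the $\ell=0$ constraint under $\mu\geq 1/M$, and redundancy of $\delta_E\geq 1$ via the $\ell=\min\{M,K\}$ term) are accurate and merely make explicit what the paper leaves implicit.
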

\begin{proof}
The proof of Corollary \ref{cor:lb} follows directly by substituting $r=0$ in constraint \eqref{eq:c1} in Proposition \ref{prop:glb} and noting that any lower bound on the optimal value of the LP in Proposition \ref{prop:glb} is also a valid lower bound on the NDT. Varying the parameter $\ell\in [1:\min\{M,K\}]$ leads to the family of lower bounds in Corollary \ref{cor:lb}.
\end{proof}

\begin{Cor}[\textit{Lower Bound for Cloud-Only F-RAN}]\label{cor:lb-cl}
For an $M\times K$ cloud-only F-RAN ($\mu = 0$), with $K$ users and a library of $N\geq K$ files, the NDT is lower bounded as
\begin{align}\label{eq:cor-cl}
\delta^*(0,r) \geq \frac{K}{\min\{M,K\}} + \frac{K}{Mr}.
\end{align}
\end{Cor}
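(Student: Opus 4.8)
The plan is to derive the bound directly from the linear program of Proposition~\ref{prop:glb} by setting $\mu=0$ and then lower-bounding the optimal value of the resulting LP using only a few of its constraints. With $\mu=0$, constraint~\eqref{eq:c1} collapses to the family
\begin{align}
\ell\,\delta_E + (M-\ell)\,r\,\delta_F \ge K, \qquad \ell\in[0:\min\{M,K\}],
\end{align}
together with $\delta_F\ge 0$ and $\delta_E\ge 1$. Since any lower bound on the optimum of this LP is also a valid lower bound on $\delta^*(0,r)=\delta_F+\delta_E$, it suffices to lower-bound $\delta_F$ and $\delta_E$ separately and add the two bounds.

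For the fronthaul-NDT, I would use the constraint obtained by taking $\ell=0$, which reads $M r\,\delta_F\ge K$ and hence forces $\delta_F\ge K/(Mr)$. For the edge-NDT, I would distinguish two cases according to the sign of $M-K$: if $M\le K$, the constraint with $\ell=\min\{M,K\}=M$ becomes $M\,\delta_E\ge K$ (the $\delta_F$-term vanishes because its coefficient is $M-\ell=0$), so $\delta_E\ge K/M=K/\min\{M,K\}$; if $M>K$, the trivial constraint $\delta_E\ge 1$ from~\eqref{eq:c2} already gives $\delta_E\ge 1=K/\min\{M,K\}$. In either case $\delta_E\ge K/\min\{M,K\}$, and summing the two bounds yields
\begin{align}
\delta^*(0,r) \ \ge\ \delta_F+\delta_E \ \ge\ \frac{K}{\min\{M,K\}}+\frac{K}{Mr},
\end{align}
which is exactly~\eqref{eq:cor-cl}.

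There is essentially no real obstacle here --- the argument is a routine specialization of Proposition~\ref{prop:glb}, paralleling the derivation of Corollary~\ref{cor:lb}. The only points requiring a little care are (i) keeping track of which two constraints are active, so that the bounds on $\delta_F$ and $\delta_E$ are decoupled and can simply be added, and (ii) the case split on $M$ versus $K$ when extracting the $K/\min\{M,K\}$ term for the edge-NDT. (Implicitly $r>0$ here; for $r=0$ the bound is vacuously $+\infty$, which is consistent with the fact that a cloud-only F-RAN with neither fronthaul nor caching, i.e. $\mu=r=0$, cannot achieve any finite NDT.)
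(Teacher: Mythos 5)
Your proof is correct and follows essentially the same route as the paper: both extract $\delta_F\ge K/(Mr)$ from the $\ell=0$ constraint and $\delta_E\ge K/\min\{M,K\}$ from either the $\ell=\min\{M,K\}$ constraint or $\delta_E\ge 1$, and then add; your explicit case split on $M$ versus $K$ is just the paper's $\max(\cdot)$ of the two combined bounds written out.
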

\begin{proof}
Summing the constraints obtained from \eqref{eq:c1} by setting $\ell=M$ and $\ell=0$ yields the following lower bound on the optimal value of the LP:
\begin{align}\label{eq:cor-cl-1}
\delta^*(0,r) \geq \delta_E + \delta_F \geq \frac{K}{M} + \frac{K}{Mr}.
\end{align}
Instead, summing the constraint in \eqref{eq:c1} with $\ell=0$ and the constraint $\delta_E \geq 1$ in \eqref{eq:c2} yields the following lower bound:
\begin{align}\label{eq:cor-cl-2}
\delta^*(0,r) \geq \delta_E + \delta_F \geq 1 + \frac{K}{Mr}.
\end{align}
Combining the bounds in \eqref{eq:cor-cl-1} and \eqref{eq:cor-cl-2} yields
\begin{align}
\delta^*(0,r) \geq \max\left(\frac{K}{M} + \frac{K}{Mr}, 1 + \frac{K}{Mr}\right) = \frac{K}{\min\{M,K\}} + \frac{K}{Mr},
\end{align}
which concludes the proof.
\end{proof}

\vspace{-5pt}\section{Upper Bounds on the Minimum NDT}\label{sec:ub}
In this section, we present upper bounds on the minimum NDT by considering the performance of specific policies. We proceed by first investigating cache-aided and cloud-aided transmission strategies separately, which are then combined to obtain cloud and cache-aided policies by means of file-splitting and cache-sharing (see Lemma \ref{lem:conv}).

\vspace{-5pt}\subsection{Cache-Aided Policies}\label{ssec:cache}
We consider first cache-aided policies that do not use cloud resources and hence operate even when there is no fronthaul infrastructure, i.e., when $r=0$. We specifically focus on the two extremal scenarios in which $\mu=1$, so that all ENs can cache the entire library of files, and $\mu=1/M$, so that the library can be fully cached as long as different portions of it are stored at distinct ENs.



The following lemma provides an upper bound on the minimum NDT for the case $\mu=1$ by leveraging cache-aided EN cooperation via zero-forcing (ZF) beamforming.

\begin{lemma}[\textit{Achievable NDT with Cache-Aided EN Cooperation}]\label{lem:ach1}
For an F-RAN with fractional cache size $\mu = 1$ and any fronthaul rate $r \geq 0$, the NDT is upper bounded as $\delta^*(\mu=1,r)\leq \delta_{\mathsf{Ca-ZF}}$, where
\begin{align}\label{eq:NDT1}
&\delta_{\mathsf{Ca-ZF}} = \frac{K}{\min\{M,K\}}
\end{align}
is achieved by means of ZF-beamforming based on the cached files.
\end{lemma}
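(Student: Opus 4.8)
The plan is to exhibit an explicit cache-aided policy with $\mu=1$ that achieves $\delta_E = K/\min\{M,K\}$ and $\delta_F=0$, which then yields the claimed NDT since $\delta = \delta_F + \delta_E$. Since $\mu = 1$, every $\textrm{EN}_m$ stores all $N$ files in its entirety, so after the demand vector $\mathbf{D}$ is revealed each EN has full knowledge of the $K$ requested files $F_{d_1},\ldots,F_{d_K}$ without needing \emph{any} fronthaul transmission; hence we take $T_F = 0$, giving $\delta_F = 0$. The problem therefore reduces to: how fast can $M$ fully-cooperating transmitters, with full CSI $\mathbf{H}$ and a shared message set, deliver $K$ independent files to $K$ single-antenna users over the interference channel \eqref{eq:chop}? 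This is exactly an $M\times K$ MISO broadcast channel with full transmitter cooperation.

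First I would treat the case $M \geq K$. Here the ENs act as a $K$-user MISO broadcast channel with $M \geq K$ transmit antennas, so zero-forcing (ZF) beamforming is feasible: the cloud-free ENs jointly precode the $K$ file-streams with beamforming vectors chosen from the null spaces of the other users' channel vectors, so that user $k$ sees only a (generic, hence almost-surely nonzero) scaled copy of $F_{d_k}$ plus noise. Since the channel coefficients are drawn i.i.d.\ from a continuous distribution, the relevant $(K-1)\times M$ submatrices have full rank with probability $1$, so the ZF vectors exist for almost all $\mathbf{H}$. Each user then decodes its file at the full pre-log of $\log P$ — i.e., each of the $K$ files of $L$ bits is delivered in roughly $L/\log P$ channel uses — so $T_E \approx L/\log P$ and $\delta_E = 1 = K/\min\{M,K\}$, matching \eqref{eq:NDT1}.

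For the complementary case $M < K$, one cannot serve all $K$ users simultaneously with only $M$ transmit antennas, so I would use a straightforward time-division scheme: partition the $K$ users into $\lceil K/M \rceil$ groups of size at most $M$ and serve the groups in sequence, applying ZF-beamforming (as above, now in a full $M$-antenna MISO broadcast channel to $\leq M$ users) within each slot. Actually, to get the tight constant $K/M$ rather than $\lceil K/M\rceil$, the cleaner accounting is: in any slot the $M$ ENs deliver $M$ file-bits per channel use (one per antenna/stream at pre-log $\log P$ each), so delivering a total of $KL$ bits to the $K$ users needs $T_E \approx KL/(M\log P)$ channel uses, giving $\delta_E = K/M = K/\min\{M,K\}$. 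Combining the two cases, $\delta_E = K/\min\{M,K\}$ with $\delta_F=0$, which establishes the lemma.

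The main subtlety — not a deep obstacle but the point requiring care — is the rigorous high-SNR rate/DoF argument: one must verify that the ZF scheme indeed achieves the interference-free per-stream pre-log $\log P$ under the average power constraint $P$, so that the per-bit delivery time tends to $1/\log P$ (matching the reference system in Remark \ref{rem:opsig}), and that this holds for almost every channel realization (the measure-zero exceptional set where the relevant channel submatrices drop rank is excluded by the continuity of the coefficient distribution, consistent with the "for almost all $\mathbf{H}$" clause in Definition \ref{def:pol}). This is standard MISO-broadcast DoF analysis, so I would state it and refer to the known achievability of $\min\{M,K\}$ sum-DoF for the $M$-antenna, $K$-user MISO broadcast channel, rather than reproving it.
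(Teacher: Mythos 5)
Your proposal is correct and follows essentially the same route as the paper: with $\mu=1$ the fronthaul is unused ($\delta_F=0$), and ZF-beamforming at the fully-cooperating ENs achieves the known sum-DoF $\min\{M,K\}$ of the $M$-antenna, $K$-user MISO broadcast channel, giving $T_E \approx KL/(\min\{M,K\}\log P)$ and hence $\delta_{\mathsf{Ca-ZF}} = K/\min\{M,K\}$. The paper additionally notes explicitly that the worst-case demand is a vector of distinct files (repeated requests can be treated as distinct), but your explicit case split and time-division accounting for $M<K$ is just an unpacking of the same sum-DoF fact the paper cites.
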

\begin{proof}
The ENs employ ZF-beamforming to serve the users' requests. Note that the worst-case demand can be easily seen to be any vector $\mathbf{D}$ of distinct files. In fact, any other vector that contains the same file for multiple users can always be delivered with the same latency by treating the files as being different.  Using ZF, a sum-rate of $\min\{M,K\}\log(P)$, neglecting $o(\log(P))$ terms, can be achieved \cite{BC_Capacity}. Thus, the delivery time per bit \eqref{eq:delta} achieved by this scheme is approximately, that is, neglecting $o(\log(P))$ terms, given by \vspace{-5pt}
\begin{align}
&\Delta\left(\mu = 1,0,P\right) =  \frac{K/\log(P)}{{\min\{M,K\}}},
\end{align}
which, by definition of the NDT (Definition \ref{def:ndt}), yields an achievable NDT $\delta_{\mathsf{Ca-ZF}} = K/\min\{M,K\}$, hence concluding the proof.
\end{proof}

The next lemma provides an upper bound on the minimum NDT that is obtained by means of cache-aided coordination strategies based on interference alignment for the case $\mu=1/M$.
\begin{lemma}[\textit{Achievable NDT with Cache-Aided EN Coordination}]\label{lem:ach1M}
For an F-RAN with fractional cache size $\mu = 1/M$ and any fronthaul rate $r \geq 0$, the NDT is upper bounded as $\delta^*(\mu,0)\leq \delta_{\mathsf{Ca-IA}}$, where
\begin{align}\label{eq:NDT1bM}
&\delta_{\mathsf{Ca-IA}} = \frac{M+K-1}{M}
\end{align}
is achievable by means of interference alignment.
\end{lemma}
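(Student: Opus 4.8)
The plan is to reduce the cache-only F-RAN with $\mu = 1/M$ to a one-shot $X$-channel-type interference problem and invoke interference alignment. First I would fix the worst-case demand vector $\mathbf{D}$ of $K$ distinct files, exactly as in the proof of Lemma~\ref{lem:ach1}; any demand with repeated files can only be easier. With $\mu = 1/M$ the natural caching policy is to split each file $F_n$ into $M$ equal-size, disjoint subfiles $F_n = (F_{n,1},\ldots,F_{n,M})$ of $L/M$ bits each, and to store subfile $F_{n,m}$ at $\textrm{EN}_m$ for every $n$; this meets the cache constraint with equality since each EN stores $N\cdot L/M = \mu N L$ bits. The key observation is that the resulting delivery problem is an interference channel in which $\textrm{EN}_m$ holds the message $W_{k,m} \triangleq F_{d_k,m}$ destined for user $k$, for every pair $(m,k)$ — i.e., an $M\times K$ $X$-channel with independent messages on each transmitter-receiver pair.

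The core step is to argue that this $M \times K$ $X$-channel achieves a sum degrees-of-freedom of $\frac{MK}{M+K-1}$. For the $M\times K$ Gaussian $X$-channel with time-varying (generic) channel coefficients this is the classical interference-alignment result of Cadambe--Jafar and Maddah-Ali--Motahari--Khandani; here the channel is constant over the transmission interval, but the coefficients are drawn i.i.d.\ from a continuous distribution, so one can either invoke the asymptotic alignment scheme over many generic channel realizations or, more in the spirit of the DoF literature used elsewhere in the paper, appeal to real/rational-dimension interference alignment, which delivers $\frac{MK}{M+K-1}$ DoF for almost all constant coefficient vectors. Each of the $K$ users thus decodes its intended combination of the $M$ subfiles $\{F_{d_k,1},\ldots,F_{d_k,M}\}$, i.e., its full file $F_{d_k}$, at a per-user rate $\frac{M}{M+K-1}\log(P)$ (neglecting $o(\log P)$ terms). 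Since $r=0$ there is no fronthaul phase, so $T_F = 0$ and $\delta_F = 0$, and the edge latency to deliver $L$ bits to each user is $T_E = \frac{L}{(M/(M+K-1))\log P}$, whence by Definitions~\ref{def:delta} and~\ref{def:ndt} the achievable NDT is $\delta_{\mathsf{Ca-IA}} = \frac{M+K-1}{M}$. Because adding fronthaul capacity can never hurt, the bound also holds for any $r\geq 0$, but the statement as written only claims $\delta^*(\mu,0)\leq \delta_{\mathsf{Ca-IA}}$.

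The main obstacle is purely technical rather than conceptual: justifying the $\frac{MK}{M+K-1}$ sum-DoF for a \emph{constant} (rather than time-varying) $M\times K$ $X$-channel, which is exactly the regime forced by the system model's assumption that $\mathbf{H}$ is time-invariant within a transmission interval. I expect this to be handled by citing the real-interference-alignment machinery (Motahari et al.), which gives the claimed DoF for almost every coefficient tuple — consistent with the paper's ``for almost all channel realizations'' feasibility notion — and by noting that the vanishing-error requirement is met as $L\to\infty$. A secondary, minor point to verify is that splitting each file into exactly $M$ subfiles matches the number of ENs so that the induced message structure is precisely the symmetric $X$-channel; this is immediate here and requires no extra argument.
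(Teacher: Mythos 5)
Your proposal is correct and follows essentially the same route as the paper: split each file into $M$ disjoint fragments cached one per EN, observe that the worst-case delivery problem becomes an $M\times K$ $X$-channel, and invoke the $\frac{MK}{M+K-1}$ sum-DoF of interference alignment to obtain $\delta_{\mathsf{Ca-IA}}=\frac{M+K-1}{M}$. Your additional care about justifying the DoF result for constant (rather than time-varying) channel coefficients via real interference alignment is a technical point the paper's proof passes over by simply citing the $X$-channel literature, and is consistent with the paper's ``for almost all channel realizations'' feasibility notion.
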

\begin{proof}
Following the ideas discussed in \cite{MA-CAIC,MA-CAIC-arxiv}, each file is split into $M$ non-overlapping fragments $F_n = \left(F_{n,1},F_{n,2},\ldots,F_{n,M}\right)$, each of size $L/M$ bits. The fragment $F_{n,m}$ is stored in the cache of $\textrm{EN}_m$ for $n \in [1:N]$. Thus, the cache storage for each EN is $NL/M$ bits and $\mu = 1/M$. For any file $d_k$ is requested by a user $k$, each of the ENs has a fragment $F_{d_k,m}$ to transmit to the user. For the worst-case demand vector in which all users request different files (see proof of Lemma \ref{lem:ach1}), the $M\times K$ system becomes an X-channel, for which a reliable sum-rate of $(MK/(M+K-1))\log(P)$, neglecting $o(\log(P))$ terms, is achievable by interference alignment \cite{CJ_Xch,MA_Xch}. Thus, the achievable delivery time per bit in Definition \ref{def:delta} is approximately given by
\begin{align}
&\Delta\left(\mu = \frac{1}{M},0,P\right)  =  \frac{M+K-1}{M\log(P)}, \vspace{8pt}
\end{align}
yielding an NDT equal to $\delta_{\mathsf{Ca-IA}} = (M+K-1)/M$. 
\end{proof}

\subsection{Cloud-Aided Policies}\label{ssec:cloud}
We now move to considering cloud-aided policies that neglect the caches at the ENs and hence operate even in the case in which the ENs have no storage capabilities, that is, when $\mu=0$. We first discuss a more conventional hard-transfer fronthauling approach, whereby the fronthaul is used to send the requested files in raw form to the ENs. Then, we elaborate on the the soft-transfer scheme that is typical of C-RAN, in which quantized coded signals are transferred on the fronthaul links. 

\begin{example}[\textit{Cloud-Aided Hard-Transfer Fronthauling}]\label{exmp:htf}
Consider an F-RAN with $M=3$ ENs and $K=3$ users with a library of $K=3$ files $\{A,B,C\}$, each of size $L$ bits. We are interested in developing delivery strategies that only rely on cloud processing and fronthaul transfer, while neglecting the use of caches. We focus again on the worst-case in which each user requests a different file, i.e., $\mathbf{D}=(d_1,d_2,d_3) = (A,B,C)$. With hard-transfer fronthaul, the cloud sends files, or subfiles, over the fronthaul links to each EN, which then encodes the signal to be transmitted on the shared wireless channel. A first approach would be to send all three files, and hence $3L$ bits, to each EN, so as to enable the ENs to perform cooperative ZF-beamforming on the wireless channel. Using the fact that the fronthaul capacity is $C_F = r\log(P)$ bits per symbol, the fronthaul delivery time is $T_F = 3L/(r\log(P))$, yielding a fronthaul-NDT equal to $\delta_F= 3/r$. Since, with ZF, the edge-NDT is $\delta_E = 1$, the overall NDT achieved by this strategy is $\delta = 1 + 3/r$. Alternatively, the cloud can divide each file into three fragments and send the fragments $(A_i,B_i,C_i)$ to $\text{EN}_i$ over the corresponding fronthaul link. In this case, the fronthaul delivery time is $T_F = L/(r\log(P))$ yielding a fronthaul-NDT of $\delta_F = 1/r$. The ENs then transmit on wireless channel using interference alignment for an X-channel, achieving an edge-NDT of $\delta_E = 5/3$. Thus the achievable NDT with this approach is $\delta = 5/3 + 1/r$. Based on the available fronthaul rate $r$, the cloud can choose the policy which yields the minimum NDT. In this case, when $r\leq 3$ the interference alignment-based scheme should be utilized, while the ZF-based strategy is to be preferred otherwise.
\end{example}

Generalizing the previous example, the following proposition gives an upper bound on the minimum NDT, that can be achieved by the use of hard-transfer fronthauling.
\begin{prop}[\textit{Achievable NDT with Cloud-Aided Hard-Transfer Fronthauling}]\label{prop:htf}
For an $M\times K$ F-RAN with each EN having a fractional cache size $\mu \in [0,1]$ and a fronthaul rate of $r\geq 0$, the NDT is upper bounded as $\delta^*(\mu,r)\leq \delta_{\mathsf{Cl-Hf}}$, where
\begin{align}\label{eq:htf}
\delta_{\mathsf{Cl-Hf}} = \min\left\{\frac{K}{\min\{M,K\}} + \frac{K}{r},~ \frac{M+K-1}{M} + \frac{K}{Mr}\right\},
\end{align}
which is achieved by means of hard-transfer fronthauling.
\end{prop}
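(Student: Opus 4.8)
The plan is to generalize the two constructions sketched in Example~\ref{exmp:htf} (the $M=K=3$ case) to arbitrary $M$ and $K$, and then to keep, for each value of $r$, whichever of the two yields the smaller NDT. As a preliminary reduction, it suffices to consider the worst-case demand vector $\mathbf{D}$ consisting of distinct files, since any demand in which a file is requested by several users can be served with no larger latency by treating the repeated requests as though they were for distinct files, exactly as in the proof of Lemma~\ref{lem:ach1}. Moreover, since the caches at the ENs may simply be left unused, every cloud-only policy is a feasible policy for all $\mu\in[0,1]$; hence it is enough to exhibit the two claimed achievable NDTs, and the bound $\delta^*(\mu,r)\le\delta_{\mathsf{Cl-Hf}}$ then follows from the definition of $\delta^*$ as an infimum, with no dependence on $\mu$.

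\emph{First strategy (cooperative ZF-beamforming).} The cloud forwards each of the $K$ requested files, in uncoded form, to every $\text{EN}_m$ over its fronthaul link, so that each link must carry $KL$ bits. This gives $T_F = KL/(r\log P)$ and thus a fronthaul-NDT $\delta_F = K/r$. Once all ENs share all $K$ messages, the edge segment is an $M$-antenna MISO broadcast channel serving $K$ users, on which ZF-beamforming achieves a sum-rate of $\min\{M,K\}\log(P)$ up to $o(\log P)$ terms \cite{BC_Capacity}; delivering the $KL$ requested bits therefore takes $T_E = KL/(\min\{M,K\}\log P)$, i.e., $\delta_E = K/\min\{M,K\}$. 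Adding the two contributions yields the first term $K/\min\{M,K\} + K/r$ of \eqref{eq:htf}.

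\emph{Second strategy (X-channel interference alignment).} The cloud splits each requested file $F_{d_k}$ into $M$ equal fragments $F_{d_k}=(F_{d_k,1},\dots,F_{d_k,M})$ of $L/M$ bits each, and sends the fragments $\{F_{d_k,m}\}_{k=1}^{K}$ to $\text{EN}_m$. Each fronthaul link then carries $KL/M$ bits, so $T_F = KL/(Mr\log P)$ and $\delta_F = K/(Mr)$. On the wireless channel, $\text{EN}_m$ holds one independent fragment destined for each of the $K$ users, i.e., the edge segment is an $M\times K$ X-channel, for which interference alignment achieves a sum-DoF of $MK/(M+K-1)$ \cite{CJ_Xch,MA_Xch}. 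Delivering the total of $KL$ requested bits then requires $T_E = (M+K-1)L/(M\log P)$, giving $\delta_E = (M+K-1)/M$, and hence the second expression $(M+K-1)/M + K/(Mr)$ of \eqref{eq:htf}.

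Taking the minimum over the two policies establishes the claim. The only points needing care are (i) the reduction to distinct demands together with the observation that leaving the caches idle makes the bound $\mu$-independent, and (ii) invoking the two DoF results as black boxes with the correct bit-counting — in particular, that the fronthaul cost scales with the number of message bits each EN must actually receive, namely $KL$ for full cooperative ZF versus $KL/M$ for the fragmented X-channel scheme. For $r=0$ the first term is $+\infty$ and the bound is vacuous, consistent with the fact that a cloud-only scheme cannot operate without a fronthaul.
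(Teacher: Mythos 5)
Your proposal is correct and follows essentially the same route as the paper: the same two hard-transfer strategies (full replication of the $K$ requested files followed by ZF cooperation, versus $M$-way fragmentation followed by X-channel interference alignment), the same fronthaul bit-counting yielding $\delta_F = K/r$ and $\delta_F = K/(Mr)$ respectively, and the same final minimization. The additional remarks on $\mu$-independence and the vacuousness at $r=0$ are consistent with the paper's treatment.
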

\begin{proof}
Following the discussion in Example \ref{exmp:htf}, we consider the selection between two different strategies to prove Proposition \ref{prop:htf}.

\subsubsection{Cloud-Aided EN Cooperation via ZF Beamforming} In the first strategy, the cloud transmits all the requested files to each EN over the fronthaul links. Thus, for any request vector $\mathbf{D}$, the cloud needs to transmit $KL$ bits to each EN. Since the fronthaul links have capacity $C_F = r\log(P)$ each, the fronthaul delivery time is $T_F = KL/(r\log(P))$, yielding a fronthaul-NDT of $\delta_F = K/r$. Furthermore, ZF-based EN cooperation achieves an edge-NDT of $\delta_E = K/\min\{M,K\}$ as shown in Lemma \ref{lem:ach1}. Thus, the achievable NDT with this strategy is
\begin{align}\label{eq:htf1}
\delta_F + \delta_E = \frac{K}{\min\{M,K\}} + \frac{K}{r}.
\end{align}

\subsubsection{Cloud-Aided EN Coordination via Interference Alignment} With this second strategy, for the $K$ requested files $F_{d_1},F_{d_2},\ldots,F_{d_K}$, the cloud splits each file into $M$ non-overlapping fragments $F_{d_k} = \left(F_{d_k,1},F_{d_k,2},\ldots,F_{d_k,M}\right)$, for $k\in[1:K]$, where each fragment is of size $L/M$ bits. The fragments $F_{[d_1,\ldots,d_K],m}$ are transmitted to $\textrm{EN}_m$ for $m \in [1:M]$. Thus, for a fronthaul capacity of $C_F = r\log(P)$ bits per symbol, the fronthaul delivery time is $T_F = KL/(Mr\log(P))$, yielding a fronthaul-NDT of $\delta_F = K/(Mr)$. As seen in Lemma \ref{lem:ach1M}, using an X-channel interference alignment scheme achieves an edge-NDT of $\delta_E = (M+K-1)/M$. Thus, the NDT
\begin{align}\label{eq:htf2}
\delta_F + \delta_E = \frac{M+K-1}{M} + \frac{K}{Mr},
\end{align}
is achievable via interference alignment. For a given fronthaul rate of $r$, the cloud then chooses the transmission strategy which achieves the minimum NDT between \eqref{eq:htf1} and \eqref{eq:htf2}, which yields \eqref{eq:htf}, hence concluding the proof.
\end{proof}

We now move to the consideration of the soft-transfer fronthauling approach.

\begin{example}[\textit{Cloud-Aided Soft-Transfer Fronthauling}]\label{ex:stf}
Consider an F-RAN with $M$ ENs and $K=M$ users. With soft-transfer fronthauling, as first proposed in \cite{stf}, the cloud implements ZF-beamforming and quantizes the resulting encoded signals. Using a resolution of $\log(P)$ bits per downlink baseband sample, it can be shown that the effective SNR in the downlink scales proportionally to the power $P$ (see Appendix \ref{ap:soft} and \cite{stf}). As a result, this scheme entails a fronthaul transmission time $T_F$ that equals the edge transmission time $T_E$ of the ZF-beamforming scheme, namely $T_E = L/(\log(P))$, multiplied by the time needed to carry each baseband sample on the fronthaul link, namely $\log(P)/(r\log(P))$, yielding the NDT $\delta_{\mathsf{Cl-Sf}} = 1 + 1/r$. Comparing with the NDT obtained in Example \ref{exmp:htf} by means of hard-transfer fronthauling, we see that soft-transfer fronthaul yields a lower NDT.
\end{example}

 The following proposition generalizes the previous example to give an upper bound on the minimum NDT, which is achieved by a cloud-aided policies using soft-transfer fronthauling.

\begin{figure}[!t]
\centering
\includegraphics[width=\textwidth, height=2.15in]{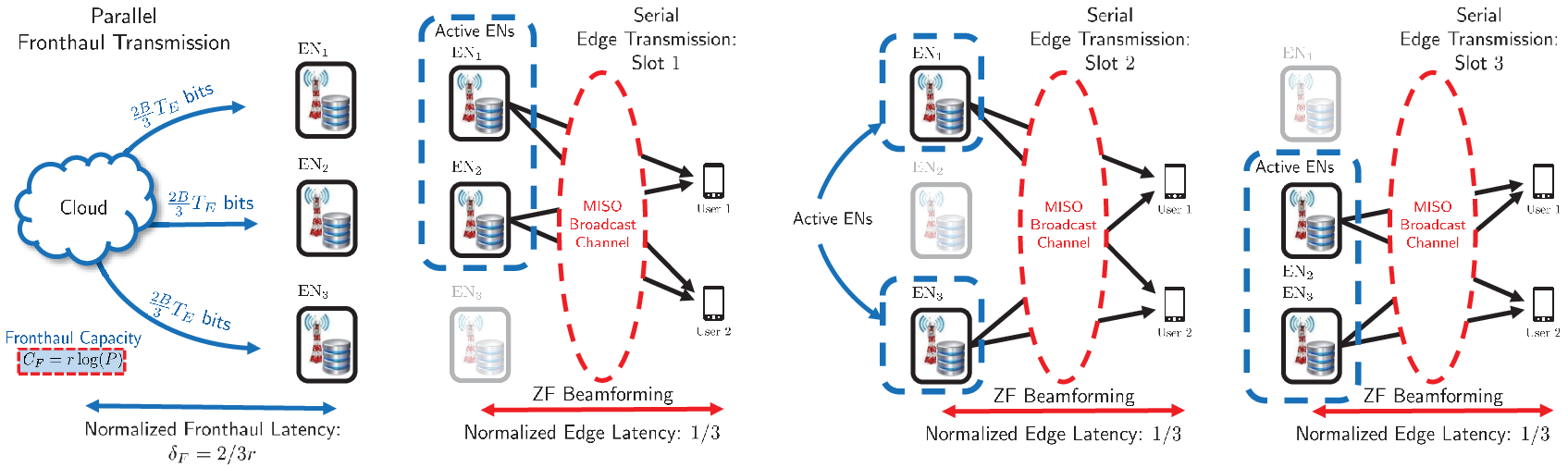}
\vspace{-12pt}
\caption{Illustration of the proposed cloud-aided soft-transfer fronthauling acheme with $M=3$ ENs and $K=2$ users.}\vspace{-10pt}
\label{fig:stf}
\end{figure}

\begin{prop}[\textit{Achievable NDT with Cloud-Aided Soft-Transfer Fronthauling}]\label{prop:stf}
For an $M\times K$ F-RAN with each EN having a fractional cache size $\mu \in [0,1]$ and a fronthaul rate $r\geq 0$, the NDT is upper bounded as $\delta^*(\mu,r)\leq \delta_{\mathsf{Cl-Sf}}$, where
\begin{align}\label{eq:stf}
\delta_{\mathsf{Cl-Sf}} = \frac{K}{\min\{M,K\}} + \frac{K}{Mr},
\end{align}
which can be achieved by means of soft-transfer fronthauling.
\end{prop}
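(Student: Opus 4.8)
The plan is to build the policy out of the zero-forcing (ZF) edge scheme of Lemma~\ref{lem:ach1} together with a \emph{soft-transfer} fronthaul that ships quantized versions of the precoded baseband samples, using a schedule that balances the bit load evenly across the $M$ fronthaul links. As in the proof of Lemma~\ref{lem:ach1}, it is enough to handle the worst-case demand $\mathbf{D}$ of $K$ distinct files, so that $KL$ bits must be delivered in total.

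For the edge segment, the cloud --- which knows $\mathbf{H}$ and the whole library --- computes the ZF-beamformed codewords $\mathbf{X}_1^{T_E},\dots,\mathbf{X}_M^{T_E}$ of the scheme achieving sum-rate $\min\{M,K\}\log P$ up to $o(\log P)$ terms, so that $T_E = KL/(\min\{M,K\}\log P)$ and the edge-NDT is $\delta_E = K/\min\{M,K\}$. When $M\le K$ all $M$ ENs transmit throughout the block. When $M>K$, only $K$ antennas are needed per channel use for the full $K$ degrees of freedom, so I would split the block into $M$ equal phases and, in phase $j$, serve the $K$ users by ZF using the cyclically shifted EN subset $\{j,j+1,\dots,j+K-1\}$ (indices modulo $M$). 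Since the channel coefficients are drawn from a continuous distribution, every $K\times K$ submatrix of $\mathbf{H}$ is almost surely invertible, so each phase still attains $K$ DoF and each user collects its full file across the $M$ phases; this scheduling costs nothing on the edge side, while making every EN active in exactly $K$ of the $M$ phases.

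For the fronthaul, the cloud quantizes each baseband sample $X_m[t]$ with $\log P$ bits using dithered quantization; as detailed in Appendix~\ref{ap:soft} (see also \cite{stf}), the resulting quantization noise has power vanishing in $P$, hence does not change the achievable edge sum-rate, and the ENs simply forward the dequantized samples. Counting bits: if $M\le K$, each EN is active for all $T_E$ channel uses and needs $T_E\log P=KL/M$ fronthaul bits; if $M>K$, each EN is active for $(K/M)T_E$ channel uses with $T_E = L/\log P$, again needing $(K/M)L = KL/M$ bits. Transmitting $KL/M$ bits on each of the $M$ parallel fronthaul links of rate $r\log P$ takes $T_F = (KL/M)/(r\log P)$, i.e., $\delta_F = K/(Mr)$. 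Summing the two contributions yields $\delta = \delta_F+\delta_E = K/(Mr) + K/\min\{M,K\} = \delta_{\mathsf{Cl-Sf}}$.

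The step I expect to be the crux --- and the one that yields the improvement of \eqref{eq:stf} over the hard-transfer bound \eqref{eq:htf} when $M>K$ --- is driving the per-link fronthaul load down to $KL/M$ bits: naively keeping all $M$ ENs active for ZF would cost $L$ bits per link, i.e., $\delta_F=1/r$. The cyclic-scheduling/load-balancing trick fixes this, and the only thing one must verify is that it incurs no edge penalty, which follows from the generic invertibility of all $K\times K$ submatrices of $\mathbf{H}$. The remaining point, that $\log P$-bit quantization preserves the pre-log of the edge rate, is standard and is deferred to Appendix~\ref{ap:soft}.
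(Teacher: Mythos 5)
Your proof is correct and follows essentially the same route as the paper: ZF precoding computed at the cloud, $\log P$-bit quantization of the baseband samples on the fronthaul, and, for $M>K$, a time-sharing schedule in which each EN is active only a $K/M$ fraction of $T_E$ so that the per-link fronthaul load drops to $KL/M$ bits. Your cyclic schedule of $M$ shifted size-$K$ EN subsets is a clean, uniform substitute for the paper's construction (disjoint clusters when $M/K$ is an integer, all ${M \choose K}$ subsets otherwise); both give every EN exactly a $K/M$ duty cycle, and the a.s.\ invertibility of the $K\times K$ submatrices of $\mathbf{H}$ is all that is needed. One small correction: with $B=\log P$ bits per sample the quantization noise power is $\sigma^2=\bar{P}/(2^B-1)\approx 1$, i.e., bounded rather than vanishing in $P$; the argument goes through because the effective SNR $\bar{P}/(1+\sigma^2 G)$ still scales linearly in $P$, which is what preserves the pre-log.
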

\begin{proof}
A formal proof of Proposition \ref{prop:stf} is presented in Appendix \ref{ap:soft}. A proof sketch outlining the main ideas is provided below.
\end{proof}
For the case $M \leq K$, the main arguments follow in a manner similar to Example \ref{ex:stf}. The case $M\geq K$ instead requires a novel delivery approach that is based on the parallel transmission on the fronthaul links of quantized encoded signals that are sent using time-sharing on the wireless channel. We explain the scheme at hand with an example for an F-RAN with $M=3$ ENs serving $K=2$ users, which is illustrated in Fig.  \ref{fig:stf}. We first list all possible ${3 \choose 2} = 3$ clusters of $2$ ENs. Each of the $3$ clusters of ENs is scheduled to transmit sequentially for $1/3$ of the total edge delivery time $T_E$ on the wireless channel. The signals to be transmitted by each cluster on the wireless channels are sent in parallel on the fronthaul links by the cloud by means of a soft-transfer fronthauling strategy. Specifically, each EN participates in two clusters and hence it needs to receive only $2T_E/3$ quantized samples from the cloud on the fronthaul link. Thus, using a resolution of $B=\log(P)$ bits per sample as in Example \ref{ex:stf}, a fronthaul latency of $T_F = 2BT_E/(3C_F) = 2T_E/3r$ is achieved. This yields a fronthaul-NDT of $\delta_F = 2\delta_E/(3r)$ for a total achievable NDT of $\delta_{\mathsf{Cl-Sf}} = 1 + 2/(3r)$.

\begin{remark}[\textit{Hard vs. Soft-Transfer Fronthaul}]\label{rem:hvs}
Comparing the NDT of soft-transfer fronthauling in Proposition \ref{prop:stf} with the achievable NDT for hard-transfer fronthauling in Proposition \ref{prop:htf}, we see that the achievable NDT in Proposition \ref{prop:stf} is strictly lower, demonstrating that soft-transfer fronthauling is to be preferred when the goal is to minimize the NDT.
\end{remark}

\subsection{Cache and Cloud-Aided Policies}\label{ssec:ach}
Here, we propose a general upper bound on the minimum NDT for an F-RAN with $M$ ENs, $K$ users and $N\geq K$ files, which is attained by combining the cache-aided strategy discussed in Section \ref{ssec:cache} and the cloud-aided soft-transfer fronthaul policy of Section \ref{ssec:cloud} by means of file-splitting and cache-sharing (see Lemma \ref{lem:conv}). Note that the choice of soft-transfer fronthauling over hard-transfer fronthauling is motivated by Remark \ref{rem:hvs}.

\begin{prop}[\textit{Achievable NDT via Cloud and Cache-Aided Policies}]\label{prop:ub}
For an $M\times K$ F-RAN with a fronthaul rate of $r\geq 0$,
the minimum NDT is upper bounded as $\delta^*(\mu,r)\leq \delta_{\mathsf{Ach}}(\mu,r)$, where we have defined
\begin{align}\label{eq:r_th}
r_{\mathsf{th}} = \frac{K(M-1)}{M\left(\min\{M,K\}-1\right)},
\end{align}
and the achievable NDT is given for three distinct regimes of operation as follows:
\begin{itemize}
\item \textit{Low Cache and Low Fronthaul Regime }$\mathit{\left(\mu\leq 1/M ~\text{and}~  r\leq r_{\mathsf{th}}\right)}$:
\begin{align}\label{eq:propub1}
&\delta_{\mathsf{Ach}}(\mu,r) = (M+K-1)\mu + \left(1 - \mu M\right)\left[\frac{K}{\min\{M,K\}} + \frac{K}{Mr} \right];
\end{align}
\item \textit{High Cache and Low Fronthaul Regime }$\mathit{(\mu\geq 1/M ~\text{and}~ r\leq r_{\mathsf{th}})}$:
\begin{align}\label{eq:propub2}
& \delta_{\mathsf{Ach}}(\mu,r) = \frac{K}{\min\{M,K\}}\left(\frac{\mu M - 1}{M-1}\right) + (1-\mu)\frac{M+K-1}{M-1};
\end{align}
\item \textit{High Fronthaul Regime }$\mathit{(\mu\in[0,1]~\text{and}~ r\geq r_{\mathsf{th}})}$:
\begin{align}\label{eq:propub3}
\delta_{\mathsf{Ach}}(\mu,r) = \frac{K}{\min\{M,K\}}+\frac{(1-\mu)K}{Mr}.
\end{align}
\end{itemize}
\end{prop}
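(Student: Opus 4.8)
The plan is to construct the policy by \emph{file-splitting and cache-sharing}, invoking Lemma \ref{lem:conv}: since any convex combination in $\mu$ of achievable pairs $(\mu,\delta)$ is itself achievable at the same fronthaul rate $r$, it suffices to interpolate between the three ``corner-point'' schemes already established. These are cloud-aided soft-transfer fronthauling at $\mu=0$, which achieves $\delta_{\mathsf{Cl-Sf}}=K/\min\{M,K\}+K/(Mr)$ (Proposition \ref{prop:stf}; soft-transfer is used rather than hard-transfer by Remark \ref{rem:hvs}); cache-aided interference alignment at $\mu=1/M$, which achieves $\delta_{\mathsf{Ca-IA}}=(M+K-1)/M$ for any $r\ge 0$ (Lemma \ref{lem:ach1M}); and cache-aided ZF-beamforming at $\mu=1$, which achieves $\delta_{\mathsf{Ca-ZF}}=K/\min\{M,K\}$ for any $r\ge 0$ (Lemma \ref{lem:ach1}). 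Throughout, the worst-case demand of $K$ distinct files is used, which is where the hypothesis $N\ge K$ enters.

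For each $(\mu,r)$ one then picks the best interpolation. In the \textbf{low-cache, low-fronthaul} case ($\mu\le 1/M$) I would split each file so that a fraction $\mu M\in[0,1]$ is delivered by the IA scheme, which consumes cache $(\mu M)(1/M)=\mu$, and the remaining fraction $1-\mu M$ by soft-transfer, which uses no cache; Lemma \ref{lem:conv} then yields the NDT $\mu M\cdot\frac{M+K-1}{M}+(1-\mu M)\big(\frac{K}{\min\{M,K\}}+\frac{K}{Mr}\big)$, which simplifies to \eqref{eq:propub1}. In the \textbf{high-cache, low-fronthaul} case ($1/M\le\mu\le 1$) I would interpolate between the IA point and the ZF point with weights $\tfrac{\mu M-1}{M-1}$ and $\tfrac{M(1-\mu)}{M-1}$ (which indeed total a cache of $\mu$), giving \eqref{eq:propub2}. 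In the \textbf{high-fronthaul} case I would interpolate directly between soft-transfer and ZF with weights $1-\mu$ and $\mu$, giving \eqref{eq:propub3}. Each is a one-line substitution once the mixing weights are written down, and since the lower convex envelope of the three corner points is piecewise linear between adjacent points, no three-way split can do better.

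The one step that needs real care is to pin down the threshold $r_{\mathsf{th}}=\frac{K(M-1)}{M(\min\{M,K\}-1)}$ separating the two fronthaul regimes, i.e.\ to decide when the IA corner is worth including. The key fact is that $r_{\mathsf{th}}$ is exactly the fronthaul rate at which the segment joining the soft-transfer corner $(0,\,K/\min\{M,K\}+K/(Mr))$ and the ZF corner $(1,\,K/\min\{M,K\})$ passes through the IA corner $(1/M,\,(M+K-1)/M)$: evaluating that segment at $\mu=1/M$ gives $K/\min\{M,K\}+\tfrac{(M-1)K}{M^2 r}$, and equating this with $(M+K-1)/M$ and solving for $r$ --- treating $M\le K$ and $M\ge K$ separately, since $\min\{M,K\}$ differs --- returns $r_{\mathsf{th}}$. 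Hence for $r\le r_{\mathsf{th}}$ the IA corner lies on or below the soft-transfer/ZF segment, so the lower convex envelope is the piecewise-linear curve through it, giving \eqref{eq:propub1} on $\mu\le 1/M$ and \eqref{eq:propub2} on $\mu\ge 1/M$; for $r\ge r_{\mathsf{th}}$ the IA corner lies above that segment and is discarded, leaving the single segment \eqref{eq:propub3}. A final check that the resulting curve is nonincreasing in $\mu$ and stays at or above $K/\min\{M,K\}\ge 1$ --- so the constraint $\delta_E\ge 1$ implicit in every corner scheme is respected --- completes the argument.
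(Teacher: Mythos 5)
Your proposal is correct and follows essentially the same route as the paper: file-splitting/cache-sharing (Lemma \ref{lem:conv}) among the three corner schemes $\delta_{\mathsf{Cl-Sf}}$, $\delta_{\mathsf{Ca-IA}}$, $\delta_{\mathsf{Ca-ZF}}$ with exactly the mixing weights $\mu M$, $\tfrac{M(1-\mu)}{M-1}$ vs.\ $\tfrac{\mu M-1}{M-1}$, and $1-\mu$ vs.\ $\mu$ used in the paper's proof. Your explicit geometric derivation of $r_{\mathsf{th}}$ as the rate at which the soft-transfer/ZF segment passes through the IA corner is a nice addition (the paper only asserts the threshold), and it checks out in both the $M\le K$ and $M\ge K$ cases.
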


 \begin{figure}[!t]
	\centering
	\includegraphics[width=0.95\linewidth]{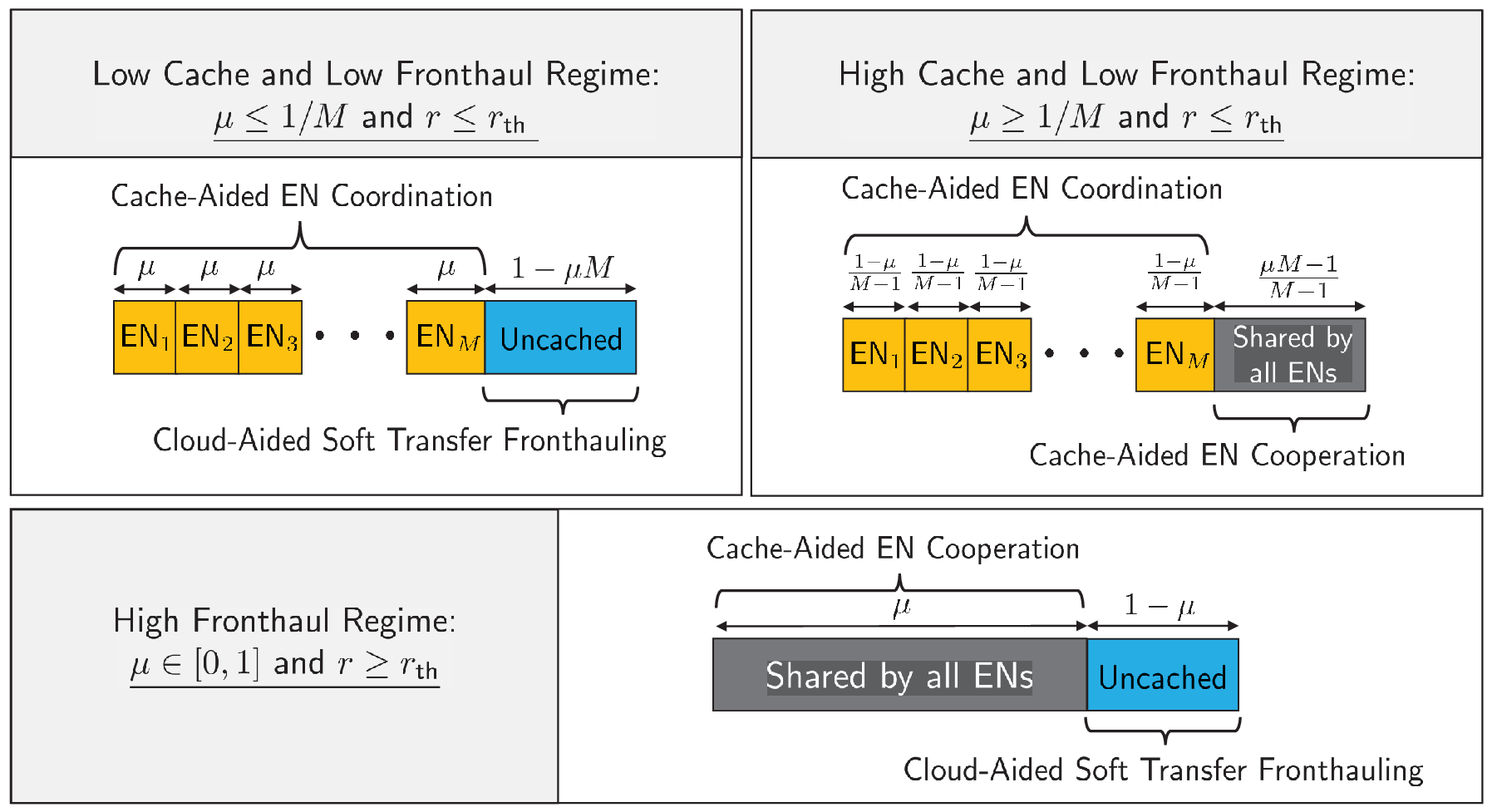}
	\caption{Caching and delivery schemes achieving the NDT in Proposition 4.}
	\label{fig:caching}
\end{figure}

\begin{proof} A formal proof follows, while a high level description of the proposed scheme can be found in Remark \ref{rem:ach_ndt}. The proposition is proved by considering the NDT of a policy that performs file-splitting and cache-sharing, as described in the proof of Lemma \ref{lem:conv}, between cache-aided and cloud-aided schemes. Specifically, for the low-cache, low fronthaul regime where $\mu\in[0,1/M]$ and $r\leq r_{\mathsf{th}}$, we use the cache-aided policy described in Lemma \ref{lem:ach1M}, yielding $\delta_{\mathsf{Ca-IA}}$, for a fraction of the files equal to $\mu M$ and the cloud-aided soft-transfer fronthauling policy described in Proposition \ref{prop:stf}, yielding $\delta_{\mathsf{Cl-Sf}}$, for the remaining $(1-\mu M)$ fraction of the files. This requires a fractional cache capacity of $\mu M \times (1/M) + (1 - \mu M)\times 0 = \mu $, since the two schemes at hand use fractional cache size $1/M$ and $0$ respectively. Moreover, the achievable NDT is
\begin{align}\label{eq:ndtub1}
\delta^{'}_{\mathsf{Ach}}(\mu,r) = (\mu M) \delta_{\mathsf{Ca-IA}} + (1 - \mu M)\delta_{\mathsf{Cl-Sf}},
\end{align}
which equals the achievable NDT in \eqref{eq:propub1}. In a similar manner, for the high-cache, low fronthaul regime where $\mu\geq 1/M$ and $r\leq r_{\mathsf{th}}$, we use the cache-aided policy described in Lemma \ref{lem:ach1M}, yielding $\delta_{\mathsf{Ca-IA}}$, for a fraction $M(1-\mu)/(M-1)$ of the files and the cache-aided policy described in Lemma \ref{lem:ach1}, yielding $\delta_{\mathsf{Ca-ZF}}$, for the remaining $(\mu M -1)/(M-1)$ fraction of files. This requires a fractional cache size of $M(1-\mu)/(M-1)\times (1/M) + (\mu M -1)/(M-1)\times 1 = \mu$ since the schemes at hand use fractional cache size of $1/M$ and $1$ respectively. The achievable NDT is 
\begin{align}\label{eq:ndtub2}
\delta^{''}_{\mathsf{Ach}}(\mu,r) = \frac{M(1-\mu)}{(M-1)}\delta_{\mathsf{Ca-IA}} +  \frac{(\mu M -1)}{(M-1)} \delta_{\mathsf{Ca-ZF}},
\end{align}
which equals the achievable NDT in \eqref{eq:propub2}. Finally, for fractional cache size $\mu\in[0,1]$ and high fronthaul rate of $r\geq r_{\mathsf{th}}$, the NDT
\begin{align}\label{eq:ndtub3}
\delta^{'''}_{\mathsf{Ach}}(\mu,r) = \mu \delta_{\mathsf{Ca-ZF}} + (1-\mu) \delta_{\mathsf{Cl-Sf}}
\end{align}
is achieved by file-splitting between the cache-aided policy described in Lemma \ref{lem:ach1}, yielding $\delta_{\mathsf{Ca-ZF}}$, for a fraction $\mu$ of the files and the cloud-aided soft transfer fronthaul policy of Proposition \ref{prop:stf}, which gives $\delta_{\mathsf{Cl-Sf}}$, for the remaining  $(1-\mu)$ fraction of the files. Note that this requires a fractional cache size of $\mu \times 1 + (1-\mu)\times 0 = \mu$ since the schemes at hand use fractional cache size of $1$ and $0$ respectively. The NDT \eqref{eq:ndtub3} equals the achievable NDT in \eqref{eq:propub3}. The fronthaul threshold, $r_{\mathsf{th}}$, is the limiting fronthaul rate for which the achievable NDT $\delta^{'''}_{\mathsf{Ach}}$ is always lower than the achievable NDTs $\delta^{'}_{\mathsf{Ach}}$ and $\delta^{''}_{\mathsf{Ach}}$. This completes the proof of Proposition \ref{prop:ub}.
\end{proof}

\begin{remark}[\textit{Achievable Scheme}]\label{rem:ach_ndt}
 The scheme that achieves the NDT \eqref{eq:propub1}-\eqref{eq:propub3} can be explained with reference to Fig. \ref{fig:caching} as follows. In the regime of high fronthaul rate, i.e., $r\geq r_{\mathsf{th}}$, the same fraction $\mu$ of each file is stored at all ENs. For any request vector, the cached fraction $\mu$ of the requested files is then transmitted using cache-aided EN cooperation, while the uncached fraction $1-\mu$ is delivered using cloud-aided soft transfer fronthauling. In contrast, in the regime of low fronthaul rate, i.e., $r<r_{\mathsf{th}}$, placement is aimed at caching fractions of each file that are as distinct as possible at different ENs. The rationale of this choice is that one would like to minimize fronthaul usage in this regime. As a result, when $r<r_{\mathsf{th}}$, we need to distinguish the two cases represented in Fig. \ref{fig:caching} in which the $\mu$ fractions of each file stored at different ENs are either fully disjoint, possibly leaving a portion of each file uncached, or partially overlapping. Delivery of the shared and uncached fractions takes place using cache-aided EN cooperation and cloud-aided soft transfer fronthauling, respectively. Instead, the fractions cached at distinct ENs are delivered using cached-aided EN coordination. 
\end{remark}

\begin{remark}[\textit{Size of File Library}]
	In this work we assume that the number of files in the library is greater than the number of users i.e., $N\geq K$. This is justified for scenarios in which the users simultaneously active in the area monitored by a cloud processor is less than the size of the content library. Given that the latter may contain at least tens of thousands of files, this is a practically relevant assumption. Furthermore, from a technical standpoint, the lower bound in \eqref{eq:glb} relies on this assumption since the confition $N\geq K$ ensures that the worst-case request vector is such that each user has a distinct demand. For $N\leq K$, the lower bound will be smaller, reflecting the possible gains that can be accrued via multicasting. Instead, the upper bound on the NDT (achievable NDT) is still valid for the case of $N\leq K$. In fact, we can always deliver common files as distinct files for each user.
\end{remark}

\section{Characterization of the Minimum NDT}\label{sec:opt}
Based on the lower and upper bounds presented in Sections \ref{sec:lb} and \ref{sec:ub}, in this section we
show that the proposed achievable schemes in Section \ref{sec:ub} are optimal 
in a number of important special cases, including cloud-only F-RANs, also known as C-RAN; cache-only F-RANs for extremal values of fractional cache size $\mu$; and general F-RAN models with both cloud processing and caching for the case when the number of users exceeds the number of ENs in the low fronthaul regime. Furthermore, we present a constant factor approximation of the minimum NDT, $\delta^*(\mu,r)$, for all values of problem parameters, which shows that the proposed achievable schemes are approximately optimal to within a factor of at most $2$. To proceed we first consider separately cache-only and cloud-only F-RAN and then study the general F-RAN  model.

\subsection{Minimum NDT for Cache-Only F-RAN}
The following proposition characterizes the minimum NDT for a cache-only F-RAN ($r=0$) for extremal values of the fractional cache size i.e., for $\mu\in \{1/M,1\}$. We recall that with $\mu\leq 1/M$, the minimum NDT is unbounded (see Remark \ref{rem:nofront}).

\begin{prop}[\textit{Minimum NDT for Cache-Only F-RAN}]\label{prop:corner}
For an $M\times K$ F-RAN with a fronthaul rate $r =0$, the minimum NDT given by
\begin{align}\label{eq:corr11}
&\delta^*(\mu,0) = \begin{cases}
									 \delta_{\mathsf{Ca-IA}} ~~~ \text{for}~~ \mu = 1/M,\\
									 \delta_{\mathsf{Ca-ZF}} ~~~ \text{for}~~ \mu = 1,
									 \end{cases}	
\end{align}
where $\delta_{\mathsf{Ca-IA}}$ can be achieved by means of EN coordination via interference alignment (see \eqref{eq:NDT1bM}) and $\delta_{\mathsf{Ca-ZF}}$ can be achieved by EN cooperation via ZF-beamforming (see \eqref{eq:NDT1}).
\end{prop}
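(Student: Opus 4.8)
The plan is to combine the achievability bounds already established for the two extremal cache sizes with the cut-set lower bound specialized to the cache-only regime, and to verify that they coincide. For the upper (achievability) direction nothing new is required: setting $r=0$ in Lemma~\ref{lem:ach1M} gives $\delta^*(1/M,0)\leq\delta_{\mathsf{Ca-IA}}=(M+K-1)/M$, while setting $r=0$ in Lemma~\ref{lem:ach1} gives $\delta^*(1,0)\leq\delta_{\mathsf{Ca-ZF}}=K/\min\{M,K\}$. The crux is therefore to derive matching lower bounds from Corollary~\ref{cor:lb}, whose right-hand side is a maximum over $\ell\in[1:\min\{M,K\}]$ of the terms $\big(K-(M-\ell)(K-\ell)\mu\big)/\ell$. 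Since a maximum dominates every individual term, it suffices, for each of the two values of $\mu$, to exhibit a single admissible $\ell$ whose term equals the corresponding achievable NDT.

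For $\mu=1$ I would evaluate the bound at $\ell=\min\{M,K\}$, which is admissible. If $M\leq K$ this is $\ell=M$, so that $(M-\ell)(K-\ell)=0$ and the term equals $K/M$; if $M>K$ this is $\ell=K$, so that again $(M-\ell)(K-\ell)=0$ and the term equals $K/K=1$. In either case the value is $K/\min\{M,K\}=\delta_{\mathsf{Ca-ZF}}$, hence $\delta^*(1,0)\geq\delta_{\mathsf{Ca-ZF}}$, and together with Lemma~\ref{lem:ach1} this yields $\delta^*(1,0)=\delta_{\mathsf{Ca-ZF}}$.

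For $\mu=1/M$ I would instead take $\ell=1$, which is always admissible. The corresponding term is
\begin{align}
K-\frac{(M-1)(K-1)}{M}=\frac{MK-(M-1)(K-1)}{M}=\frac{M+K-1}{M},
\end{align}
which is exactly $\delta_{\mathsf{Ca-IA}}$. Hence $\delta^*(1/M,0)\geq\delta_{\mathsf{Ca-IA}}$, and combining with Lemma~\ref{lem:ach1M} gives $\delta^*(1/M,0)=\delta_{\mathsf{Ca-IA}}$.

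I do not expect a genuine obstacle here, since the argument merely matches two bounds that are already available; the only care required is to check that the chosen index $\ell$ lies in the admissible range $[1:\min\{M,K\}]$ in each case and to carry out the one-line identity $MK-(M-1)(K-1)=M+K-1$. One point worth making explicit is that it is unnecessary to identify the index at which the maximum in Corollary~\ref{cor:lb} is actually attained: because the achievable NDT already upper-bounds $\delta^*$ by the same value, the single-term lower bound exhibited above is automatically the tight one, so the maximum cannot exceed it.
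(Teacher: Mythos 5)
Your proposal is correct and follows essentially the same route as the paper's proof: achievability from Lemmas \ref{lem:ach1M} and \ref{lem:ach1}, and matching lower bounds obtained from Corollary \ref{cor:lb} by choosing $\ell=1$ at $\mu=1/M$ and $\ell=\min\{M,K\}$ at $\mu=1$. The arithmetic and the admissibility checks are all as in the paper, so nothing further is needed.
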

\begin{proof}
The proof of Proposition \ref{prop:corner} is provided in Appendix \ref{ap:corner}. 
\end{proof}

The result indicates that, in a cache-only F-RAN, the proposed converse in Corollary \ref{cor:lb} is tight at extremal values of fractional cache size $\mu$, and that cache-aided EN cooperation and coordination are optimal for $\mu=1$ and $\mu=1/M$, respectively. 

\subsection{Minimum NDT for Cloud-Only F-RAN}
The following proposition gives the minimum NDT for a cloud-only F-RAN ($\mu = 0$), showing the optimality of soft-transfer fronthauling.

\begin{prop}[\textit{Minimum NDT for Cloud-Only F-RAN}]\label{prop:corner_cl}
For an $M\times K$ F-RAN with $\mu = 0$, the minimum NDT is characterized as
\begin{align}\label{eq:corr10}
\delta^*(0,r) = \delta_{\mathsf{Cl-Sf}}
\end{align}
for $r\geq 0$ which can be achieved by soft-transfer fronthauling (see \eqref{eq:stf}).
\end{prop}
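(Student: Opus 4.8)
The plan is to establish Proposition \ref{prop:corner_cl} by showing that the general lower bound of Section \ref{sec:lb} and the soft-transfer achievability of Section \ref{sec:ub}, both evaluated at $\mu=0$, coincide. The achievability direction $\delta^*(0,r)\leq \delta_{\mathsf{Cl-Sf}}$ is immediate: Proposition \ref{prop:stf} already asserts $\delta^*(\mu,r)\leq \delta_{\mathsf{Cl-Sf}}=K/\min\{M,K\}+K/(Mr)$ for every $\mu\in[0,1]$, and in particular for $\mu=0$, where the ENs carry no cached content and the entire delivery is handled by the cloud-aided soft-transfer scheme (ZF precoding at the cloud with $\log P$-bit quantization on the fronthaul for $M\le K$, and the cluster-scheduling variant illustrated in Fig. \ref{fig:stf} for $M> K$).

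For the converse I would simply invoke Corollary \ref{cor:lb-cl}, which gives $\delta^*(0,r)\geq K/\min\{M,K\}+K/(Mr)$. For completeness one can re-derive this bound directly from Proposition \ref{prop:glb} with $\mu=0$: the constraint \eqref{eq:c1} with $\ell=0$ reads $Mr\,\delta_F\geq K$, hence $\delta_F\geq K/(Mr)$, while the constraint with $\ell=M$ reads $M\delta_E\geq K$, which together with $\delta_E\geq 1$ from \eqref{eq:c2} yields $\delta_E\geq\max\{1,K/M\}=K/\min\{M,K\}$; summing these two bounds gives $\delta^*(0,r)=\delta_F+\delta_E\geq K/\min\{M,K\}+K/(Mr)$. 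Since the achievable NDT matches the lower bound exactly, we conclude $\delta^*(0,r)=\delta_{\mathsf{Cl-Sf}}$, which proves the proposition.

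There is no genuine obstacle here once Propositions \ref{prop:glb} and \ref{prop:stf} and Corollary \ref{cor:lb-cl} are in hand; the content of the statement is that the two relevant cut-set inequalities, the ``edge cut'' ($\ell=M$) and the ``fronthaul cut'' ($\ell=0$), are simultaneously tight, which is precisely what the soft-transfer scheme achieves because its fronthaul-NDT equals $1/(Mr)$ times its edge-rate, matching the $1/(Mr)$ factor appearing in the $\ell=0$ bound. The only point worth double-checking is that the achievability argument of Appendix \ref{ap:soft} indeed delivers $\delta_E=K/\min\{M,K\}$ simultaneously with $\delta_F=K/(Mr)$ rather than some larger pair, i.e., that the quantization resolution $B=\log P$ is both sufficient (the effective downlink SNR still scales as $P$) and not wasteful; this is the one quantitative fact that the converse forces to be exactly right.
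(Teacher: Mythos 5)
Your proof is correct and follows essentially the same route as the paper: the paper's proof of Proposition \ref{prop:corner_cl} is precisely the one-line combination of the lower bound in Corollary \ref{cor:lb-cl} with the achievability in Proposition \ref{prop:stf}. Your re-derivation of the converse from Proposition \ref{prop:glb} (bounding $\delta_F$ via the $\ell=0$ cut and $\delta_E$ via the $\ell=M$ cut together with $\delta_E\geq 1$) is just a slight reorganization of how Corollary \ref{cor:lb-cl} itself is proved, so there is nothing substantively different here.
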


\begin{proof}
The proof follows directly from the lower bound on the minimum NDT for cloud-only F-RANs, presented in  Corollary \ref{cor:lb-cl} and from the achievable NDT presented in Proposition \ref{prop:stf} that uses soft-transfer fronthauling.
\end{proof}

\subsection{Approximate Characterization of the Minimum NDT for a Cache and Cloud-Aided F-RAN}\label{ssec:ndtapprox}
We next provide another exact characterization of the NDT, in addition to the results in Propositions \ref{prop:corner} and \ref{prop:corner_cl} for $r=0$ and $\mu\in\{1/M,1\}$ and for $\mu=0$ respectively.
Specifically, in the low cache memory regime in which $\mu \in[0, 1/M]$, when the number of ENs is smaller than the number of users, i.e., $M\leq K$, and the fronthaul rate is small i.e., $r\leq 1/(M-1)$, the following proposition gives the minimum NDT. 

\begin{prop}[\textit{Minimum NDT for F-RAN with Low Fronthaul and Cache Size}]\label{prop:lowmu}
For an $M\times K$ F-RAN with $M\leq K$ and with each EN having a fractional cache size $\mu \in [0,1/M]$ and a fronthaul rate of  $r \in (0,1/(M-1)]$, the minimum NDT is given as
\begin{align}\label{eq:lowmu}
\delta^*(\mu,r) = (M+K-1)\mu + \frac{K\left(1 - \mu M\right)}{M}\left(1 + \frac{1}{r} \right).
\end{align}
\end{prop}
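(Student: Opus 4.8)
The plan is to establish \eqref{eq:lowmu} by matching the general upper bound of Proposition \ref{prop:ub} with the general lower bound of Proposition \ref{prop:glb} in the specified parameter regime. For the achievability direction, observe that the hypotheses $\mu \in [0,1/M]$ and $M \leq K$ place us in the ``low cache and low fronthaul regime'' of Proposition \ref{prop:ub} provided $r \leq r_{\mathsf{th}}$; with $M \leq K$ we have $r_{\mathsf{th}} = K(M-1)/(M(M-1)) = K/M \geq 1 \geq 1/(M-1)$, so the assumption $r \in (0,1/(M-1)]$ indeed guarantees $r \leq r_{\mathsf{th}}$. Plugging $\min\{M,K\} = M$ into \eqref{eq:propub1} gives the achievable NDT $(M+K-1)\mu + (1-\mu M)\left(K/M + K/(Mr)\right)$, which is exactly the right-hand side of \eqref{eq:lowmu}. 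So the upper bound is immediate from the earlier proposition, and the real content is the matching converse.

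For the lower bound, I would specialize the linear program in Proposition \ref{prop:glb} to the regime at hand. Since the objective is $\delta_F + \delta_E$, it suffices to exhibit a nonnegative combination of the constraints \eqref{eq:c1}–\eqref{eq:c2} whose left-hand side is $\delta_F + \delta_E$ (or a scalar multiple thereof) and whose right-hand side equals the target value. The natural choice is to use the constraint \eqref{eq:c1} with $\ell = M$ (which, since $M \leq K$, reads $M\delta_E + 0 \geq K - 0$, i.e. $\delta_E \geq K/M$) together with the constraint \eqref{eq:c1} with $\ell = 0$ (which reads $Mr\delta_F \geq K - MK\mu$, i.e. $\delta_F \geq K(1-\mu M)/(Mr)$). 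Summing these two inequalities gives $\delta_F + \delta_E \geq K/M + K(1-\mu M)/(Mr)$. This is the target value \eqref{eq:lowmu} only when $\mu = 0$; for general $\mu \in (0,1/M]$ I will instead need a convex combination of constraints with an intermediate value of $\ell$ so that the $(M+K-1)\mu$ term and the reduction of the $K/M$ term to $K(1-\mu M)/M + \text{(correction)}$ emerge correctly. Concretely, I would look for multipliers $\lambda_\ell \geq 0$ on the constraints \eqref{eq:c1} for $\ell \in \{0,1,M\}$ (and possibly on $\delta_E \geq 1$) chosen so that the $\delta_E$-coefficients and $r\delta_F$-coefficients both collapse to the same scalar $c$, and then verify that $c^{-1}\sum_\ell \lambda_\ell\big(K - (M-\ell)(K-\ell)\mu\big)$ equals the right-hand side of \eqref{eq:lowmu}; alternatively, since the LP has few constraints, one can simply solve it by LP duality in closed form and check that the optimum is \eqref{eq:lowmu} exactly on the region $\mu \leq 1/M$, $M \leq K$, $r \leq 1/(M-1)$.

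The main obstacle I anticipate is the bookkeeping in the converse: verifying that, in this particular regime, the optimal dual solution of the LP is supported on exactly the right set of $\ell$-constraints and that no other constraint \eqref{eq:c1} (for $2 \leq \ell \leq M-1$) is active. This is where the hypothesis $r \leq 1/(M-1)$ will be used — it is precisely the condition under which the fronthaul-limited constraint ($\ell = 0$) dominates the intermediate ones, so that the LP optimum is achieved by the combination described above rather than by some interior $\ell$. I would confirm this by checking, for each $\ell \in [1:M-1]$, that the candidate primal point $(\delta_F,\delta_E) = \big(K(1-\mu M)/(Mr),\,K/M + \text{(the } \mu\text{-term)}\big)$ satisfies \eqref{eq:c1} with slack, using $r \leq 1/(M-1)$ and $\mu \leq 1/M$ to bound the relevant differences; this feasibility check, together with the dual certificate built from the $\ell \in \{0,M\}$ (and $\ell=1$) constraints, closes the argument. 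A clean way to present this is to first state and prove the closed-form LP optimum as an intermediate lemma, then simply invoke it here. Combining this converse with the achievability noted above yields \eqref{eq:lowmu}, completing the proof.
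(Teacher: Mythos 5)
Your proposal follows essentially the same route as the paper: achievability is read off from \eqref{eq:propub1} of Proposition \ref{prop:ub} with $\min\{M,K\}=M$ (your check that $r\leq 1/(M-1)\leq K/M = r_{\mathsf{th}}$ is correct), and the converse is a nonnegative combination of the constraints of the LP in Proposition \ref{prop:glb}. The one piece you leave open — the multipliers — is resolved by a combination supported on $\ell\in\{0,1\}$ only: add the $\ell=1$ constraint $\delta_E+(M-1)r\delta_F\geq K-(M-1)(K-1)\mu$ to $\bigl(1-(M-1)r\bigr)$ times the normalized $\ell=0$ constraint $\delta_F\geq K(1-\mu M)/(Mr)$; the left-hand side collapses to $\delta_E+\delta_F$ and the right-hand side to exactly \eqref{eq:lowmu}, with the hypothesis $r\leq 1/(M-1)$ serving only to make the multiplier nonnegative. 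Note also that no $\ell=M$ constraint, no dual-optimality argument, and no slackness check on $2\leq\ell\leq M-1$ is needed: since this single valid combination already matches the achievable NDT, it is automatically a tight lower bound regardless of whether it is the LP optimum.
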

\begin{proof}
The proof of Proposition \ref{prop:lowmu} is provided in Appendix \ref{ap:lowmu}.
\end{proof}

We finally provide an approximate characterization of the minimum NDT for a general $M\times K$ F-RAN by showing that the lower bound in Proposition \ref{prop:glb} and the upper bound in Proposition \ref{prop:ub}, are within a constant multiplicative gap equal to $2$, independent of problem parameters for all regimes of fractional cache size $\mu$ and fronthaul rate $r$.

\begin{prop}[\textit{Minimum NDT for a General F-RAN}]\label{prop:gap}
For a general $M\times K$ F-RAN, we have
\begin{align}\label{eq:gap}
\frac{\delta_{\mathsf{Ach}}(\mu,r)}{\delta^*(\mu,r)} \leq 2,
\end{align}
for $\mu\in[1/M,1]$ when $r=0$ (cache-only F-RAN) and for $\mu\in[0,1]$ when $r>0$ (cloud and cache-aided F-RAN).
\end{prop}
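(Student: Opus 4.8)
The plan is to establish the gap bound $\delta_{\mathsf{Ach}}(\mu,r)/\delta^*(\mu,r)\leq 2$ by partitioning the parameter space $(\mu,r)$ into the same three regimes used in the statement of Proposition \ref{prop:ub} --- low cache / low fronthaul ($\mu\leq 1/M$, $r\leq r_{\mathsf{th}}$), high cache / low fronthaul ($\mu\geq 1/M$, $r\leq r_{\mathsf{th}}$), and high fronthaul ($r\geq r_{\mathsf{th}}$) --- and in each regime upper-bounding $\delta_{\mathsf{Ach}}$ via the explicit formulas \eqref{eq:propub1}--\eqref{eq:propub3} while lower-bounding $\delta^*$ via a judiciously chosen feasible point (or small family of feasible points) of the LP in Proposition \ref{prop:glb}. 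Since the LP lower bound is $\delta^*\geq\delta_F+\delta_E$ subject to \eqref{eq:c1}--\eqref{eq:c2}, any single choice of multipliers on the constraints (equivalently, summing a subset of the constraints \eqref{eq:c1} over selected values of $\ell$, possibly together with $\delta_E\geq 1$) yields a valid lower bound on $\delta^*$; I would pick the combination of $\ell$-values that most closely matches the structure of the achievable scheme active in that regime. In particular, the natural candidates are $\ell=\min\{M,K\}$ (which drives the $K/\min\{M,K\}$ term), $\ell=0$ (which drives the $K/(Mr)$ fronthaul term), and $\ell=1$ or $\ell=M-1$ (which capture the interference-alignment-like behavior when $\mu$ is small), together with $\delta_E\geq 1$.

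The execution would go regime by regime. \emph{High fronthaul regime:} here $\delta_{\mathsf{Ach}}=K/\min\{M,K\}+(1-\mu)K/(Mr)$; using the $\ell=\min\{M,K\}$ and $\ell=0$ constraints one gets, after substituting $\mu$-dependent terms, a lower bound of roughly $\max\{K/\min\{M,K\},1\}+(K-(\text{something})\mu)/(Mr)$-type expressions, and one checks termwise that the achievable numerator is at most twice the bound; the condition $r\geq r_{\mathsf{th}}$ is exactly what guarantees the first term dominates appropriately. \emph{Low cache / low fronthaul:} here $\delta_{\mathsf{Ach}}=(M+K-1)\mu+(1-\mu M)[K/\min\{M,K\}+K/(Mr)]$, and I would compare against the lower bound obtained from $\ell=0$ (giving the $K/(Mr)$ piece, scaled) plus $\ell=1$ (giving a bound like $\delta_E+(M-1)r\delta_F\geq K-(M-1)(K-1)\mu$), exploiting $r\leq r_{\mathsf{th}}$ and $\mu\leq 1/M$ to bound the ratio. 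Note that when $M\le K$ and $r\le 1/(M-1)$ Proposition \ref{prop:lowmu} already gives equality, so the gap is $1$ there; the work is to cover $M>K$ and $1/(M-1)<r\le r_{\mathsf{th}}$. \emph{High cache / low fronthaul:} here $\delta_{\mathsf{Ach}}$ is the convex combination \eqref{eq:propub2} of $\delta_{\mathsf{Ca-ZF}}$ and $\delta_{\mathsf{Ca-IA}}$, which is a purely cache-side quantity (no $r$), so I would lower-bound $\delta^*$ using only \eqref{eq:c1} with $\ell=\min\{M,K\}$ and with $\ell=1$ (and $\delta_E\geq 1$), i.e.\ essentially Corollary \ref{cor:lb} evaluated at suitable $\ell$, and verify the factor-$2$ bound as a function of $\mu\in[1/M,1]$, where the extremes $\mu=1/M$ and $\mu=1$ are tight by Proposition \ref{prop:corner} and convexity (Lemma \ref{lem:conv}) controls the interior.

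The main obstacle I anticipate is the bookkeeping in the low-cache / low-fronthaul regime with $M>K$: there the achievable NDT mixes an interference-alignment term $(M+K-1)\mu$ with a soft-transfer term carrying the $1/r$ dependence, and the LP lower bound that best matches it is not a single constraint but a weighted sum, so one must carefully choose the weights (as functions of $M,K,\mu,r$) and then verify the resulting ratio is $\leq 2$ uniformly --- in particular at the boundary $r=r_{\mathsf{th}}$ where this regime meets the high-fronthaul regime, and at $\mu=1/M$ where it meets the high-cache regime, the bounds must be checked to be consistent. A clean way to organize this is to prove that in each regime $\delta_{\mathsf{Ach}}$ is bounded by twice a piecewise-linear lower envelope of the LP value, reducing the verification to checking the inequality at the vertices of each regime (corners in the $(\mu,r)$ rectangle, together with $r\to\infty$), which by linearity/convexity suffices; the remaining calculations are then routine case checks.
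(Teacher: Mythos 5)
Your plan follows essentially the same route as the paper's proof: a regime-by-regime comparison of the explicit achievable NDTs \eqref{eq:propub1}--\eqref{eq:propub3} against lower bounds obtained as nonnegative combinations of the LP constraints \eqref{eq:c1}--\eqref{eq:c2} (using $\ell=\min\{M,K\}$, $\ell=0$ and $\delta_E\geq 1$ where you indicate), with the low-cache/low-fronthaul case for $M\le K$ and $r>1/(M-1)$ correctly identified as the crux. The paper executes that hard case by partitioning $r\in(1/(M-1),1]$ into sub-intervals $\left[\tfrac{\ell-1}{M-\ell+1},\tfrac{\ell}{M-\ell}\right]$ and combining the $\ell-1$ and $\ell$ constraints with weights $\alpha=\tfrac{\ell}{M}\left(1+\tfrac{1}{r}\right)-1$ and $\beta=1-\tfrac{\ell-1}{M}\left(1+\tfrac{1}{r}\right)$, which is precisely the ``carefully chosen $(M,K,\mu,r)$-dependent weighted sum'' you defer; the remaining regimes are handled exactly as you propose (and note that in the high-cache regime the paper does not need convexity---it simply upper-bounds \eqref{eq:propub2} by its value $\tfrac{M+K-1}{M}$ at $\mu=1/M$ and divides by the constant lower bound $\tfrac{K}{\min\{M,K\}}$).
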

\begin{proof}
The proof of Proposition \ref{prop:gap} is given in Appendix \ref{ap:gap}. 
\end{proof}

\begin{remark}\label{rem:caim}
When considering a cache-aided F-RAN ($r=0$), the system studied in this paper becomes a special case of the system considered in \cite{avestimehr-caim}, which is a cache-aided system with caching at both ENs and users. The authors in \cite{avestimehr-caim} show that, under the constraint of linear precoding strategies for transmission over the wireless channel, the optimal sum-DoF can be characterized to within a factor of $2$. Proposition \ref{prop:gap} shows that the factor $2$ approximation of the minimum NDT, and hence of the sum-DoF, as seen in Remark \ref{rem:dof}, holds over a larger class of precoding schemes, including non-linear transmission strategies, and that it extends to cloud and cache-aided F-RANs.
\end{remark}

\begin{remark}[\textit{Sub-Packetization of Files}]
The caching and delivery schemes designed for the cache-only systems studied in \cite{MA-CAIC} and \cite{avestimehr-caim} are based on techniques that require splitting each file into a number of sub-packets which increases \textit{exponentially} in the number of ENs. In contrast, as illustrated in Fig. \ref{fig:caching}, the scheme proposed in this paper requires only at most $M+1$ sub-packets, which is \textit{linear} in the number $M$ of ENs.
\end{remark}

%

\begin{remark}[\textit{Minimum NDT for the }$2\times 2$ \textit{F-RAN}]\label{rem:nk22}
As proved in \cite{RT_ISIT16}, the achievable scheme described in Remark \ref{rem:ach_ndt} is optimal for an F-RAN with $M=2$ ENs and $K=2$ users as summarized in the following corollary.

\begin{Cor}[\textit{Theorem} $\mathit{1}$, \cite{RT_ISIT16}]\label{cor:mk22}
For an F-RAN with $M = 2$ ENs, $K = 2$ users and $N \geq 2$ files, the minimum NDT is characterized as

\n $\bullet$~Cache-Only F-RAN $(r=0)$:
\begin{align}
&\delta^*(\mu,r) = 2 - \mu.
\end{align}

\n $\bullet$~Low Fronthaul $(r\in (0,1])$:
\begin{align}
&\delta^*(\mu,r) = \max\Big(1 + \mu + \frac{1-2\mu}{r}, 2-\mu \Big).
\end{align}

\n $\bullet$~High Fronthaul $(r>1)$:
\begin{align}
&\delta^*(\mu,r) = 1 + \frac{1-\mu}{r}.
\end{align}
\end{Cor}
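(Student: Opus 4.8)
The plan is to derive Corollary~\ref{cor:mk22} purely by specializing the general bounds already available: the lower bound of Proposition~\ref{prop:glb} and the achievable NDT of Proposition~\ref{prop:ub}, both evaluated at $M=K=2$, and then checking that they coincide for every pair $(\mu,r)$. In other words, the claim is simply that the general converse and the general scheme are tight in the $2\times 2$ case.

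First I would write out the LP of Proposition~\ref{prop:glb} for $M=K=2$. Since $(M-\ell)(K-\ell)=(2-\ell)^2$, the family \eqref{eq:c1} collapses to three inequalities: $\ell=2$ gives $\delta_E\ge 1$ (already in \eqref{eq:c2}); $\ell=1$ gives $\delta_E+r\delta_F\ge 2-\mu$; and $\ell=0$ gives $r\delta_F\ge 1-2\mu$. For $r>0$ the last one reads $\delta_F\ge(1-2\mu)^+/r$; for $r=0$ it degenerates into the feasibility condition $\mu\ge 1/2$, matching Remark~\ref{rem:nofront}. Hence $\delta_{LB}(\mu,r)$ equals $\minn\ \delta_F+\delta_E$ subject to $\delta_E\ge 1$, $\delta_F\ge 0$, $\delta_E+r\delta_F\ge 2-\mu$, and (for $r>0$) $\delta_F\ge(1-2\mu)^+/r$.

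Next I would solve this small LP in closed form, noting that the fronthaul threshold here is $r_{\mathsf{th}}=K(M-1)/\big(M(\min\{M,K\}-1)\big)=1$. For $r=0$ and $\mu\in[1/2,1]$, only $\delta_E\ge\max\{1,2-\mu\}=2-\mu$ and $\delta_F\ge 0$ bind, so $\delta_{LB}=2-\mu$. For $0<r\le 1$, comparing the vertex $\big(\delta_F,\delta_E\big)=\big((1-2\mu)^+/r,\ 2-\mu-(1-2\mu)^+\big)$ against the vertex $\big((1-\mu)/r,\ 1\big)$ and using $r\le 1$ shows the former is optimal, giving $\delta_{LB}=\max\{1+\mu+(1-2\mu)/r,\ 2-\mu\}$ (the first argument dominating when $\mu\le 1/2$, the second when $\mu\ge 1/2$). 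For $r\ge 1$ the vertex $\big((1-\mu)/r,\ 1\big)$ becomes optimal, giving $\delta_{LB}=1+(1-\mu)/r$.

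Finally I would specialize Proposition~\ref{prop:ub}: with $M=K=2$ and $r_{\mathsf{th}}=1$, \eqref{eq:propub1} becomes $3\mu+(1-2\mu)(1+1/r)=1+\mu+(1-2\mu)/r$, \eqref{eq:propub2} becomes $(2\mu-1)+3(1-\mu)=2-\mu$ (valid also at $r=0$, since it carries no $1/r$ term), and \eqref{eq:propub3} becomes $1+(1-\mu)/r$. Checking the sign of $1-2\mu$ and using $r\le 1$ shows the first two expressions coincide with $\max\{1+\mu+(1-2\mu)/r,\ 2-\mu\}$; hence $\delta_{\mathsf{Ach}}(\mu,r)=\delta_{LB}(\mu,r)$ in each of the three regimes, which is exactly the three claimed expressions. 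The only non-routine step is the LP case analysis: in each regime one must confirm which constraints are active at the optimum (in particular that no point on an edge of the feasible polyhedron beats the identified vertex) and handle the $(\cdot)^+$ together with the $r=0$ degeneration with care; everything else is straightforward simplification.
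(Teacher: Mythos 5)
Your proposal is correct and follows essentially the same route as the paper: the converse in Appendix~\ref{ap:mk22} is obtained from exactly the three LP inequalities you list (via explicit nonnegative combinations $\mathsf{Ineq~1}+(1-r)\mathsf{Ineq~2}$ and $\mathsf{Ineq~1}+(r-1)\mathsf{Ineq~3}$, which are just the dual certificates of the vertices you identify), and achievability is the specialization of Proposition~\ref{prop:ub} with $r_{\mathsf{th}}=1$, as you compute. Solving the small LP by vertex enumeration versus exhibiting the multipliers is an immaterial difference, and your algebra in all three regimes checks out.
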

For the converse, which is not reported in \cite{RT_ISIT16}, please see Appendix \ref{ap:mk22}. The achievability follows as per Remark \ref{rem:ach_ndt}. A partial characterization of the minimum NDT of an F-RAN with $M=3$ ENs and $K=3$ users is provided in our related work in \cite{aviksg-spawc} by leveraging the results presented in the previous sections and the achievable schemes presented in \cite{MA-CAIC,MA-CAIC-arxiv}. 
\end{remark}

\begin{figure}[!t]
\centering
\subfigure[]{
\includegraphics[width=3.5in, height=2.75in]{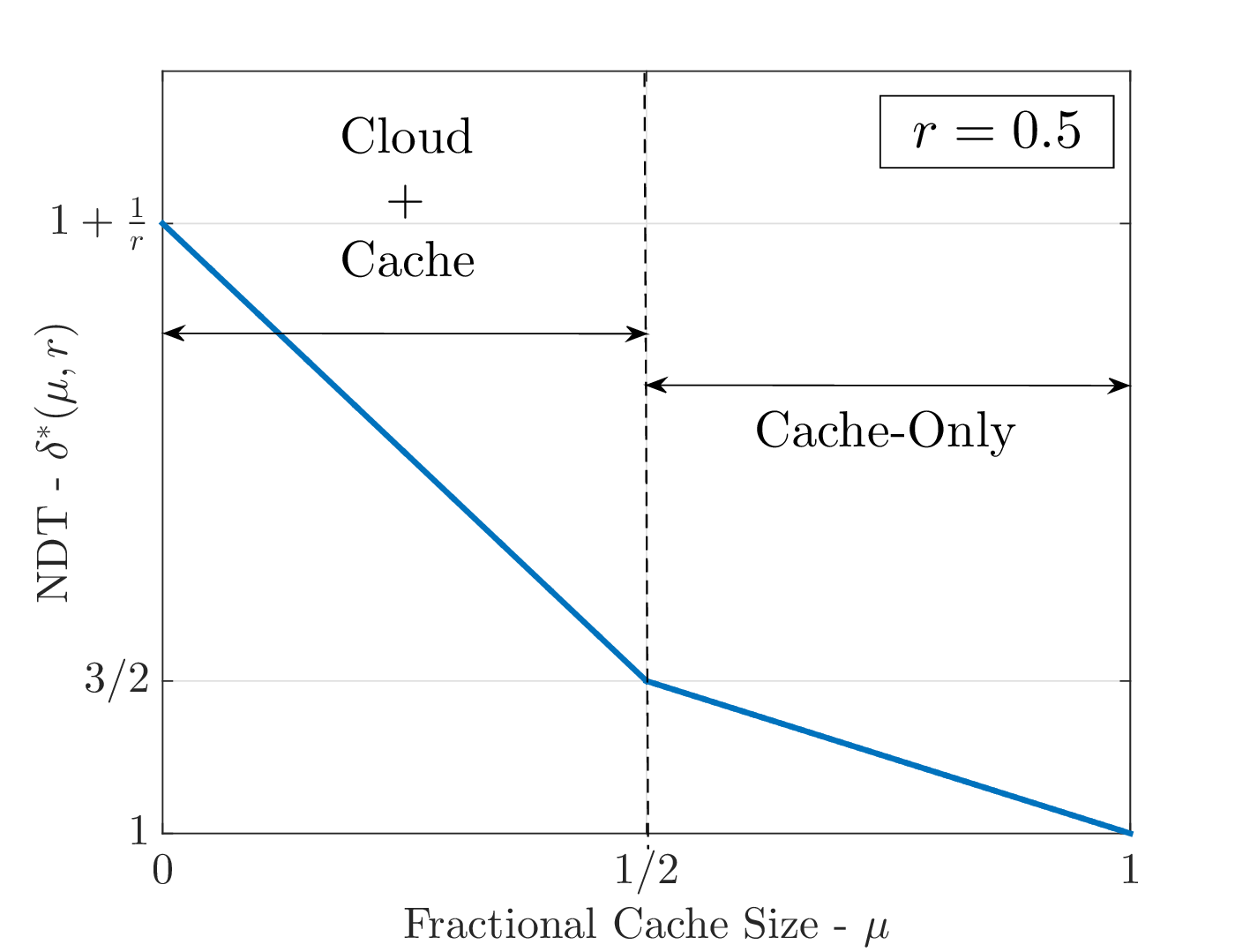}
\label{fig:MK22_r0p5}
}\hspace{-15pt}
\subfigure[]{
\includegraphics[width=3.5in, height=2.75in]{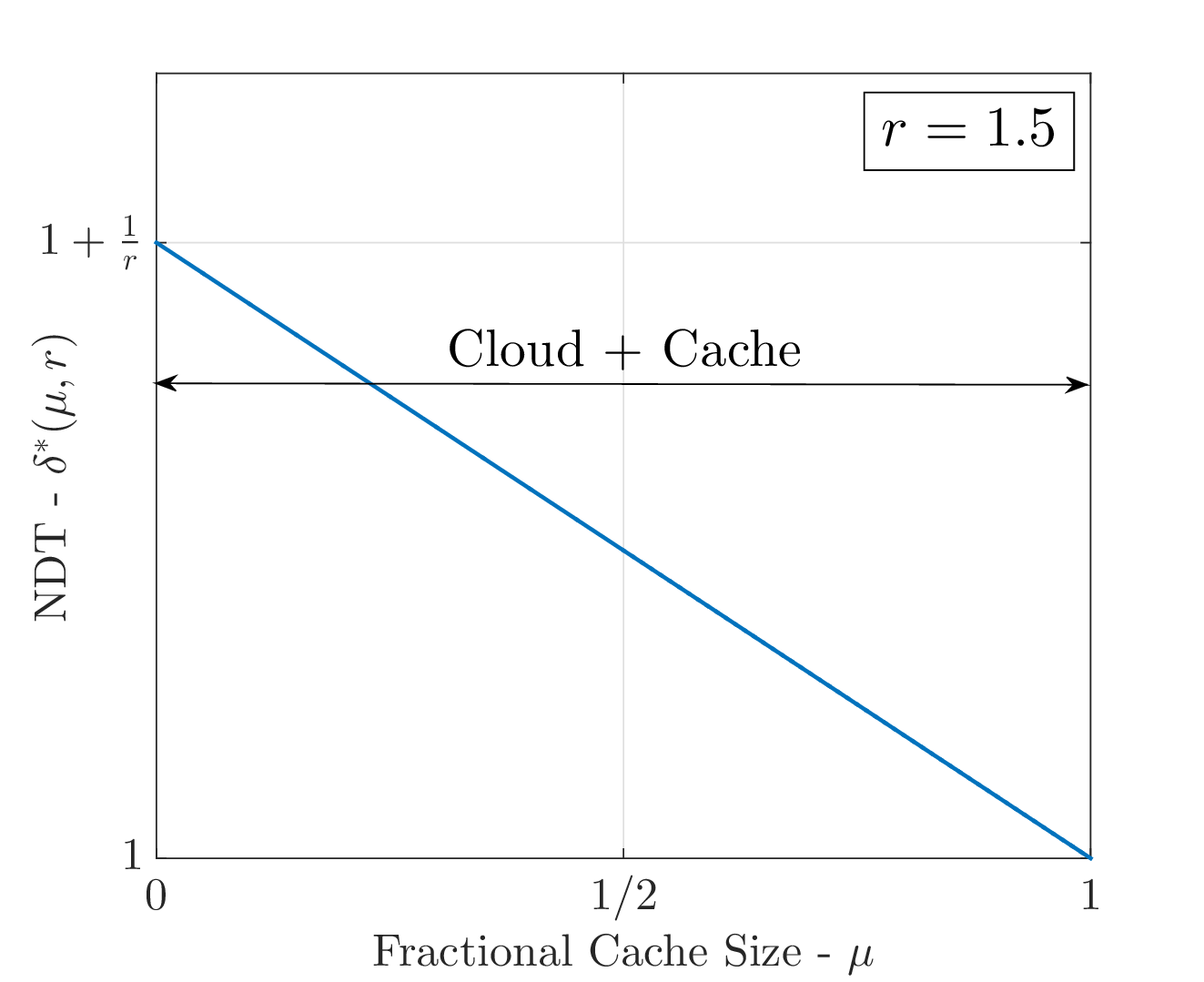}
\label{fig:MK22_r1p5}
}
\caption{Minimum NDT for an F-RAN with $M=K=2$: $(a)$ low fronthaul regime, here $r=0.25$; and $(b)$ high fronthaul regime, here $r=1.5$. The labels "Cache" and "Cloud" refer to the achievable schemes.}
\end{figure}
\section{Pipelined Fronthaul-Edge Transmission}\label{sec:pipe}
In this section, we elaborate on the F-RAN model with pipelined fronthaul-edge transmission introduced in Section \ref{ssec:ndt} (see Remark \ref{rem:pipe}). The following lemma bounds the improvement in NDT that can be achieved by the use of pipelined fronthaul-edge transmission as compared to serial fronthaul-edge transmission.

\begin{lemma}[\textit{Pipelined vs. Serial Fronthaul-Edge Transmission}]\label{lem:pipe_serial}
For an $M\times K$ cloud and cache-aided F-RAN, pipelined fronthaul-edge transmission can improve the minimum NDT as compared to serial transmission by a factor of at most $2$, i.e., 
\begin{align}
\delta_{\mathsf{P}}^*(\mu,r) \geq \frac{\delta^*(\mu,r)}{2}.
\end{align}
\end{lemma}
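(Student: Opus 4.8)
The plan is to show that every feasible pipelined policy can be converted, at the price of at most doubling the total latency, into a feasible serial policy of the type in Definition \ref{def:pol}. This immediately yields $\delta^*(\mu,r) \le 2\,\delta^*_{\mathsf{P}}(\mu,r)$, which is exactly the claimed bound rearranged.

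First I would fix an arbitrary feasible pipelined policy with total delivery duration $T = T(L)$, and let $\mathbf{U}_m^{T}$ denote the fronthaul message destined to $\text{EN}_m$ and $\mathbf{X}_m^{T} = (X_m[1],\ldots,X_m[T])$ the corresponding edge codeword, where by \eqref{eq:tx_pipe} the symbol $X_m[t]$ is a function only of $S_m$, $\mathbf{D}$, $\mathbf{H}$ and the prefix $\{U_m[1],\ldots,U_m[t-1]\}$. From this I would construct a serial policy with the \emph{same} caching policy, operating in two consecutive phases. In the first phase, of duration $T_F = T$, the cloud---which has access to the entire library---precomputes each $\mathbf{U}_m^{T}$ and sends it over the corresponding fronthaul link; since $\mathbf{U}_m^{T}$ contains at most $T C_F$ bits and the link carries $C_F$ bits per symbol, the fronthaul capacity constraint of Definition \ref{def:pol} is met. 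In the second phase, of duration $T_E = T$, each $\text{EN}_m$, now holding the complete message $\mathbf{U}_m^{T}$, regenerates the codeword $\mathbf{X}_m^{T}$ and transmits it on the wireless channel, while each user applies the same decoder as in the pipelined scheme.

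Next I would argue feasibility: the channel outputs observed by the users during the second phase of the serial policy are statistically identical to those of the original pipelined policy, so the worst-case error probability is unchanged and the constructed serial policy is feasible whenever the pipelined one is. Its total latency is $T_F + T_E = 2T$, so taking $\limsup_{L\rightarrow\infty}$ in \eqref{eq:delta} and then the high-SNR limit in \eqref{eq:ndt1}, and comparing with \eqref{eq:ndt_pipe}, shows that the serial NDT of this policy is at most twice the pipelined NDT of the original policy. Since this holds for every achievable pipelined NDT, taking infima over all feasible policies gives $\delta^*(\mu,r) \le 2\,\delta^*_{\mathsf{P}}(\mu,r)$, i.e. $\delta^*_{\mathsf{P}}(\mu,r) \ge \delta^*(\mu,r)/2$.

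The construction is essentially a simulation (time-sharing) reduction and I do not expect a genuine technical obstacle. The one point that must be handled with care is verifying that the edge signals regenerated in the second phase coincide with the pipelined ones: this is precisely where the causal structure in \eqref{eq:tx_pipe}---namely that $X_m[t]$ depends only on the fronthaul symbols received strictly before time $t$---is used, since it guarantees that having the full message $\mathbf{U}_m^{T}$ available up front suffices to reproduce every $X_m[t]$. One may also note in passing that the factor $2$ cannot be improved in general, as illustrated by the exact $2\times 2$ characterization (see Fig. \ref{fig:MK22}).
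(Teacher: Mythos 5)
Your proposal is correct and follows essentially the same route as the paper: convert the pipelined policy into a serial one by keeping the caching and fronthaul policies, delaying the edge transmission until the fronthaul phase completes, and observing that both phases are bounded by the pipelined duration $T$, so the total latency is at most $2T$. Your write-up is somewhat more explicit about why causality in \eqref{eq:tx_pipe} guarantees the edge codewords can be regenerated from the complete fronthaul message, but this is a detail the paper's argument implicitly relies on as well, not a different method.
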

\begin{proof}
For the case of pipelined fronthaul-edge transmission, consider an optimal policy $\pi^*_{\mathsf{P}}$ that achieves the minimum NDT $\delta^*_{\mathsf{P}}(\mu,r)$. We use this policy $\pi^*_\mathsf{P}$ to construct a policy $\pi$ for serial fronthaul-edge transmission model as follows: the caching and fronthaul policies for $\pi$ are the same as for $\pi^*_{\mathsf{P}}$; and the edge-transmission policy for $\pi$ is the same as for $\pi^*_{\mathsf{P}}$ with the caveat that the ENs start transmitting only after the fronthaul transmission is complete. The NDT $\delta(\mu,r)$ achieved by the serial policy $\pi$ is no larger than $2\delta^*_{\mathsf{P}}(\mu,r)$ since the durations of fronthaul and edge transmission for $\pi^*_{\mathsf{P}}$ are by definition of the NDT \eqref{eq:ndt_pipe}, both limited by $\delta^*_{\mathsf{P}}(\mu,r)$ when normalized by $L/\log(P)$ in the limit of large $L$ and $P$. This concludes the proof.
\end{proof}

We next derive a lower bound on the minimum NDT $\delta^*_{\mathsf{P}}(\mu,r)$ based on Proposition \ref{prop:glb} and an upper bound that relies on the fronthaul and edge transmission strategies discussed in Section \ref{sec:ub}. Since the results concerning cache-only F-RANs ($r=0$) coincide with those presented thus far, we focus here only on the case $r>0$.

\subsection{Lower Bound on the Minimum NDT}\label{ssec:pipe_lb}
Here, we provide a general lower bound on the minimum NDT for the $M\times K$ F-RAN with pipelined fronthaul-edge transmission. The main result is stated in the following corollary which can be derived based on Proposition \ref{prop:glb}.
\begin{Cor}[\textit{Lower Bound on the Minimum NDT for Pipelined Fronthaul-Edge Transmission}]\label{cor:pipe_glb}
For an F-RAN with $M$ ENs, each with a fractional cache size $\mu \in [0,1]$, $K$ users, a library of $N\geq K$ files and a fronthaul capacity of $C_F = r\log(P)$ bits per symbol, the minimum NDT for pipelined fronthaul-edge transmission is lower bounded as
\begin{align}\label{eq:pipe_glb}
\delta_{\mathsf{P}}^*(\mu,r) \geq \max\left\{\max_{\ell \in [0:\min\{M,K\}]}\frac{K - (M-\ell)(K-\ell)\mu}{\ell + (M-\ell) r}~, ~1\right\}.
\end{align}
\end{Cor}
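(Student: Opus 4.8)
The plan is to recycle, essentially verbatim, the cut-set argument proving Proposition~\ref{prop:glb}, exploiting the single structural change induced by pipelining: under \eqref{eq:tx_pipe} both the fronthaul messages and the edge codewords live entirely inside the common interval $[1:T]$, so wherever the serial proof used the separate durations $T_F$ and $T_E$ one may substitute $T$. Concretely, fix $\ell\in[0:\min\{M,K\}]$ and the worst-case demand $\mathbf D$ of distinct files (available since $N\ge K$). Since, by \eqref{eq:cache} and \eqref{eq:tx_pipe}, each $\mathbf X_m^{T}$ is a deterministic function of $\bigl(S_m,\mathbf U_m^{T},\mathbf D,\mathbf H\bigr)$, a decoder holding $\mathbf Y_1^{T},\dots,\mathbf Y_\ell^{T}$ together with $S_{[1:M-\ell]}$ and $\mathbf U_{[1:M-\ell]}^{T}$ can reconstruct $\mathbf X_1^{T},\dots,\mathbf X_{M-\ell}^{T}$, subtract their contribution in \eqref{eq:chop}, and then invert the generic $\ell\times\ell$ channel submatrix to recover $\mathbf X_{M-\ell+1}^{T},\dots,\mathbf X_{M}^{T}$ up to the additive noise; with all $M$ edge codewords in hand it can simulate every user's decoder and therefore recover all $K$ requested files with vanishing error probability in the high-SNR regime (the causal nature of \eqref{eq:tx_pipe} is irrelevant here, since only the functional dependence is used).

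Next I would run the same Fano-plus-chain-rule bookkeeping as in Appendix~\ref{ap:thglb} on $KL=H(F_{d_1},\dots,F_{d_K})$. The only modifications are that each of the $M-\ell$ fronthaul messages now obeys $H(\mathbf U_m^{T})\le T\,C_F = T\,r\log P$, and that the $\ell$ received sequences contribute $I\bigl(F_{[1:K]};\mathbf Y_{[1:\ell]}^{T}\,|\,\cdot\bigr)\le \ell\,T\log P+O(\ell T)$ in the high-SNR regime; the cache contribution remains $(M-\ell)(K-\ell)\mu L$ exactly as in the serial proof, because the caching policy and the uncoded-inter-file structure it relies on are untouched by pipelining. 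Collecting terms gives, up to $o(L)$ and $O(T)$ corrections, $KL\le \bigl(\ell+(M-\ell)r\bigr)\,T\log P+(M-\ell)(K-\ell)\mu L$. Dividing by $L$, taking $\limsup_{L\to\infty}$ and then $\lim_{P\to\infty}$ with $C_F=r\log P$, and invoking the definition \eqref{eq:ndt_pipe} (the $O(T)/L$ term vanishes because $T\log P/L$ stays bounded while $1/\log P\to 0$), one obtains
\[
\bigl(\ell+(M-\ell)r\bigr)\,\delta_{\mathsf P}(\mu,r)\ \ge\ K-(M-\ell)(K-\ell)\mu,
\]
hence $\delta_{\mathsf P}(\mu,r)\ge\dfrac{K-(M-\ell)(K-\ell)\mu}{\ell+(M-\ell)r}$ for every $\ell\in[0:\min\{M,K\}]$.

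For the remaining ``$1$'' in \eqref{eq:pipe_glb} I would invoke the elementary single-link bound already used in Remark~\ref{rem:opsig}: each user must reliably receive $L$ bits over a point-to-point channel of capacity $\log P+O(1)$, so $T\ge L/(\log P+O(1))$ and $\delta_{\mathsf P}(\mu,r)\ge 1$ independently of $\mu$ and $r$; this is exactly the role played by the constraint $\delta_E\ge 1$ in the serial LP of Proposition~\ref{prop:glb}. Taking the maximum over this bound and the family above, and then the infimum over feasible pipelined policies, yields \eqref{eq:pipe_glb}. Equivalently, the whole statement is what the LP of Proposition~\ref{prop:glb} collapses to after imposing $\delta_F=\delta_E=\delta_{\mathsf P}$, which is legitimate since under pipelining the fronthaul and edge segments are both active only during the interval of length $T$.

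The conceptual content is light; the main care needed is in the bookkeeping. One should verify that none of the entropy estimates of Appendix~\ref{ap:thglb} covertly relied on the serial ordering $T=T_F+T_E$ (they do not: each estimate is a per-symbol bound multiplied by the relevant block length), and one should handle the interchange of $\limsup_{L\to\infty}$ and $\lim_{P\to\infty}$ in \eqref{eq:ndt_pipe} carefully when $T$ depends jointly on $L$ and $P$, so that the finite-$P$, finite-$L$ inequality survives the double limit. I expect this limiting step, together with tracking that the $O(T)$ slack is genuinely negligible relative to $L$, to be the only slightly delicate point.
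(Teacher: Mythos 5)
Your proposal is correct and follows essentially the same route as the paper: the authors likewise prove the corollary by rerunning the cut-set/Fano argument of Proposition~\ref{prop:glb} with all fronthaul, input, output and noise vectors extended to the common length $T$, so that $\delta_F$ and $\delta_E$ both collapse to $\delta_{\mathsf{P}}$ in constraints \eqref{eq:c1}--\eqref{eq:c2}, yielding the first term of \eqref{eq:pipe_glb} from \eqref{eq:proof_c1} and the constant $1$ from \eqref{eq:proof_c2}. Your closing observation that the result is the LP of Proposition~\ref{prop:glb} specialized to $\delta_F=\delta_E=\delta_{\mathsf{P}}$ is exactly the intuition the paper itself gives after the corollary.
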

\begin{proof}
The corollary is proved via the same steps as in the proof of Proposition \ref{prop:glb} (see Appendix \ref{ap:thglb}) with the following caveat. For pipelined fronthaul-edge transmission, the vectors $\mathbf{U}_m^{T},\mathbf{X}_m^{T},\mathbf{Y}_k^{T}$ and $\mathbf{n}_k^{T}$ corresponding to the fronthaul messages and transmitted signal for each $\text{EN}_m$, and the received signal and channel noise for each user $k$, respectively, have $T$ entries, as per \eqref{eq:tx_pipe}, where $T$ is the overall transmission latency. This is because pipelining allows for parallel fronthaul-edge transmissions. Using these definitions, along with \eqref{eq:ndt_pipe}, and following the same steps as in \eqref{eq:p1}-\eqref{eq:proof_c1} in Appendix \ref{ap:thglb}, the first term in the lower bound can be derived. The second term follows in a similar manner from \eqref{eq:proof_c2} in Appendix \ref{ap:thglb}.
\end{proof}

To provide some intuition on the lower bound \eqref{eq:pipe_glb} in relation to Proposition \ref{prop:glb}, we note that, for an F-RAN with pipelined fronthaul-edge transmission, the fronthaul and edge transmission intervals generally overlap and hence the fronthaul-NDT $\delta_F$ and the edge-NDT $\delta_E$, which may be defined as in \eqref{eq:delf} and \eqref{eq:dele}, satisfy $\max\{\delta_F,\delta_F\}\leq \delta$, where $\delta$ is the overall NDT. Therefore, from constraint \eqref{eq:c1} of Proposition \ref{prop:glb}, by setting $\delta_E\leq\delta$ and $\delta_F \leq \delta$ and maximizing over all $\ell$ we obtain the first term inside the $\max(\cdot)$ function. The second term follows in a similar manner from \eqref{eq:c2}. We also observe that the lower bound \eqref{eq:pipe_glb} is strictly smaller than the lower bound \eqref{eq:glb} derived under serial operation in accordance with the discussion in Remark \ref{rem:pipe}. Next, we consider achievable schemes that yield upper bounds on the minimum NDT for the pipelined fronthaul-edge transmission model. 

\subsection{Upper Bounds on the Minimum NDT}\label{ssec:pipe_ub}

The proposed achievable scheme for pipelined fronthaul-edge transmission leverages \textit{block-Markov encoding} to convert serial transmission policies discussed in Section \ref{sec:ub} to pipelined policies. We further integrate block-Markov encoding with \textit{per-block file splitting} to time-share between two transmission policies within each block. 

\begin{itemize}
	\item \textit{Block-Markov Encoding}: To convert a serial policy into a pipelined policy, we split each file in the library into $B$ blocks, so that each block is of size $L/B$ bits.  Correspondingly, we also divide the total delivery time $T$ into $B+1$ slots, each of duration $T/(B+1)$. In each slot $b\in [1:B]$, the cloud operates the fronthaul according to the serial policy to deliver the $b$th blocks of the requested files, while the ENs apply the corresponding edge delivery policy to deliver the $(b-1)$th blocks of the requested files, as illustrated in Fig. \ref{fig:bmarkov}. 
	
	Let $T_F^{(B)}$ denote the per-block fronthaul time and $T_E^{(B)}$ denote the per-block edge time required by the selected policies in each block. These times are related to the total fronthaul and edge delivery times $T_F$ and $T_E$ of the serial policy as $T_F^{(B)} = T_F/B$ and $T_E^{(B)} = T_E/B$, since in each block, only a fraction $L/B$ of a file is transmitted. The total delivery time per bit is hence given by
	\begin{align}
	\Delta_{\mathsf{P}}(\mu,C_F,P) &= \limsup_{L\rightarrow\infty}~ \frac{(B+1)\max\left(T_F^{(B)}, T_E^{(B)}\right)}{L} = \limsup_{L\rightarrow\infty}~\frac{(B+1)}{B}\frac{\max\left(T_F,T_E\right)}{L}.
	\end{align} 
	
	\n The corresponding NDT \eqref{eq:ndt} is computed as 
	\begin{align}\label{eq:ach_pipeline}
	\delta_{\mathsf{P,Ach}}(\mu,r) &= \lim_{B\rightarrow \infty}\lim_{P\rightarrow \infty} \limsup_{L\rightarrow \infty} \frac{(B+1)}{B}\frac{\max\left(T_F,T_E\right)}{L/\log(P)}\nonumber\\
	& = \max\left(\delta_F,\delta_E\right),
	\end{align}
	where $\delta_F$ and $\delta_E$ are the fronthaul and edge NDTs of the serial transmission scheme. Thus, under the limit of an arbitrarily large number of blocks $B$, the achievable NDT under pipelined fronthaul-edge transmission is the \textit{maximum} of the edge and fronthaul NDTs of the serial policy. 
	\begin{figure}[!t]
		\centering
			\includegraphics[width=5.25in,height=2.25in]{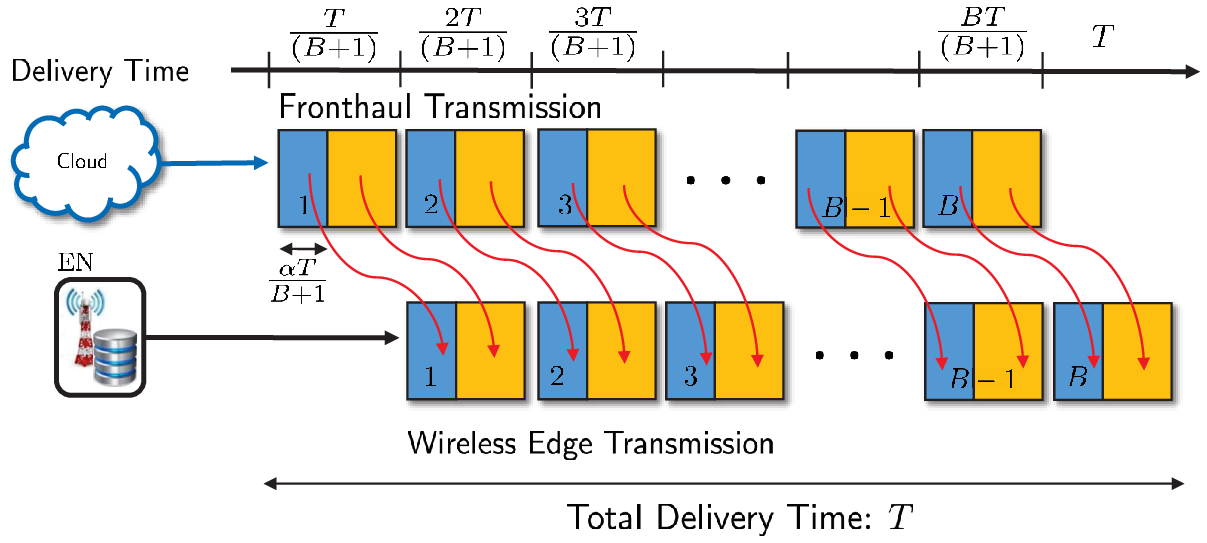}
		\protect\caption{Illustration of pipelined F-RAN operation with per-block file-splitting.}
		\label{fig:bmarkov}
	\end{figure}
	
	\item \textit{Per-Block File Splitting}: To further improve the performance of the block-Markov coding, we propose a per-block \textit{file-splitting} strategy in order to time-share between any two serial fronthaul-edge policies. To elaborate, for some $\alpha \in [0,1]$ fraction of each file block (of size $L/B$ bits), a (serial) policy requiring total fronthaul and edge NDTs $\delta_F^{(1)}$ and $\delta_E^{(1)}$ is used, and for the remaining $(1-\alpha)$ fraction of each file block, a (serial) policy requiring NDTs $\delta_F^{(2)}$ and $\delta_E^{(2)}$ is used. This is illustrated in Fig. \ref{fig:bmarkov}. Based on the discussion above, this yields an achievable NDT of
	\begin{align}\label{eq:pipe_ach_ts}
	\delta_{\mathsf{P,Ach}} = \max\left(\alpha\delta_F^{(1)} + (1-\alpha)\delta_F^{(2)}, \alpha\delta_E^{(1)} + (1-\alpha)\delta_E^{(2)}\right).
	\end{align} 
\end{itemize}

The following proposition gives an achievable NDT by considering a pipelined fronthaul-edge transmission strategy that utilizes cloud-aided soft-transfer fronthauling along with either cache-aided EN coordination via interference alignment or cache-aided EN cooperation via ZF-beamforming (see Section \ref{sec:ub}) as the constituent schemes, as for Proposition \ref{prop:ub}. We note that, unlike Proposition \ref{prop:ub}, we do not consider file-splitting between cache-only schemes (cf. \eqref{eq:ndtub2}) without the use of fronthaul transmission, since it can be shown that this would not improve the NDT in the presence of pipelined fronthaul-edge transmission.

\begin{prop}[\textit{Achievable NDT for Pipelined Fronthaul-Edge Transmission}]\label{prop:ach_pipe}
For an $M\times K$ F-RAN with a fronthaul rate of $r > 0$, the minimum NDT for pipelined fronthaul-edge transmission is upper bounded as $\delta_{\mathsf{P}}^*(\mu,r)\leq \delta_{\mathsf{P,Ach}}(\mu,r)$, where
\begin{align}\label{eq:ach_pipe}
\delta_{\mathsf{P,Ach}}(\mu,r) = \begin{cases}\displaystyle
																  \delta_{\mathsf{P-IA}} ~=~ \dfrac{(1-\mu M)K}{Mr}  \hspace{163.75pt} \text{for}~~ \mu\in[0,\mu_1],\vspace{5pt}\\
																	\delta_{\mathsf{P-FS}} ~=~ \dfrac{K}{Mr} \left[1 - \mu_2 - \left[\mu_1 M - \mu_2\right]\left(\dfrac{\mu_2-\mu}{\mu_2 - \mu_1} \right)^+ \right]  \hspace{15pt} \text{for}~~ \mu\in[\mu_1,\mu_2],\vspace{5pt}\\
																	\delta_{\mathsf{P-ZF}} ~=~ \dfrac{K}{\min\{M,K\}} \hspace{162.5pt} \text{for}~~ \mu\in[\mu_2,1],
																 \end{cases}	
\end{align}
and
\begin{align}\label{eq:mu12}
\mu_1 = \left(\frac{K - \max\{M,K\} r}{KM + Mr\left[\min\{M,K\} - 1\right]}\right)^+, ~~ \mu_2 = \left(1 - \frac{Mr}{\min\{M,K\}}\right)^+,
\end{align}
with $\mu_1\leq \mu_2 \leq 1$. The NDT $\delta_{\mathsf{P-IA}}$ is achieved by file-splitting between cloud-aided soft-transfer fronthauling and cache-aided EN coordination via X-channel based interference alignment; the NDT $\delta_{\mathsf{P-ZF}}$ is achieved by file-splitting between cloud-aided soft-transfer fronthauling and cache-aided EN cooperation via ZF-beamforming; and the NDT $\delta_{\mathsf{P-FS}}$ is achieved by file-splitting between the schemes achieving $\delta_{\mathsf{P-IA}}$ at $\mu=\mu_1$ and $\delta_{\mathsf{P-ZF}}$ at $\mu=\mu_2$ respectively.
\end{prop}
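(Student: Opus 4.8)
The plan is to realize, in pipelined form via block-Markov encoding with per-block file-splitting, the same three \emph{serial} constituent policies used in Section~\ref{sec:ub}, and then to optimize the time-sharing fractions subject to the per-EN cache constraint. The relevant serial building blocks and their normalized delivery times are: cloud-aided soft-transfer fronthauling (Proposition~\ref{prop:stf}), which uses no cache and achieves $(\delta_F,\delta_E)=(K/(Mr),\,K/\min\{M,K\})$; cache-aided EN coordination via X-channel interference alignment (Lemma~\ref{lem:ach1M}), which uses fractional cache size $1/M$ and achieves $(\delta_F,\delta_E)=(0,\,(M+K-1)/M)$; and cache-aided EN cooperation via ZF-beamforming (Lemma~\ref{lem:ach1}), which uses fractional cache size $1$ and achieves $(\delta_F,\delta_E)=(0,\,K/\min\{M,K\})$. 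By the per-block file-splitting argument that leads to \eqref{eq:pipe_ach_ts}, splitting each file block into two fractions served by two of these serial schemes gives a pipelined policy whose NDT equals the \emph{maximum} of the corresponding convex combinations of the fronthaul and of the edge NDTs; in particular, since $\delta_F^{(i)}\le\max(\delta_F^{(i)},\delta_E^{(i)})$ and likewise for $\delta_E^{(i)}$, file-splitting between two pipelined schemes with cache sizes $\mu',\mu''$ and NDTs $\delta',\delta''$ achieves at most the linear interpolation $\alpha\delta'+(1-\alpha)\delta''$ at $\mu=\alpha\mu'+(1-\alpha)\mu''$, in line with the convexity of $\delta_{\mathsf{P}}^*(\mu,r)$ noted in Remark~\ref{rem:pipe}.

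For the low-cache regime $\mu\in[0,\mu_1]$, I would file-split soft-transfer fronthauling over a fraction $1-\mu M$ of each file block and interference alignment over the remaining fraction $\mu M$; since IA requires cache $1/M$, this uses fractional cache $(\mu M)(1/M)=\mu$ (with $\mu M\le 1$ because $\mu\le\mu_1\le 1/M$), and \eqref{eq:pipe_ach_ts} yields the achievable NDT
\[
\max\!\left((1-\mu M)\frac{K}{Mr},\ \mu(M+K-1)+(1-\mu M)\frac{K}{\min\{M,K\}}\right).
\]
A direct computation --- carried out separately for $M\le K$ and $M\ge K$ so as to produce the $\max\{M,K\}$ and $\min\{M,K\}$ terms of \eqref{eq:mu12} --- shows that the first argument is the larger one exactly when $\mu\le\mu_1$, which gives $\delta_{\mathsf{P-IA}}=(1-\mu M)K/(Mr)$. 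For the high-cache regime $\mu\in[\mu_2,1]$, I would file-split soft-transfer fronthauling over a fraction $1-\mu$ and ZF-beamforming over a fraction $\mu$; this uses fractional cache $\mu$, and \eqref{eq:pipe_ach_ts} yields $\max\big((1-\mu)K/(Mr),\,K/\min\{M,K\}\big)$, whose second argument dominates precisely when $\mu\ge\mu_2=(1-Mr/\min\{M,K\})^+$, giving $\delta_{\mathsf{P-ZF}}=K/\min\{M,K\}$. Finally, for $\mu\in[\mu_1,\mu_2]$, writing $\mu=\alpha\mu_1+(1-\alpha)\mu_2$ with $\alpha=(\mu_2-\mu)/(\mu_2-\mu_1)$, I would file-split between the scheme attaining $\delta_{\mathsf{P-IA}}$ at $\mu_1$ and the one attaining $\delta_{\mathsf{P-ZF}}$ at $\mu_2$; by the interpolation property above this achieves at most $\alpha(1-\mu_1 M)K/(Mr)+(1-\alpha)K/\min\{M,K\}$, and substituting the identity $K/\min\{M,K\}=(1-\mu_2)K/(Mr)$ (valid when $\mu_2>0$) and collecting terms reproduces $\delta_{\mathsf{P-FS}}$ in \eqref{eq:ach_pipe}, the $(\cdot)^+$ there accounting for the degenerate cases $\mu_2=0$ or $\mu_1=\mu_2$.

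It then remains to record the structural facts $\mu_1\le\mu_2\le 1$ and $\mu_1\le 1/M$ used to make the three regimes consistent, which follow from elementary manipulations of \eqref{eq:mu12}, once more splitting into $M\le K$ and $M\ge K$. I expect the main obstacle to be the first-regime crossover computation: one must verify that equating the two arguments of the displayed maximum returns exactly the $\mu_1$ of \eqref{eq:mu12}, with its asymmetric appearance of $\max\{M,K\}$ in the numerator and $\min\{M,K\}$ in the denominator; everything else reduces to bookkeeping once the constituent $(\delta_F,\delta_E)$ pairs and the time-sharing identity \eqref{eq:pipe_ach_ts} are in place.
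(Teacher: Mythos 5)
Your proposal is correct and follows essentially the same route as the paper's proof in Appendix~\ref{ap_ssec:ach_pipe}: block-Markov encoding with per-block file-splitting between soft-transfer fronthauling and cache-aided IA (for $\mu\le\mu_1$) or ZF (for $\mu\ge\mu_2$), with the intermediate regime handled by interpolating between the two endpoint schemes via the identity $K/\min\{M,K\}=(1-\mu_2)K/(Mr)$. The crossover computation you flag as the main obstacle does indeed return the $\mu_1$ of \eqref{eq:mu12} in both cases $M\le K$ and $M\ge K$, exactly as in the paper.
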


\begin{proof}
The proof is presented in Appendix \ref{ap_ssec:ach_pipe}.
\end{proof}

As indicated in Proposition \ref{prop:ach_pipe}, the NDT \eqref{eq:ach_pipe} is achieved by selecting the best among (with smallest NDT) three block-Markov strategies which use as constituent schemes cloud-aided soft-transfer on the fronthaul and either cache-aided ZF-beamforming or X-channel-based interference alignment on the edge. An illustration will be provided below for a $2\times 2$ F-RAN. 

%
%

\vspace{-5pt}\subsection{Minimum NDT for a Cloud and Cache-Aided F-RAN}\label{ssec:pipe_gen}
We next provide a partial characterization of the minimum NDT for a general cloud and cache-aided F-RAN with pipelined fronthaul-edge transmission. Specifically, the following proposition gives the minimum NDT for the low cache regime with $\mu\in [0,\mu_1]$; for the high cache regime with $\mu\in[\mu_2,1]$; and for the high fronthaul regime with $r \geq ((1-\mu)\min\{M,K\})/{M}$.

\begin{prop}[\textit{Minimum NDT for a General F-RAN with Pipelined Fronthaul-Edge Transmissions}]\label{prop:pipe_gen}
For a general $M\times K$ F-RAN, with pipelined fronthaul-edge transmission and with fronthaul rate $r>0$, we have
\begin{align}\label{eq:pipe_gen1}
\delta_{\mathsf{P}}^*(\mu,r) = \begin{cases}
									\delta_{\mathsf{P-IA}}, ~~~ \text{for}~~~ \mu \in [0,\mu_1],\vspace{7.5pt}\\ 
									\delta_{\mathsf{P-ZF}}, ~~~ \text{for}~~~ \mu \in [\mu_2,1],
									\end{cases}
\end{align}
where $\delta_{\mathsf{P-IA}}$ and $\delta_{\mathsf{P-ZF}}$ are defined in \eqref{eq:ach_pipe} and the fractional cache sizes $\mu_1,\mu_2$ are defined in \eqref{eq:mu12}. Furthermore, for any fractional cache size $\mu\in[0,1]$, we have
\begin{align}\label{eq:pipe_gen2}
\delta_{\mathsf{P}}^*(\mu,r) = \delta_{\mathsf{P-ZF}}, ~~~ \text{for}~~~ r \geq \frac{(1-\mu)\min\{M,K\}}{M}.
\end{align}
\end{prop}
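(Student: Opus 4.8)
The plan is to establish Proposition~\ref{prop:pipe_gen} by matching the general achievability bound of Proposition~\ref{prop:ach_pipe} against the converse of Corollary~\ref{cor:pipe_glb}, exploiting the fact that in the three stated parameter regimes the achievable NDT reduces to a simple closed form that coincides with an easily identified term of the lower bound. First I would treat the high-cache regime $\mu\in[\mu_2,1]$: here Proposition~\ref{prop:ach_pipe} gives $\delta_{\mathsf{P,Ach}}=\delta_{\mathsf{P-ZF}}=K/\min\{M,K\}$, and since $\delta_{\mathsf{P-ZF}}\geq 1$ always, and since one also trivially has $\delta^*_{\mathsf{P}}(\mu,r)\geq K/\min\{M,K\}$ for all $\mu\le 1$ (this is the DoF bound: even with $\mu=1$, i.e. full caching, the $\min\{M,K\}$ spatial degrees of freedom cap the sum rate, matching the $\ell=\min\{M,K\}$, $r$-free instance of the cut in Corollary~\ref{cor:pipe_glb} — note that when $M\le K$ the term with $\ell=M$ reads $K/(M)$ and when $M\ge K$ the term with $\ell=K$ reads $K/K=1$ together with the explicit ``$1$'' in the max, but the cleanest route is the standalone argument that $K$ files over a channel with $\min\{M,K\}$ DoF needs time $\ge K/\min\{M,K\}$), the upper and lower bounds meet.

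Next I would handle the low-cache regime $\mu\in[0,\mu_1]$, where Proposition~\ref{prop:ach_pipe} yields $\delta_{\mathsf{P,Ach}}=\delta_{\mathsf{P-IA}}=(1-\mu M)K/(Mr)$. The matching converse should come from the $\ell=0$ term of the max in \eqref{eq:pipe_glb}, which reads $(K - M K\mu)/(Mr) = (1-\mu M)K/(Mr)$ exactly. So the work here is purely to verify that for $\mu\le\mu_1$ this $\ell=0$ term dominates both the ``$1$'' in the max and every other $\ell\ge1$ term of \eqref{eq:pipe_glb}; equivalently, that $\mu_1$ as defined in \eqref{eq:mu12} is precisely the threshold below which $\delta_{\mathsf{P-IA}}$ is the largest of all these quantities. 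One checks $\delta_{\mathsf{P-IA}}\ge 1 \iff \mu \le (K-Mr)/(KM)$ and $\delta_{\mathsf{P-IA}}\ge$ (the $\ell=1$ term, which is the binding competitor) gives $\mu\le$ the expression in \eqref{eq:mu12}; a short comparison of the two candidate thresholds shows the $\ell=1$ one is the more stringent, identifying $\mu_1$, and monotonicity in $\ell$ of the remaining terms finishes it. This cross-checking of which cut is active is the main bookkeeping obstacle, though it is elementary.

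Finally, for the high-fronthaul statement $r\ge (1-\mu)\min\{M,K\}/M$, I would observe that this inequality is exactly the condition under which $\mu_2=(1-Mr/\min\{M,K\})^+$ satisfies $\mu_2\le\mu$, i.e. we are back in the high-cache regime $\mu\in[\mu_2,1]$ already treated, so $\delta^*_{\mathsf{P}}(\mu,r)=\delta_{\mathsf{P-ZF}}=K/\min\{M,K\}$ follows immediately from the first part. Thus \eqref{eq:pipe_gen2} is not an independent claim but a restatement of \eqref{eq:pipe_gen1} in terms of $r$ rather than $\mu$, and the proof is simply to note this equivalence of the two threshold conditions. I expect the one genuinely delicate point to be the $\ell$-optimization in the low-cache case — confirming that no intermediate value $\ell\in[2:\min\{M,K\}-1]$ ever overtakes the $\ell=0$ or $\ell=1$ cuts on $[0,\mu_1]$ — which can be dispatched by noting each term $(K-(M-\ell)(K-\ell)\mu)/(\ell+(M-\ell)r)$ is affine-decreasing in $\mu$ with a value and slope that interpolate monotonically, so the lower envelope's maximizer at small $\mu$ is at an endpoint $\ell\in\{0,1\}$ of the relevant range.
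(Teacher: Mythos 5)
Your proposal is correct and follows essentially the same route as the paper: achievability from Proposition \ref{prop:ach_pipe} is matched against the $\ell=0$ cut of Corollary \ref{cor:pipe_glb} for $\mu\le\mu_1$, against the $\ell=\min\{M,K\}$ cut (for $M\le K$) or the explicit constant $1$ (for $M\ge K$) when $\mu\ge\mu_2$, and the high-fronthaul condition $r\ge(1-\mu)\min\{M,K\}/M$ is precisely the condition $\mu\ge\mu_2$. One economy you missed: the verification that the $\ell=0$ term dominates all other cuts on $[0,\mu_1]$ --- which you single out as the main bookkeeping obstacle --- is unnecessary, since any individual term of the max in \eqref{eq:pipe_glb} is already a valid lower bound, so the chain $\delta^*_{\mathsf{P}}(\mu,r)\ge (1-\mu M)K/(Mr)=\delta_{\mathsf{P-IA}}\ge\delta^*_{\mathsf{P}}(\mu,r)$ closes the sandwich without any comparison of cuts.
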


\begin{proof}
The proof is presented in Appendix \ref{ap_ssec:pipe_gen}.
\end{proof}

\begin{remark}\label{rem:cloud_pipe}
Proposition \ref{prop:pipe_gen}, along with Proposition \ref{prop:ach_pipe}, demonstrate that, even with partial caching, i.e., with $\mu < 1$, it is possible to achieve the same performance as in a system with full caching or ideal fronthaul, namely $\delta = \delta_{\mathsf{P-ZF}} = K/\min\{M,K\}$. This is the case as long as either the fronthaul capacity is large enough (see \eqref{eq:pipe_gen2}) or the fronthaul capacity is positive and the cache capacity $\mu$ is sufficiently large (see \eqref{eq:pipe_gen1}). We observe that this is not true for serial fronthaul-edge transmission, in which case no policy can achieve the NDT $\delta = K/\min\{M,K\}$ for $\mu<1$ and finite fronthaul capacity. The intuition behind this result is that, with pipelined transmission, cloud resources can be leveraged to make up for partial caching by transmitting on the fronthaul while edge transmission takes place (see \cite{Leconte} for practical implications).
\end{remark}

We finally provide an approximate characterization of the minimum NDT for a general $M\times K$ F-RAN with pipelined fronthaul-edge transmission by showing that the lower bound in Corollary \ref{cor:pipe_glb} and the upper bound in Proposition \ref{prop:ach_pipe} are within a constant multiplicative gap, independent of problem parameters for any fronthaul rate $r>0$, in the intermediate cache regime with $\mu \in [\mu_1,\mu_2]$, where the minimum NDT is not characterized by Proposition \ref{prop:pipe_gen}.

\begin{prop}[\textit{Approximate Characterization of Minimum NDT in the Intermediate Cache Regime}]\label{prop:gap_pipe}
For a general $M\times K$ F-RAN with pipelined fronthaul-edge transmission and with fronthaul rate $r>0$, we have
\begin{align}\label{eq:gap_pipe}
\frac{\delta_{\mathsf{P,Ach}}(\mu,r)}{\delta_{\mathsf{P}}^*(\mu,r)} \leq 2, ~~~~ \text{for}~~\mu\in[\mu_1,\mu_2].
\end{align}
\end{prop}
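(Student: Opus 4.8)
The plan is to bound the ratio $\delta_{\mathsf{P,Ach}}(\mu,r)/\delta_{\mathsf{P}}^*(\mu,r)$ from above by $2$ throughout the intermediate regime $\mu\in[\mu_1,\mu_2]$ by comparing the achievable NDT $\delta_{\mathsf{P-FS}}$ of Proposition \ref{prop:ach_pipe} against the lower bound of Corollary \ref{cor:pipe_glb}. Since $\delta_{\mathsf{P}}^*(\mu,r)\geq\delta_{\mathsf{P,LB}}(\mu,r)$, where $\delta_{\mathsf{P,LB}}$ is the right-hand side of \eqref{eq:pipe_glb}, it suffices to show $\delta_{\mathsf{P-FS}}(\mu,r)\leq 2\,\delta_{\mathsf{P,LB}}(\mu,r)$ for all $\mu\in[\mu_1,\mu_2]$ and all $r>0$. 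First I would exploit convexity: by Remark \ref{rem:pipe}, $\delta_{\mathsf{P}}^*(\mu,r)$ is convex in $\mu$, and $\delta_{\mathsf{P-FS}}$ is (by construction) the linear interpolation between the endpoint values $\delta_{\mathsf{P-IA}}(\mu_1,r)$ and $\delta_{\mathsf{P-ZF}}$. Hence it is enough to (i) verify the gap at the two endpoints $\mu=\mu_1$ and $\mu=\mu_2$, where Proposition \ref{prop:pipe_gen} already gives exact optimality (so the ratio is $1$ there), and (ii) control the ratio in the interior using a suitable convex lower bound on $\delta_{\mathsf{P}}^*$ that lies above the chord of $2\,\delta_{\mathsf{P,LB}}$.

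More concretely, the key steps in order are as follows. Step 1: reduce to showing $\delta_{\mathsf{P-FS}}(\mu,r)\leq 2\,\delta_{\mathsf{P,LB}}(\mu,r)$ pointwise, using $\delta_{\mathsf{P}}^*\geq\delta_{\mathsf{P,LB}}$. Step 2: since $\delta_{\mathsf{P-FS}}$ is affine in $\mu$ on $[\mu_1,\mu_2]$, it is the chord through $(\mu_1,\delta_{\mathsf{P-IA}}(\mu_1,r))$ and $(\mu_2,\delta_{\mathsf{P-ZF}})$; I would first check that at both endpoints the exact minimum NDT equals the achievable NDT (Proposition \ref{prop:pipe_gen}), so the ratio equals $1<2$ there. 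Step 3: choose a single index $\ell$ in the family \eqref{eq:pipe_glb} — most naturally $\ell=\min\{M,K\}$, or the $\ell$ that is active near the relevant endpoint — to obtain a clean \emph{linear} (in $\mu$) lower bound $L_\ell(\mu,r)=\big(K-(M-\ell)(K-\ell)\mu\big)/\big(\ell+(M-\ell)r\big)$ on $\delta_{\mathsf{P}}^*$; also keep the trivial bound $\delta_{\mathsf{P}}^*\geq 1$. Step 4: compare the affine function $\delta_{\mathsf{P-FS}}(\mu,r)$ against $2\max\{L_\ell(\mu,r),1\}$, which is piecewise affine and convex in $\mu$; it then suffices to compare the two affine pieces at the breakpoints, i.e. to verify the inequality at $\mu_1$, at $\mu_2$, and at the $\mu$ where $L_\ell(\mu,r)=1$ if this falls inside $[\mu_1,\mu_2]$. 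Step 5: at each such point substitute the explicit expressions for $\mu_1,\mu_2$ from \eqref{eq:mu12} and the definitions $\delta_{\mathsf{P-IA}}=(1-\mu M)K/(Mr)$, $\delta_{\mathsf{P-ZF}}=K/\min\{M,K\}$, and reduce each check to an elementary algebraic inequality in $M$, $K$, $r$; the factor $2$ should appear with room to spare because at the endpoints the ratio is exactly $1$ and $\delta_{\mathsf{P-FS}}$ only overshoots the chord of the lower bound moderately in between.

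I would split the verification by the cases $M\leq K$ and $M\geq K$, because $\min\{M,K\}$ and $\max\{M,K\}$ enter both $\mu_1$ and the active index $\ell$ in the lower bound differently; in the $M\geq K$ case one must also handle the possibility $\mu_1=0$ (when the inner $(\cdot)^+$ in \eqref{eq:mu12} is clipped), in which case the interval degenerates and only $\mu_2$ matters. A further sub-case split occurs according to whether $\mu_2=0$ (large $r$), in which case the intermediate regime is empty and there is nothing to prove. In each surviving case the inequality to check is of the form ``affine $\leq$ $2\cdot$(affine)'' on a segment, which I would dispatch by checking the endpoints only.

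The hard part will be Step 3–4: choosing the lower-bound index $\ell$ that makes $2\,\delta_{\mathsf{P,LB}}$ dominate $\delta_{\mathsf{P-FS}}$ uniformly across the segment, since a single fixed $\ell$ may be too weak near one endpoint while another is too weak near the other, forcing a careful $\mu$-dependent selection (or a comparison against the upper envelope of two linear bounds, whose breakpoint must be located relative to $\mu_1,\mu_2$). Once the right $\ell$ (as a function of the regime) is identified, the remaining algebra is routine; I expect the looseness to be comfortably within the factor $2$, and in fact tighter near the endpoints where Proposition \ref{prop:pipe_gen} already gives exactness. I would close by remarking that, together with Proposition \ref{prop:pipe_gen}, this yields a factor-$2$ characterization of $\delta_{\mathsf{P}}^*(\mu,r)$ for all $\mu\in[0,1]$ and all $r>0$.
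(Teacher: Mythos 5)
Your overall strategy is sound and, once the deferred algebra is carried out, it proves the claim; it is close in spirit to the paper's argument but organized differently. The paper does not work with the chord $\delta_{\mathsf{P\text{-}FS}}$ and breakpoints at all: it splits $[\mu_1,\mu_2]$ at $\mu=1/M$, upper-bounds the achievable NDT in $[\mu_1,1/M]$ by the edge-NDT of the IA-based pipelined scheme, namely $(1-\mu M)K/\min\{M,K\}+\mu(M+K-1)$, and in $[1/M,\mu_2]$ by $(M+K-1)/M$ via monotonicity of the achievable NDT in $\mu$, and then compares each against the single \emph{constant} lower bound $\delta^*_{\mathsf{P}}(\mu,r)\geq K/\min\{M,K\}$ (which is exactly what your choice $\ell=\min\{M,K\}$ in \eqref{eq:pipe_glb}, combined with the trivial bound $\delta^*_{\mathsf{P}}\geq 1$, produces). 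The difficulty you flag in Steps 3--4 --- that a $\mu$-dependent choice of $\ell$ might be needed --- does not in fact materialize: the constant bound $K/\min\{M,K\}$ suffices uniformly on $[\mu_1,\mu_2]$. To see this within your framework, note that $\delta_{\mathsf{P\text{-}FS}}$ is non-increasing on $[\mu_1,\mu_2]$ (since $M+K-1\geq \max\{M,K\}$ implies $\delta_{\mathsf{P\text{-}IA}}(\mu_1,r)\geq \delta_{\mathsf{P\text{-}ZF}}$), so only the endpoint $\mu=\mu_1$ needs checking; there the defining identity of $\mu_1$ (equality of the fronthaul and edge terms in \eqref{eq:pipe_ia}) gives
\begin{align*}
\delta_{\mathsf{P\text{-}IA}}(\mu_1,r)=\frac{(1-\mu_1 M)K}{\min\{M,K\}}+\mu_1(M+K-1)=\frac{K}{\min\{M,K\}}+\mu_1\bigl(\min\{M,K\}-1\bigr)\leq \frac{2K}{\min\{M,K\}},
\end{align*}
using $\mu_1\leq 1/M$ and $\min\{M,K\}(\min\{M,K\}-1)\leq MK$; the clipped case $\mu_1=0$ is handled directly since then $r\geq K/\max\{M,K\}$ implies $K/(Mr)\leq 2K/\min\{M,K\}$. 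With this observation your plan closes, and it yields essentially the same elementary inequalities as the paper; the only caution is that exactness of the ratio at the endpoints (your Step 2) carries no information about the interior on its own, so the explicit interior lower bound is indispensable, as you correctly anticipated.
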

\begin{proof}
The proof of Proposition \ref{prop:gap_pipe} is presented in Appendix \ref{ap_ssec:gap_pipe}. 
\end{proof}

\subsection{Case Study: $2\times 2$ F-RAN with Pipelined Fronthaul-Edge Transmission}\label{ssec:nk22_pipe}
In this section, we provide the complete characterization of the minimum NDT of an F-RAN with $M=2$ ENs and $K=2$ users with pipelined fronthaul-edge transmission and we offer insights on optimal delivery policies.
\begin{figure}[!t]
\centering
\includegraphics[width=3.5in,height=3in]{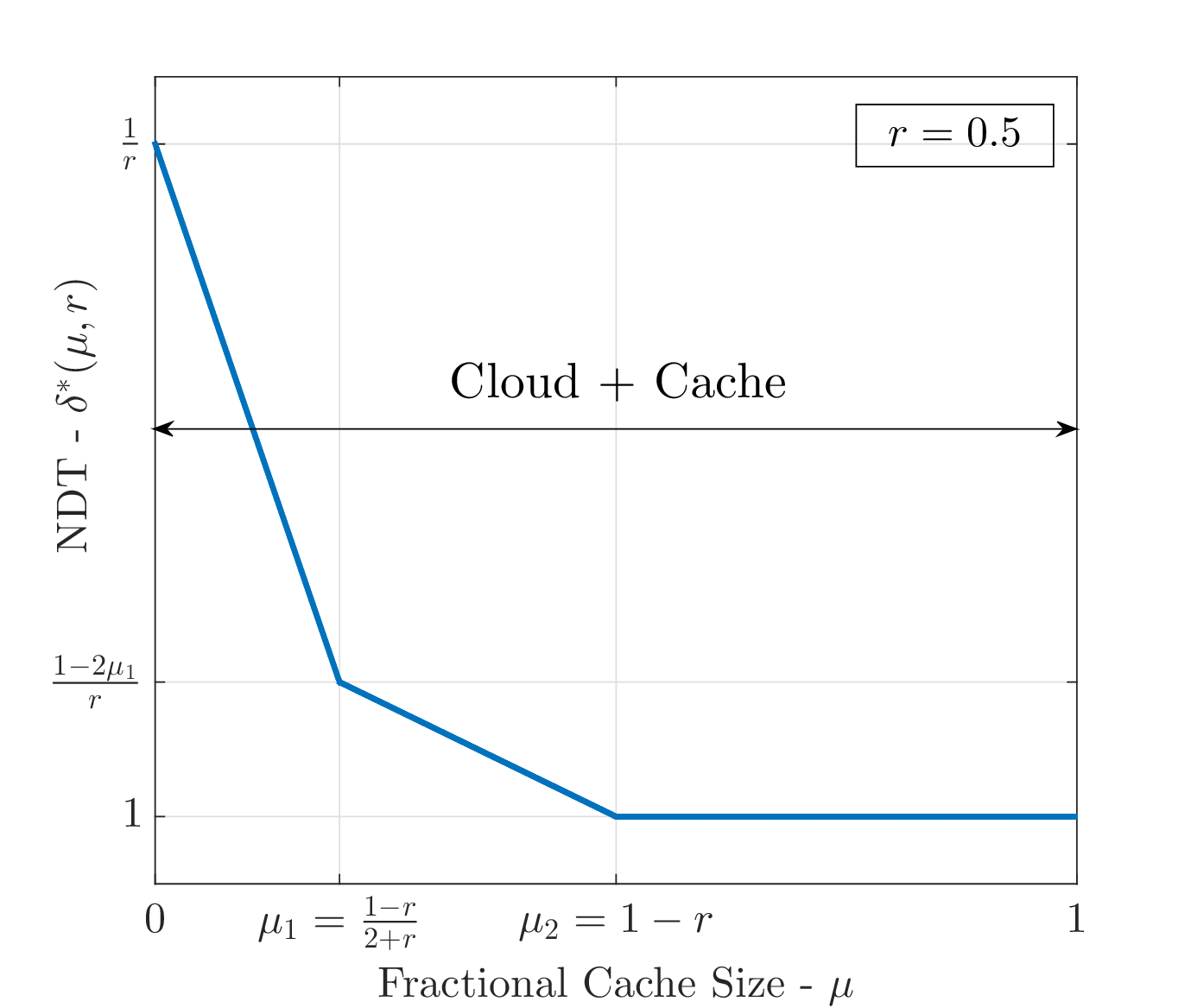}
\vspace{-5pt}
\protect\caption{Minimum NDT for an F-RAN with $M=K=2$ and pipelined fronthaul-edge transmissions in the low fronthaul regime, here with $r=0.5$.}
\label{fig:mk22_pipe}
\vspace{-15pt}
\end{figure}
\begin{Cor}\label{cor:mk22_pipe}
For an F-RAN with $M = 2$ ENs, $K = 2$ users, $N \geq 2$ files, and with pipelined fronthaul-edge transmissions, the minimum NDT is characterized as

\n $\bullet$~Low Fronthaul $(r (0,1])$:
\begin{align}
&\delta_{\mathsf{P}}^*(\mu,r) = \begin{cases}\displaystyle
										~\dfrac{1 - 2\mu}{r},~~~~~~~~~~\text{for} ~~ \mu \in \left[0, \mu_1 = (1-r)/(2+r) \right]\vspace{2.5pt}\\
										~\dfrac{2- \mu}{1 + r}, ~~~~~~~~~~~\text{for} ~~ \mu \in \left[\mu_1,\mu_2 = (1-r)\right]\vspace{2.5pt}\\
										~1, ~~~~~~~~~~~~~~~~~\text{for} ~~ \mu \in \left[\mu_2, 1\right]
									 \end{cases}
\end{align}

\n $\bullet$~High Fronthaul $(r\geq 1)$:
\begin{align}
&\delta_{\mathsf{P}}^*(\mu,r) = 1, ~~~~~\text{for} ~~ \mu \in [0,1].
\end{align}
\end{Cor}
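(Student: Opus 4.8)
## Proof Proposal for Corollary \ref{cor:mk22_pipe}

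The plan is to specialize the general results established earlier in the paper to the case $M=K=2$, combining the achievability from Proposition \ref{prop:ach_pipe} with the lower bounds from Corollary \ref{cor:pipe_glb}, and invoking the exact characterizations of Proposition \ref{prop:pipe_gen} wherever they apply. First I would compute the threshold cache sizes from \eqref{eq:mu12}: with $M=K=2$ and $\min\{M,K\}=\max\{M,K\}=2$, the formula gives $\mu_1 = \left((2-2r)/(4+2r)\right)^+ = (1-r)/(2+r)$ for $r\le 1$ (and $\mu_1=0$ for $r\ge 1$), and $\mu_2 = (1-r)^+$, which is $1-r$ for $r\le 1$ and $0$ for $r\ge 1$. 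Substituting into \eqref{eq:ach_pipe} yields the three claimed expressions: $\delta_{\mathsf{P-IA}} = (1-2\mu)/r$, $\delta_{\mathsf{P-ZF}} = 2/2 = 1$, and the file-splitting branch $\delta_{\mathsf{P-FS}} = (1/r)\bigl[1-\mu_2 - (\mu_1 M - \mu_2)((\mu_2-\mu)/(\mu_2-\mu_1))\bigr]$, which after plugging in the $M=K=2$ values of $\mu_1,\mu_2$ should simplify to $(2-\mu)/(1+r)$; verifying that algebraic simplification is a routine but necessary computation.

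Next I would handle the high-fronthaul regime $r\ge 1$. Here $\mu_1=\mu_2=0$, so Proposition \ref{prop:ach_pipe} immediately gives the upper bound $\delta_{\mathsf{P}}^*(\mu,r)\le \delta_{\mathsf{P-ZF}}=1$ for all $\mu\in[0,1]$, and the lower bound $\delta_{\mathsf{P}}^*(\mu,r)\ge 1$ is the trivial bound from the second term in \eqref{eq:pipe_glb} (equivalently, the NDT is always at least one by Remark \ref{rem:opsig}). This settles the high-fronthaul case. Alternatively one can note that $r\ge 1 \ge (1-\mu)\min\{M,K\}/M = (1-\mu)$ for all $\mu\ge 0$, so \eqref{eq:pipe_gen2} of Proposition \ref{prop:pipe_gen} applies directly.

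For the low-fronthaul regime $r\in(0,1]$, the cache axis splits into the three intervals $[0,\mu_1]$, $[\mu_1,\mu_2]$, $[\mu_2,1]$. On the two outer intervals, Proposition \ref{prop:pipe_gen} (equation \eqref{eq:pipe_gen1}) already gives the exact values $\delta_{\mathsf{P}}^*=\delta_{\mathsf{P-IA}}=(1-2\mu)/r$ on $[0,\mu_1]$ and $\delta_{\mathsf{P}}^*=\delta_{\mathsf{P-ZF}}=1$ on $[\mu_2,1]$ (here I should check that $r\le 1$ is exactly the condition ensuring $\mu_2 = 1-r \ge 0$, so the interval $[\mu_2,1]$ is nonempty and the statement is non-vacuous). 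The only remaining work is the middle interval $\mu\in[\mu_1,\mu_2]$, where Proposition \ref{prop:gap_pipe} only guarantees a factor-$2$ gap in general. The main obstacle is therefore to prove that, for $M=K=2$, the achievable NDT $\delta_{\mathsf{P-FS}}=(2-\mu)/(1+r)$ is actually \emph{optimal} on $[\mu_1,\mu_2]$, i.e., that the lower bound \eqref{eq:pipe_glb} is tight there. To do this I would evaluate the right-hand side of \eqref{eq:pipe_glb} for $M=K=2$ by enumerating $\ell\in\{0,1,2\}$: $\ell=0$ gives $(2-2\mu)/(2r)=(1-\mu)/r$; $\ell=1$ gives $(2-\mu)/(1+r)$; $\ell=2$ gives $(2-2\mu)/2 = 1-\mu$. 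Taking the maximum of these together with the trivial bound $1$, I expect the $\ell=1$ term $(2-\mu)/(1+r)$ to dominate precisely on the interval $[\mu_1,\mu_2]$ — indeed $\mu_1$ and $\mu_2$ should emerge exactly as the crossover points where $(2-\mu)/(1+r)$ overtakes $(1-\mu)/r$ (at $\mu=\mu_1$) and is in turn overtaken by the constant $1$ (at $\mu=\mu_2$). Confirming these two crossover computations, and checking that $\ell=2$ and the trivial bound never exceed $(2-\mu)/(1+r)$ on this interval, completes the argument; these are short inequality checks but they are the crux, since they convert the generic factor-$2$ statement of Proposition \ref{prop:gap_pipe} into an exact characterization. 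Finally, I would note for completeness that convexity of $\delta_{\mathsf{P}}^*$ in $\mu$ (Remark \ref{rem:pipe}) is consistent with the piecewise-linear shape obtained, and optionally remark that the three phases correspond to pure soft-transfer/IA, a soft-transfer plus ZF mixture, and pure cache-aided cooperation, respectively.
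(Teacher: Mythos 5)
Your approach is the same as the paper's: Proposition \ref{prop:pipe_gen} settles the outer cache intervals and the high-fronthaul regime, and the middle interval $[\mu_1,\mu_2]$ is closed by matching the achievable $\delta_{\mathsf{P-FS}}=(2-\mu)/(1+r)$ against the $\ell=1$ term of the lower bound \eqref{eq:pipe_glb}; that matching, together with your simplification of $\delta_{\mathsf{P-FS}}$, is correct and is all that is actually required, since any single choice of $\ell$ already yields a valid lower bound. However, your enumeration of the other $\ell$ values contains arithmetic slips: for $M=K=2$ the $\ell=0$ term of \eqref{eq:pipe_glb} is $(2-4\mu)/(2r)=(1-2\mu)/r$, not $(1-\mu)/r$, and the $\ell=2$ term is $2/2=1$, not $1-\mu$ (since $(M-\ell)(K-\ell)=0$). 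The first slip is not harmless as written: the crossover of $(2-\mu)/(1+r)$ with your $(1-\mu)/r$ occurs at $\mu=1-r=\mu_2$, not at $\mu_1$, so the "crossover verification" you present as the crux would fail; with the correct term $(1-2\mu)/r$ the crossover is indeed at $\mu_1=(1-r)/(2+r)$, consistent with the low-cache branch of the statement. Once these two evaluations are corrected (or simply omitted, as the paper does, in favor of citing only $\ell=1$), the argument is complete.
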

\begin{proof}
The proof of Corollary \ref{cor:mk22_pipe} is provided in Appendix \ref{ap_ssec:mk22_pipe}.
\end{proof}

The minimum NDT for a $2\times 2$ F-RAN is shown in Fig. \ref{fig:mk22_pipe} in the regime of low fronthaul rate, here with $r= 0.5$. The optimal strategy uses block-Markov encoding with cloud-aided soft transfer fronthaul in conjunction with cache-aided EN cooperation or coordination as for Proposition \ref{prop:ach_pipe}. We observe that, in contrast to serial fronthaul-edge transmission (see Corollary \ref{cor:mk22}), the optimal strategy leverages cloud resources for any given fronthaul rate $r>0$. Furthermore, in line with the discussion in Remark \ref{rem:cloud_pipe}, by using cloud resources, it is possible here to obtain the minimum NDT $\delta^*_{\mathsf{P}}(\mu,r) = 1$ for all $\mu\geq \mu_2$ as well as for $r\geq 1$.

\section{Generalizations and Open Problems}\label{sec:open}
In this section, we discuss some of the open problems and directions for future work on the topic of cloud and cache-aided content delivery in F-RAN architectures.

\subsection{Is an Equal Cache Allocation Optimal?}\label{ssec:unequalcache}
Throughout the paper, as per Definition \ref{def:pol}, we have assumed that each file $F_n$ is cached with the same maximum number of bits, namely $\mu L$, at each EN. Here, we aim at understanding if the minimum NDT could be potentially reduced by allocating a different number of bits to each file at the ENs under the relaxed constraints 
\begin{align}
& \sum_{m=1}^M H\left(S_{m,n} \right) \leq M \mu L, ~~\forall n \in [1:N],\label{eq:relM}
\end{align}
and
\begin{align}
& \sum_{n=1}^N H\left(S_{m,n} \right) \leq N \mu L, ~~\forall m \in [1:M]\label{eq:relN},
\end{align}
where the first constraints impose the per-file condition that the overall number of bits used to cache file $F_n$ across all ENs cannot exceed $M \mu L$ bits, while the second is the per-EN cache capacity constraint. By using the same arguments as in the proof of Proposition \ref{prop:glb} in Appendix \ref{ap:thglb}, it can be shown that the lower bound in Proposition \ref{prop:glb} holds also under the relaxed constraints \eqref{eq:relM}-\eqref{eq:relN}. To this end, we first note that the bounds in \eqref{eq:c2} remain unchanged since they do not make use of the cache constraints. For the proof of \eqref{eq:c1}, we refer to Appendix \ref{ap:thglb}. This argument shows that the strategy of allocating an equal number of bits to each file at every EN as in Definition \ref{def:pol} is in fact information-theoretically optimal under the assumption of uncoded cache placement.

\subsection{Caching with Inter-File Coding}
The results presented in this paper are developed under the assumption that the caching policy at the ENs do not allow for inter-file coding (as in \eqref{eq:cache}). For schemes with caching only at receivers, it has been shown that coding across files during cache placement has the potential to improve the system performance \cite{Maddah-Ali,improve_fund,aviksg-isit15,ghasemi}. To elaborate on the potential gains of inter-file coding for F-RANs, we observe that, under such more general caching policies, the joint entropy of the cache contents of each EN is generally bounded as $H(S_{m}) \leq \mu N L$, for all $m\in [1:M]$, instead of as $H(S_{m,n}) \leq \mu L$ for all $m\in[1:M]$ and $n\in[1:N]$ as is the case without inter-file coding. As a result, following the proof of Proposition \ref{prop:glb} in Appendix \ref{ap:thglb}, we can see that, the constraint \eqref{eq:c1} in Proposition \ref{prop:glb} is modified as
\begin{align}\label{eq:coding1}
 \ell \delta_E + (M-\ell) r \delta_F \geq K -(M-\ell)N\mu,
\end{align}
which yields strictly lower bounds on the minimum NDT. Whether this lower bound is achievable by caching strategies with inter-file coding remains an open problem.

\subsection{Imperfect CSI}\label{ssec:delcsi}
Another aspect that is left open by this study is the impact of imperfect CSI availability at the ENs on the minimum NDT. To elaborate on this point, we consider a cache-only F-RAN and we assume that within a transmission interval $i$, the channel $\mathbf{H}_i^t$ varies across every channel use $t$ according to an i.i.d. process. When CSI is delayed, at any time $t$ on the $i$th transmission interval, the ENs only have access to the CSI of the previous $t-1$ channel uses, namely $\mathbf{H}_i^1,\mathbf{H}_i^2,\ldots,\mathbf{H}_i^{t-1}$. 

For illustration, we consider a cache-only F-RAN with $M=K=2$ and $N\geq 2$ with $\mu\in[1/2,1]$. For the case of perfect CSI, the minimum NDT can be characterized as in Corollary \ref{cor:mk22}. Next, we elaborate on the achievable NDT results for the case of delayed and no CSI.
\begin{figure}[!t]
\begin{centering}
\includegraphics[width=3.85in,height=3.4in]{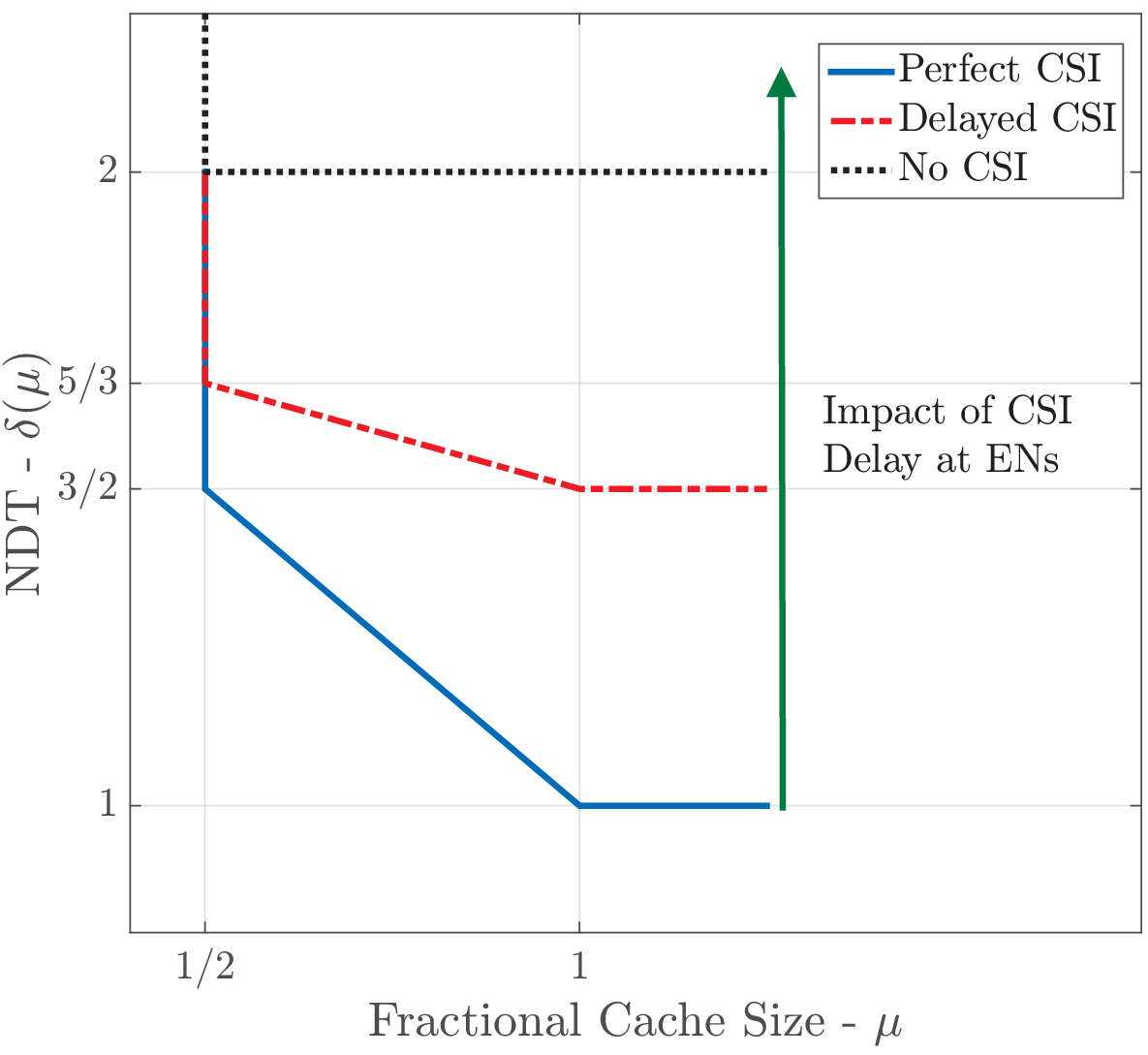}
\par\end{centering}
\protect\caption{ Effect of delayed or no CSI on the NDT for $M=K=2$.}
\label{fig:delayed}\vspace{-10pt}
\end{figure}

\subsubsection{\textit{ Delayed CSI at ENs}} For the case of delayed CSI, consider the corner point $\mu=1/2$ where the system behaves like a $2\times 2$ X-channel. The maximum known sum-DoF for the $2\times 2$ X-channel with delayed CSI is $6/5$ \cite{delayed_csit_Xch}. As a result, an NDT of $\delta_{\mathsf{Ach}}(\mu,0) = 5/3$ is achievable by Remark \ref{rem:dof}. Compared to the perfect CSI case, for which the NDT is $3/2$, this achievable NDT thus incurs a loss due to delayed CSI. Next, consider the corner point $\mu=1$, where the system reduces to a $2\times 2$ broadcast channel with delayed CSI. The maximum sum-DoF for such a system is $4/3$ \cite{delayed_csit_BC}, i.e., a NDT of $\delta_{\mathsf{Ach}}(\mu,0) = 3/2$ is achievable, which is larger than the NDT of $1$ with perfect CSI. 

\subsubsection{\textit{No CSI at ENs}} In case of no CSI, it is known that the optimal strategy on the edge channel is to transmit using time-division to each user in a separate slot \cite{Vaze_BC_Dof}. Therefore a sum-DoF of $1$ i.e., an NDT of $2$ can be achieved, which is hence optimal for all values of $\mu\in [1/2,1]$ and shows a significant loss as compared to the cases with full or delayed CSI as shown in Fig. \ref{fig:delayed}.

Quantifying the impact of delayed CSI on a general $M\times K$ F-RAN with cloud and cache-aided delivery, as considered in this work, remains an area of future work.

\section{Conclusions}\label{sec:conc}
In this paper, we presented a latency-centric study of the fundamental information-theoretic limits of cloud and cache-aided wireless networks, which we referred to as fog radio access networks (F-RANs). To this end, we introduced a new metric, namely the normalized delivery time (NDT), which measures the worst-case end-to-end latency required to deliver requested content to the end users in the high-SNR regime. We developed a converse result for the NDT of a general F-RAN with arbitrary number of ENs and users and then presented achievable schemes which leverage both cache and cloud resources. We characterized the minimum NDT for cloud-only F-RANs for all problem parameters; and for cache-only F-RANs in the regime of extremal values of the fractional cache size. Furthermore, we showed that the proposed achievable schemes are approximately optimal to within a constant factor of $2$ for all parameter values for the general F-RAN with fronthaul and edge-caching. We elaborated on two case studies, consisting of F-RANs with two or three ENs and users and (partially) characterized the NDT for these systems using the proposed upper and lower bounds. We also considered an alternative F-RAN model with pipelined fronthaul-edge transmissions. We presented a general lower bound on the NDT and proposed achievable schemes which are shown to be approximately optimal in terms of NDT to within a constant factor of $2$. Open problems were finally presented to highlight the richness of the problem introduced in this paper.

\appendices

\section{Proof of Proposition \ref{prop:glb}}\label{ap:thglb}
In this section, we present a detailed proof of Proposition \ref{prop:glb}. To obtain a lower bound on the NDT, we fix a specific request vector $\mathbf{D}$, namely one for which all requested files $(F_1,...,F_K)=F_{[1:K]}$ are different, and a given channel realization $\mathbf{H}$. Note that this is possible given the assumption $N\geq K$. We denote as $T_F$ and $T_E$ the fronthaul and edge transmission latencies, as per Definition \ref{def:pol} for any given feasible policy $\pi = (\pi_c,\pi_f,\pi_e,\pi_d)$ which guarantees a vanishing probability of error $P_e$ as $L\rightarrow\infty$ for the given request $\mathbf{D}$, channel $\mathbf{H}$ and fronthaul rate $C_F = r\log(P)$. Our goal is to obtain a lower bound on the minimum NDT $\delta^*(\mu,r)$ for any $r\geq 0$. To this end, consider the fronthaul messages $\mathbf{U}^{T_F}_m$ which are  $1\times T_F$ row vectors and the corresponding channel outputs in \eqref{eq:chop}, where $\mathbf{Y}_k^{T_E}, \mathbf{X}_m^{T_E}$ and $\mathbf{n}_k^{T_E}$ are $1\times T_E$ row vectors.

For ease of exposition, we next introduce the following notation which we use throughout the appendix. For any integer pair $(a,b)$ with $ a\leq b\leq K$, let $\mathbf{Y}^{T_E}_{[a:b]}$ be the $(b-a+1)\times T$ matrix of channel outputs of a subset $[a:b]$, of receivers. The notation is also used for the channel inputs $\mathbf{X}^{T_E}$ and noise $\mathbf{n}^{T_E}$. Furthermore, for any integers $1\leq a\leq b \leq K$ and $1\leq c\leq d\leq M$, we define the following sub-matrix of the channel matrix $\mathbf{H}$:
\begin{align*}
\mathbf{H}_{[a:b]}^{[c:d]} = \begin{bmatrix} h_{a,c} & h_{a,c+1} & \cdots & h_{a,d} \\
																						 h_{a+1,c} & h_{a+1,c+1} & \cdots & h_{a+1,d} \\
																						 \vdots   & \vdots & \ddots & \vdots \\
																						 h_{b,c} & h_{a,c+1} & \cdots & h_{b,d}
																						\end{bmatrix}.
\end{align*}

\n Using this notation, we can represent the channel outputs at all $K$ receivers as
\begin{align}\label{eq:chop2}
&\mathbf{Y}_{[1:K]}^{T_E} = \mathbf{H}_{[1:K]}^{[1:M]}~\mathbf{X}_{[1:M]}^{T_E} +  \mathbf{n}_{[1:K]}^{T_E},
\end{align}

To obtain the constraint \eqref{eq:c1}, we make the following key observation. Given any set of $\ell\leq \min\{M,K\}$ output signals $\mathbf{Y}^{T_E}_k$, say $\mathbf{Y}^{T_E}_{[1:\ell]}$, and the content of any $(M-\ell)$ caches, say $S_{[1:(M-\ell)]}$ and their corresponding fronthaul messages $\mathbf{U}^{T_F}_{[1:(M-\ell)]}$, all transmitted signals $\mathbf{X}^{T_E}_{[1:M]}$, and hence also all the files $F_{[1:K]}$, can be resolved in the high-SNR regime. This is because: $(i)$ from the cache contents $S_{[1:(M-\ell)]}$ and fronthaul messages $\mathbf{U}^{T_F}_{[1:(M-\ell)]}$, one can reconstruct the corresponding channel inputs $\mathbf{X}^{T_E}_{[1:(M-\ell)]}$; $(ii)$ neglecting the noise in the high-SNR regime, the relationship between the variables $\mathbf{Y}^{T_E}_{[1:\ell]}$ and the remaining inputs $\mathbf{X}^{T_E}_{[(M-\ell):M]}$ is given almost surely by an invertible linear system as in \eqref{eq:chop}. We use this argument in the following:\vspace{-15pt}

\begin{align}\label{eq:p1}
    &KL = H\left(F_{[1:K]}\right) \nonumber\\
	  &\myeq{(a)} H\left(F_{[1:K]}| F_{[K+1:N]}\right)\nonumber\\
    &= I\left(F_{[1:K]}; \mathbf{Y}^{T_E}_{[1:\ell]},\mathbf{U}^{T_F}_{[1:(M-\ell)]},S_{[1:(M-\ell)]}|F_{[K+1:N]}\right) + H\left(F_{[1:K]}|\mathbf{Y}^{T_E}_{[1:\ell]},\mathbf{U}^{T_F}_{[1:(M-\ell)]},S_{[1:(M-\ell)]},F_{[K+1:N]} \right)
\end{align}
where step \textsf{(a)} follows from the fact that all files are independent of each other. The first term in \eqref{eq:p1} can be upper bounded as follows:
\begin{align}\label{eq:A}
\hspace{-25pt} &I\left(F_{[1:K]}; \mathbf{Y}^{T_E}_{[1:\ell]},\mathbf{U}^{T_F}_{[1:(M-\ell)]},S_{[1:(M-\ell)]}|F_{[K+1:N]}\right) \nonumber\\
& =I\left(F_{[1:K]}; \mathbf{Y}^{T_E}_{[1:\ell]} |F_{[K+1:N]}\right) + I\left(F_{[1:K]}; \mathbf{U}^{T_F}_{[1:(M-\ell)]},S_{[1:(M-\ell)]}|\mathbf{Y}^{T_E}_{[1:\ell]},F_{[K+1:N]}\right) \nonumber\\
&\leq I\left(F_{[1:K]}; \mathbf{Y}^{T_E}_{[1:\ell]} |F_{[K+1:N]}\right) + I\left(F_{[1:K]}; \mathbf{U}^{T_F}_{[1:(M-\ell)]},S_{[1:(M-\ell)]},F_{[1:\ell]}|\mathbf{Y}^{T_E}_{[1:\ell]},F_{[K+1:N]}\right) \nonumber\\
& = I\left(F_{[1:K]}; \mathbf{Y}^{T_E}_{[1:\ell]} |F_{[K+1:N]}\right) + I\left(F_{[1:K]}; F_{[1:\ell]}|\mathbf{Y}^{T_E}_{[1:\ell]},F_{[K+1:N]}\right) \nonumber\\
&~~~~~~+ I\left(F_{[1:K]}; \mathbf{U}^{T_F}_{[1:(M-\ell)]},S_{[1:(M-\ell)]}|\mathbf{Y}^{T_E}_{[1:\ell]},F_{[1:\ell]\cup[K+1:N]}\right) \nonumber\\
&\myleq{(a)} I\left(F_{[1:K]}; \mathbf{Y}^{T_E}_{[1:\ell]} |F_{[K+1:N]}\right) + H\left( F_{[1:\ell]}|\mathbf{Y}^{T_E}_{[1:\ell]} \right)  \nonumber\\
&~~~~~~+ H\left(\mathbf{U}^{T_F}_{[1:(M-\ell)]},S_{[1:(M-\ell)]}|\mathbf{Y}^{T_E}_{[1:\ell]},F_{[1:\ell]\cup[K+1:N]}\right) - H\left(\mathbf{U}^{T_F}_{[1:(M-\ell)]},S_{[1:(M-\ell)]}|\mathbf{Y}^{T_E}_{[1:\ell]},F_{[1:N]}\right)\nonumber\\
& \myleq{(b)} h\left(\mathbf{Y}^{T_E}_{[1:\ell]}\right) - h\left(\mathbf{Y}^{T_E}_{[1:\ell]}|F_{[1:N]}  \right) + L\epsilon_L  \nonumber\\
&~~~~~~+ H\left(\mathbf{U}^{T_F}_{[1:(M-\ell)]},S_{[1:(M-\ell)]}|F_{[1:\ell]\cup[K+1:N]}\right) - H\left(\mathbf{U}^{T_F}_{[1:(M-\ell)]},S_{[1:(M-\ell)]}|\mathbf{Y}^{T_E}_{[1:\ell]},F_{[1:N]}\right) \nonumber\\
&\myleq{(c)} \ell {T_E}\log \Big(2\pi e \big(\Lambda P + 1\big)\Big)  - h\left(\mathbf{n}^{T_E}_{[1:\ell]}| F_{[1:N]}\right) + L\epsilon_L + H\left(\mathbf{U}^{T_F}_{[1:(M-\ell)]} \right) +\sum_{i=1}^{\Scale[0.6]{(M-\ell)}} H\Big(S_{i,[1:N]}|F_{[1:\ell]},F_{[K+1:N]}\Big) \nonumber\\
&\myleq{(d)} \ell {T_E}\log \big(\Lambda P + 1\big) + L\epsilon_L + (M-\ell)(K-\ell) \mu L + (M-\ell) rT_F\log(P) ,
\end{align}
\n where, the steps in \eqref{eq:A} are explained as follows:
\begin{itemize}
\item Step \textsf{(a)} follows from careful expansion of the second term in the previous step and that conditioning reduces entropy.
\item Step \textsf{(b)} follows from the fact that $\mathbf{Y}^{T_E}_{[1:\ell]}$ are continuous random variables and that dropping the conditioning in the first term increases entropy. We apply Fano's inequality to the second term where $\epsilon_L$ is a function, independent of $P$, which vanishes as $L\rightarrow \infty$.
\item Step \textsf{(c)} can be explained as follows. The first term is upper bounded by the use of Lemma \ref{lem:1} detailed in Appendix \ref{ap:lemma}. The parameter $\Lambda$ is a constant dependent only on the channel parameters. The last term is zero since the cache contents $S_{[1:(M-\ell)]}$ and fronthaul messages $\mathbf{U}^{T_F}_{[1:(M-\ell)]}$ are functions of the library of files $F_{[1:N]}$. Moreover, given all the files, the channel outputs are a function of the channel noise at each receiver.
\item Step \textsf{(d)} follows from the fact that the channel noise is i.i.d. across time and distributed as $\mathcal{N}(0,1)$.
\end{itemize}
Next, the second term in \eqref{eq:p1} can be upper bounded by use of Lemma \ref{lem:termB} as follows:
\begin{align}\label{eq:termB}
&H\left(F_{[1:K]}|\mathbf{Y}^{T_E}_{[1:\ell]},\mathbf{U}^{T_F}_{[1:(M-\ell)]},S_{[1:(M-\ell)]},F_{[K+1:N]} \right) \leq~~ L\epsilon_L + T_E\log\det \left( \mathbf{I}_{\Scale[0.7]{[K-\ell]}} + \tilde{\mathbf{H}}\tilde{\mathbf{H}}^H\right),
\end{align}
where $\epsilon_L$ is a function, independent of $P$ and vanishes as $L\rightarrow \infty$. Furthermore, the term $\log\det \left( \mathbf{I}_{\Scale[0.7]{[K-\ell]}} + \tilde{\mathbf{H}}\tilde{\mathbf{H}}^H\right)$ is independent of signal power $P$ and file size $L$ and is dependent only on the noise variance and the channel coefficients. The proof of \eqref{eq:termB} follows from Lemma \ref{lem:termB} which is detailed in Appendix \ref{ap:lemma}. Substituting \eqref{eq:A} and \eqref{eq:termB} into \eqref{eq:p1}, we have
\begin{align}\label{eq:p2}
KL &\leq \ell {T_E}\log \left( \Lambda P+1\right) + (M-\ell)(K-\ell) \mu L + (M-\ell) rT_F\log(P)  \nonumber\\
	 &~~~~~~~~~~~~~~~~+ L\epsilon_L + {T_E}\log\det \left( \mathbf{I}_{\Scale[0.7]{[K-\ell]}} + \tilde{\mathbf{H}}\tilde{\mathbf{H}}^H\right).
\end{align}

\n Rearranging \eqref{eq:p2}, we get the following
\begin{align}\label{eq:p3}
& \ell\delta_E \left[1 + \frac{ \ell\log\left(\Lambda + \frac{1}{P}\right) +  \log\det \left( \mathbf{I}_{\Scale[0.7]{[K-\ell]}} + \tilde{\mathbf{H}}\tilde{\mathbf{H}}^H\right) }{\ell\log(P)} \right]  + (M-\ell) r \delta_F \geq {K - (M-\ell)(K-\ell) \mu  - \epsilon_L}.
\end{align}

Now, using \eqref{eq:p3}, we first take the limit of ${L\rightarrow\infty}$ such that $\epsilon_L \rightarrow 0$  as $P_e \rightarrow 0$. Further, taking the limit ${P\rightarrow \infty}$, for the high-SNR regime, we arrive at \eqref{eq:c1}:
\begin{align}\label{eq:proof_c1}
\ell \delta_E + (M-\ell) r \delta_F \geq K -(M-\ell) (K-\ell) \mu,
\end{align}
where the multiplier of $\ell \delta_E$ converges to $1$ under the limit $P\rightarrow \infty$.

Next, we prove that the constraint $\delta_E\geq 1$ in \eqref{eq:c2} holds under the decodability constraint for file delivery to all users irrespective of the value of the fronthaul rate $r$. To this end, without loss of generality, we consider that the users $[1:K]$ demand the first $K$ distinct files $F_{[1:K]}$ i.e., $\mathbf{D} = (d_1,d_2,\ldots,d_K) = (1,2,\ldots,K)$. Next, consider the following set of inequalities:
\begin{align}\label{eq:proof_c2}
KL &= H\left(F_{[1:K]}\right) \nonumber\\
	 &= I\left(F_{[1:K]};\mathbf{Y}^{T_E}_{[1:K]}\right) + H\left(F_{[1:K]}|\mathbf{Y}^{T_E}_{[1:K]}\right)\nonumber\\
	 & \myleq{(a)} h\left(\mathbf{Y}^{T_E}_{[1:K]}\right) - h\left(\mathbf{Y}^{T_E}_{[1:K]}|F_{[1:K]}\right) + L\epsilon_L\nonumber\\
	 & = h\left(\mathbf{Y}^{T_E}_{[1:K]}\right) - h\left(\mathbf{n}^{T_E}_{[1:K]}\right)+ L\epsilon_L\nonumber\\
	 & \myleq{(b)} KT_E\log \left( \Lambda P+1\right) + L\epsilon_L,
\end{align}	
where step \textsf{(a)} follows from a Fano's Inequality and the fact that all requested files should be decoded by the received signals. Step \textsf{(b)} follows from the use of Lemma \ref{lem:1} (see Appendix \ref{ap:lemma}). Again taking the limits $P\rightarrow\infty$ and $L\rightarrow \infty$, and rearranging, we arrive at the constraint $\delta_E\geq 1$. Note that, by substituting $\ell=M$ in \eqref{eq:c1}, we get the following lower bound on the edge latency: \[\delta_E \geq K/M, ~~~\forall~K,M. \] This bound is tighter for $M\leq K$, while the constraint $\delta_E\geq 1$, proved here, supersedes the bound for the case when $M\geq K$.
Using constraints \eqref{eq:c1}-\eqref{eq:c2} to minimize the sum-latency, i.e., using linear combinations of the family of constraints in \eqref{eq:c1} and \eqref{eq:c2} over all possible choices of $\ell\in [0:\min\{M,K\}]$, gives the family of lower bounds for the $M\times K$ cache-aided F-RAN.

We conclude this section by addressing the scenario discussed in Section \ref{ssec:unequalcache} in which the relaxed cache placement constraints \eqref{eq:relM}-\eqref{eq:relN} are imposed. To prove \eqref{eq:c1}, under the relaxed constraints, we consider all possible \textit{sets} of $(M-\ell)$ ENs and follow steps similar to \eqref{eq:p1}-\eqref{eq:A}. Considering the step \textsf{(c)} in \eqref{eq:A} and using the different sets of $(M-\ell)$ ENs to decode the files, we will obtain ${M\choose (M-\ell)}$ different inequalities of this form. Summing and symmetrizing over all the ${M\choose (M-\ell)}$ inequalities and using the constraint in \eqref{eq:relM} to upper bound the overall number of bits required to store $(K-\ell)$ files across the $M$ ENs yields a bound which is identical to \eqref{eq:c1}. This shows that the strategy of allocating an equal number of bits to each file at every EN as in Definition \ref{def:pol} is in fact information-theoretically optimal under the assumption of uncoded cache placement.

\section{Proof of Proposition \ref{prop:stf}}\label{ap:soft}

In order to prove Proposition \ref{prop:stf}, we first discuss the NDT performance of a scheme that uses fronthaul and wireless channels in the standard fashion that is adopted, for instance, in the CPRI fronthaul interface in C-RANs \cite{cpri,SimeoneMPSY15}. In this scheme, the cloud quantizes the encoded baseband samples, and all the ENs \textit{simultaneously} transmit the quantized baseband signals. We then generalize this policy by allowing for a more general transmission schedule in which different \textit{clusters} of ENs can transmit on the wireless channel at distinct time intervals as introduced in Section \ref{ssec:cloud} (cf. Fig. \ref{fig:stf}). The proof is divided into two parts as follows.
\subsection{Standard Soft-Transfer Fronthauling}
Here, we prove that an NDT equal to
\begin{align}\label{eq:softNDT}
\delta(\mu,r) = \frac{K}{\min\{M,K\}}\left(1 + \frac{1}{r}\right),
\end{align}
is achievable by means of standard serial soft-transfer fronthauling for any fractional cache size $\mu\geq 0$ and for any fronthaul rate $r\geq 0$. To interpret \eqref{eq:softNDT}, we note that the NDT $\delta(\mu,r) = K/\min\{M,K\}$ can be achieved by means of ZF-beamforming in an ideal system in which there is either full caching, i.e., $\mu = 1$, or no fronthaul capacity limitations, i.e., $r \rightarrow \infty$. In fact, in such systems, full cooperation is possible at the ENs for any users' demand vector, including the worst case in which users request distinct files, and hence transmission at the maximum per-user multiplexing gain $\min\{M,K\}/K$ can be attained. The achievable NDT \eqref{eq:softNDT} hence shows a multiplicative penalty term equal to $1 + 1/r$ due to fronthaul capacity limitations. 

The proof of \eqref{eq:softNDT} relies on the use of the fronthaul and transmission policies introduced in Example \ref{ex:stf}. Note that caching is not used, in accordance with the assumption that $\mu$ may be zero. The cloud encodes the signals using ZF
beamforming under a power constraints smaller than $P$ that will be specified below. The resulting baseband signals are quantized and sent to the ENs on the fronthaul links. The ENs transmit simultaneously the respective received quantized samples on the wireless channel. Reception at the users is affected by the fronthaul quantization noise, as well as by the channel noise. If the quantization rate is properly chosen, it can be proved that the achievable NDT is \eqref{eq:softNDT}, where the term $K/\min\{M,K\}$ is the edge-NDT in \eqref{eq:dele}, which is the same as for the ideal ZF scheme, and the term $K/(r \min\{M,K\})$ is the fronthaul-NDT \eqref{eq:delf}. A more detailed discussion is provided next.

In the cloud-based scheme under study, the cloud performs ZF precoding, producing signal $\bar{X}_i$ for each $\text{EN}_i$ with power constraint $\bar{P} = E[|\bar{X}_i|^2]$. The signal $\bar{X}_i$ is quantized to obtain the signal $X_i$ that is to be transmitted by $\text{EN}_i$ as
\begin{align}\label{eq:baseband}
X_i = \bar{X}_i + Z_i,
\end{align}
where $Z_i \sim \mathcal{CN}(0,\sigma^2)$ represents the quantization noise with zero mean and variance $\sigma^2$. In order to satisfy the power constraint $P$, we enforce the condition
\begin{align}\label{eq:power}
P = \bar{P} + \sigma^2.
\end{align}
Furthermore, let $B$ denote the number of bits used for each baseband signal sample on the fronthaul link. From rate-distortion arguments \cite{cover} and using \eqref{eq:baseband}, we obtain the condition
\begin{align}\label{eq:ratedist}
I\left(X_i;\bar{X}_i\right) &~~=~~ \log_2\left(1 + \frac{\bar{P}}{\sigma^2}\right) = B\nonumber\\
\text{i.e.,}~~~~~~~~~~~~ \sigma^2 &~~=~~ \frac{\bar{P}}{2^B - 1}.
\end{align}
Therefore, from \eqref{eq:power} and \eqref{eq:ratedist}, we obtain the power constraint on the precoded signal as
\begin{align}\label{eq:powerconst}
\bar{P} = P\left(1 - 2^{-B}\right),
\end{align}
and the quantization noise power as
\begin{align}\label{eq:quantnoise}
\sigma^2 = 2^{-B} P.
\end{align}
The quantization noise terms $Z_i$ for all ENs $i \in [1:M]$, contribute to raising the noise level at each user. In particular, for any user $k\in[1:K]$, the power of the effective noise on the received signals in \eqref{eq:chop} is given by
\begin{align}
1 + \sigma^2 \sum_{m=1}^{M} |h_{km}|^2 = 1 + \sigma^2 G,
\end{align}
where $G = \sum_{m=1}^{M} |h_{km}|^2$. Normalizing the received signal so that the variance of the effective noise is $1$, using \eqref{eq:powerconst} and \eqref{eq:quantnoise}, we obtain an equivalent signal model in which the effective power constraint is
\begin{align}\label{eq:effpow}
\frac{\bar{P}}{1 + \sigma^2 G} = \frac{P(1 - 2^{-B})}{1 + 2^{-B}PG}.
\end{align}
Now, setting $B = \log(P)$, the effective power becomes $(P-1)/(1+G)$, which scales linearly with $P$.
Using the proposed soft-transfer fronthaul scheme, it follows that the fronthaul latency is given by 
\begin{align}\label{eq:front_lat1}
T_F = T_E\frac{B}{C_F},
\end{align} 
since $BT_E$ bits need to be sent on each fronthaul link at a rate of $C_F = r\log(P)$ to represent the quantized signals. It follows that the total latency of this scheme is
\begin{align}\label{eq:totlat}
T_E + T_F  = T_E \left(1 + \frac{B}{C_F}\right) \myeq{(a)} T_E\left(1 + \frac{1}{r}\right),
\end{align}
where $(\mathsf{a})$ follows from the choice of $B = \log(P)$. Furthermore, in the high-SNR regime we have the following limit:
\begin{align}\label{eq:te}
\lim_{P\rightarrow \infty}\lim_{L\rightarrow \infty} \frac{T_E \log\left((P-1)(1+G)\right)}{L} = \frac{K}{\min\{M,K\}},
\end{align}
due to achievability of the NDT $K/\min\{M,K\}$ in the ideal ZF system mentioned above and due to the effective noise power $(P-1)/(1+G)$ for the scheme at hand. We can thus conclude our proof by computing the NDT
\begin{align}\label{eq:ndtsoft}
\lim_{P\rightarrow \infty}\lim_{L\rightarrow \infty} \frac{(T_E + T_F)\log(P)}{L} &= \left(1 + \frac{1}{r}\right) \lim_{P\rightarrow \infty}\lim_{L\rightarrow \infty}  \frac{T_E \log(P)}{L}\nonumber\\
&= \left(1 + \frac{1}{r}\right) \frac{K}{\min\{M,K\}},
\end{align}
where the second equality follows due to \eqref{eq:te}.

\subsection{Soft-Transfer Fronthauling with Clustering}
Here, we prove that the following NDT is achievable by means of a generalized soft-transfer fronthaul scheme based on sequential scheduling of distinct clusters of ENs on the wireless channel, when the number $M$ of ENs is larger than the number $K$ of users. In particular we show that for any $M\geq K$, an NDT of
\begin{align}\label{eq:softNDT2}
\delta(\mu,r) = 1 + \frac{K}{Mr}
\end{align}
is achievable by means of soft-transfer fronthauling in conjunction with EN clustering with sequential scheduling for any fractional cache size $\mu \geq 0$ and any fronthaul rate $r\geq 0$.

We start by observing that, if $M\geq K$, the NDT for the ideal system with full caching or unlimited fronthaul is given by $\delta = 1$, which is achieved by ZF-beamforming. Comparing the NDT \eqref{eq:softNDT2} with \eqref{eq:softNDT}, and recalling the discussion in the previous subsection, we can conclude that clustering and sequential scheduling of ENs allows one to reduce the normalized latency associated with the fronthaul transmission from $1/r$ to $K/(Mr)$. We also emphasize that, unlike the NDT in \eqref{eq:softNDT}, which is based on standard C-RAN fronthauling, the improved NDT \eqref{eq:softNDT2} tends to the ideal NDT $\delta = 1$ when the number of transmit
antennas grows large. As detailed next, this is due to a novel use of the fronthaul in soft-transfer mode, whereby quantized baseband signals received at the same time on the fronthaul by different ENs can be scheduled at different times on the wireless channel.

We first present the proposed scheme for the case in which $M$ is a multiple of $K$, so that $M/K$ is an integer number, and then we generalize the strategy for any $M$. As explained in Section \ref{ssec:cloud}, the main idea is to partition the ENs into $M/K$ disjoint clusters of $K$ ENs and to schedule each cluster for a time equal to $T_E K/M$, that is, on one of $M/K$ equal time intervals dividing $T_E$. Note that the fact that $K$ ENs are active at any given time enables the use of ZF-beamforming for all time intervals on the wireless channel. In particular, we can use the same scheme based on fronthaul quantization presented in the previous subsection with a key caveat: each EN needs to receive only $T_EK/M$ baseband samples, and hence the fronthaul latency is 
\begin{align}\label{eq:front_lat2}
T_F = T_E\frac{BK }{M C_F} ,
\end{align}
i.e., the fronthaul latency is $M/K$ times smaller than the latency in \eqref{eq:front_lat1} for the scheme discussed in the previous section. Following the same reasoning as in \eqref{eq:te}-\eqref{eq:ndtsoft} concludes the proof of \eqref{eq:softNDT2} for the case of $M/K$ being an integer number.

We consider now, the more general case in which $M/K \geq 1$ may not be an integer. Here, we proceed by clustering the ENs into all possible ${M \choose K}$ subsets of $K$ ENs, and then scheduling each cluster into distinct time intervals of duration $T_E/{M \choose K}$. Note that, unlike the case with integer $M/K$, here the clusters of ENs overlap. The number of samples that each EN needs to receive on its fronthaul is equal to
\begin{align}
T_E {M-1 \choose K-1}/{M\choose K} = T_E \frac{K}{M},
\end{align}
since each EN participates in ${M-1 \choose K-1 }$ clusters and the fronthaul latency is again given by \eqref{eq:front_lat2}. Following the same arguments above leads to the NDT in \eqref{eq:softNDT2}. 
Finally, combining the fronthaul latency expressions in \eqref{eq:front_lat1} and \eqref{eq:front_lat2}, we have
\begin{align}\label{eq:front_lat}
T_F = T_E\frac{B\min\{M,K\}}{MC_F}.
\end{align}
 Using this and following the same arguments as in the previous cases leads to the NDT in \eqref{eq:stf} which completes the proof of Proposition \ref{prop:stf}.

\section{Proof of Proposition \ref{prop:corner}}\label{ap:corner}

To prove Proposition \ref{prop:corner}, we expound on the minimum NDT for the two extremal values of fractional cache size $\mu\in\{1/M,1\}$. For $\mu = 1/M$, we substitute $\ell = 1$ in \eqref{eq:cor-ca} to get
\begin{align}\label{eq:cp1lb}
\delta^*(1/M,0) \geq K - \frac{(M-1)(K-1)}{M} = \frac{M+K-1}{M}.
\end{align}
To obtain an upper bound on NDT, consider the cache-aided EN coordination scheme achieving the NDT $\delta_{\mathsf{Ca-IA}}$ given in \eqref{eq:NDT1bM} as discussed in Lemma \ref{lem:ach1M}. Thus, we have the upper bound
\begin{align}\label{eq:cp1ub}
& \delta^*(1/M,0) \leq \delta_{\mathsf{Ca-IA}} = \frac{M+K-1}{M}.
\end{align}
Combining \eqref{eq:cp1lb} and \eqref{eq:cp1ub} shows that the lower bound in Corollary \ref{cor:lb} is tight at $\mu=1/M$. Next, considering the NDT at $\mu=1$, substituting $\ell = \min\{M,K\}$ into \eqref{eq:cor-ca}, we get \vspace{-5pt}
\begin{align}\label{eq:cp2lb}
\delta^*(1,0) \geq \frac{K}{\min\{M,K\}}, ~~ \text{for} ~~ r = 0.
\end{align}
Again, when $\mu=1$, consider the cache-aided EN cooperation scheme leveraging ZF-beamforming achieving the NDT $\delta_{\mathsf{Ca-ZF}}$ given in \eqref{eq:NDT1} as discussed in Lemma \ref{lem:ach1}. Using this, we have the upper bound
\begin{align}\label{eq:cp2ub}
&~\delta^*(1,0) \leq \delta_{\mathsf{Ca-ZF}} = \frac{K}{\min\{M,K\}}.
\end{align}
Combining \eqref{eq:cp2lb} and \eqref{eq:cp2ub}, shows that the lower bound in Corollary \ref{cor:lb} is tight at $\mu=1$. This concludes the proof of Proposition \ref{prop:corner}.


\section{Proof of Proposition \ref{prop:gap}}\label{ap:gap}

In this section, we present the proof of the approximate optimality of the achievable schemes presented in Section \ref{sec:ub}. To this end, we consider two regimes for the fractional cache size $\mu$ namely low-cache regime with $\mu \in [0, 1/M]$ and high-cache regime with $\mu \in [1/M, 1]$. Next, we consider each of the two regimes separately.

\n \textbf{\textit{Low-Cache Regime} $\bm{\left(\mathit{\mu \in [0, 1/M]}\right)}$:} For the low cache size regime, we consider two different cases where $(i)$ the number of users exceeds the number of ENs, i.e., $M\leq K$; and $(ii)$ the number of ENs exceeds the number of users, i.e., $M \geq K$. Next, we treat each of the two cases separately.

\n $\bullet$~ \textbf{\textit{Case} $\bm{\mathit{1}\left(\mathit{M\leq K}\right)}$:}~
For the case when the number of users exceed the number of ENs, we consider two different subcases: $(i)$ a high fronthaul regime with $r\geq 1$; and $(ii)$ a low fronthaul regime with $r\in(0,1]$. We consider each of these regimes separately.

\underline{\textit{High Fronthaul Regime $\mathit{r\geq 1}$}:}
~In this regime, consider the achievable NDT in \eqref{eq:ndtub3}. We have
\begin{align}
\delta_{\mathsf{Ach}}(\mu,r)\leq ~\frac{K}{M} + (1-\mu)\frac{K}{Mr} \leq~ \frac{K}{M}\left(1 + \frac{1}{r}\right).
\end{align}
Consider the LP in Theorem \ref{prop:glb} and the fact that any lower bound on the solution of this LP is also a valid lower bound on the minimum NDT. Thus, substituting $\ell = M$ in constraint \eqref{eq:c1} and using the fact that $\delta_F\geq 0$, we have
\begin{align}
\delta^*(\mu,r) \geq \delta_E + \delta_F \geq \frac{K}{M}.
\end{align}
Thus, we have
\begin{align}\label{eq:gap_h_front}
\frac{\delta_{\mathsf{Ach}}(\mu,r)}{\delta^*(\mu,r)} \leq \left(1 + \frac{1}{r}\right) \leq 2,
\end{align}
for any fronthaul rate $r\geq 1$. Thus the proposed schemes are approximately optimal to within a factor of $2$ for any parameter values of $M,K$ in the high fronthaul regime.

\underline{\textit{Low Fronthaul Regime $\mathit{r\in (0, 1]}$}:}~
For the low fronthaul regime of $r\in(0,1]$, we first divide the fronthaul rate into multiple non-overlapping regimes based on the number of ENs $M$ in the F-RAN as shown in Fig. \ref{fig:front_reg}, which indicates that for $r\in(0,1/(M-1)]$, the result in Proposition \ref{prop:lowmu} characterizes the minimum NDT for any $M \leq K$. Therefore, we focus our attention to the remaining regimes of interest for which $ r \in (1/(M-1), 1]$. As illustrated in Fig. \ref{fig:front_reg}, we further sub-divide these intervals into the sub-intervals
\begin{align}\label{eq:r_reg}
r \in \left[\frac{\ell - 1}{\left(M - \ell + 1 \right)}~ ,~\frac{\ell}{\left(M - \ell\right)}\right],
\end{align}
indexed by $\ell \in [1:\min\{M,K\}]$. 


\begin{figure}[!t]
\centering
\includegraphics[width = 5.5in, height=1.65in]{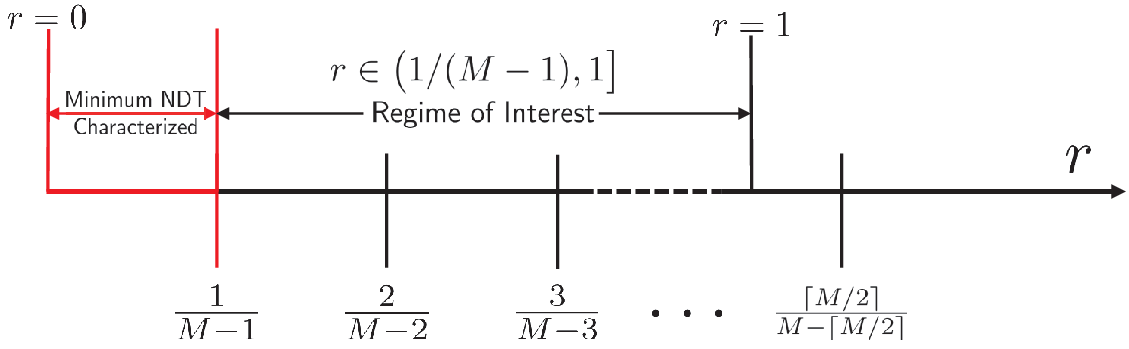}
\caption{Division of fronthaul rate $r\in(0,1]$ into parametrized regimes. }
\label{fig:front_reg}\vspace{-10pt}
\end{figure}
For each sub-interval indexed by $\ell-1$ and $\ell$, we obtain a lower bound on the minimum NDT by considering the constraints \eqref{eq:c1}, which are rewritten here as
\begin{align}
& \mathsf{Ineq~1}: ~~(\ell - 1)\delta_E + (M-\ell+1) r \delta_F  ~\geq~ K - (M - \ell + 1)(K - \ell + 1)\mu, \nonumber\\
& \mathsf{Ineq~2}: ~~~~~~~~~~~~~~\ell \delta_E + (M-\ell) r \delta_F ~\geq~ K - (M - \ell)(K - \ell)\mu. \nonumber
\end{align}
Specifically, we take a  a linear combination of the two inequalities \[\alpha \times \mathsf{Ineq~1} ~+~ \beta \times \mathsf{Ineq~2},\] which with $\alpha,\beta \geq 0$ to yield the following lower bound on the minimum NDT
\begin{align}\label{eq:lpsol_s1}
&\delta^*(\mu,r) \geq \alpha \big[K - (M - \ell + 1)(K - \ell + 1)\mu \big] + \beta\big[ K - (M - \ell)(K - \ell)\mu\big].
\end{align}
Choosing the weights $\alpha$ and $\beta$ as
\begin{align}
& \alpha ~=~ \frac{\ell}{M}\left(1 + \frac{1}{r}\right) - 1,~~~ \beta  ~=~ 1 - \frac{\ell-1}{M}\left(1 + \frac{1}{r}\right),
\end{align}
we have the following set of inequalities:
\begin{align}\label{eq:lpsol_lb}
\delta^*(\mu,r) &\geq \big[K - (M - \ell + 1)(K - \ell + 1)\mu \big] \left[\frac{\ell}{M}\left(1 + \frac{1}{r}\right) - 1\right]  \nonumber\\
								&~~~~~~~~~~~~~~~~~~~~~~~~~~~~~~~~~+ \big[ K - (M - \ell)(K - \ell)\mu\big]\left[1 - \frac{\ell-1}{M}\left(1 + \frac{1}{r}\right)\right]\nonumber\\
								&= \big[ K - (M - \ell)(K - \ell)\mu \big] \left[ \frac{1}{M}\left(1 + \frac{1}{r}\right)\right] - \mu\big(M+K+1 - 2\ell\big)\left[\frac{\ell}{M}\left(1 + \frac{1}{r}\right) - 1\right]\nonumber\\
								&= \left[ \frac{K - (M-\ell)(K - \ell)\mu}{M} - \frac{(M+K+1 - 2\ell)\ell\mu}{M} \right]\left(1 + \frac{1}{r}\right)  + \mu\big(M+K+1-2\ell \big) \nonumber\\
								&= \left[\frac{K}{M} - K\mu + \frac{\left(\ell^2 - \ell\right)\mu}{M} \right]\left(1 + \frac{1}{r}\right) + (M+K-1)\mu + 2\mu(1-\ell)\nonumber\\
								& = (M+K-1)\mu + \frac{K(1- \mu M)}{M}\left(1 + \frac{1}{r}\right) + \frac{\ell^2-\ell}{M}\left(1 + \frac{1}{r}\right)\mu + 2\mu(1-\ell).
\end{align}

An achievable NDT is obtained by considering \eqref{eq:ndtub1}, that is, the first term inside the $\min(\cdot)$ function in \eqref{eq:propub1} and substituting $\min\{M,K\}=M$, yielding
\begin{align}\label{eq:achub}
\delta_{\mathsf{Ach}}(\mu,r) &\leq(M+K-1)\mu + \frac{K(1- \mu M)}{M}\left(1 + \frac{1}{r}\right).
\end{align}
Now, from \eqref{eq:lpsol_lb} and \eqref{eq:achub} we have
\begin{align}\label{eq:gap_ub}
\frac{\delta_{\mathsf{Ach}}(\mu,r)}{\delta^*(\mu,r)} & \leq \frac{(M+K-1)\mu + \frac{K(1- \mu M)}{M}\left(1 + \frac{1}{r}\right) }{(M+K-1)\mu + \frac{K(1- \mu M)}{M}\left(1 + \frac{1}{r}\right) + \frac{\ell^2-\ell}{M}\left(1 + \frac{1}{r}\right)\mu + 2\mu(1-\ell)}\nonumber\\																						
																										 &\myeq{(a)} 1 + \frac{ -  \frac{\ell^2-\ell}{M}\left(1 + \frac{1}{r}\right)\mu - 2\mu(1-\ell)}{(M+K-1)\mu + \frac{K(1- \mu M)}{M}\left(1 + \frac{1}{r}\right) + \frac{\ell^2-\ell}{M}\left(1 + \frac{1}{r}\right)\mu + 2\mu(1-\ell)}\nonumber\\
																										 & \myleq{(b)} 1 + \frac{ 2\mu(\ell-1)}{(M+K-1)\mu + \frac{K(1- \mu M)}{M}\left(1 + \frac{1}{r}\right) + \frac{\ell^2-\ell}{M}\left(1 + \frac{1}{r}\right)\mu + 2\mu(1-\ell)}\nonumber\\																										
																										 &\myleq{(c)} 1 + \frac{ 2\mu(\ell-1)}{(M+K-1)\mu + \frac{\ell^2-\ell}{M}\left(1 + \frac{1}{r}\right)\mu + 2\mu(1-\ell)}\nonumber\\
																										 &\myleq{(d)} 1 + \frac{ 2(\ell-1)}{(M+K-1) + 2\left(\frac{\ell^2-\ell}{M}\right) + 2(1-\ell)}\nonumber\\
																										 & =  1 + \frac{ 2}{\frac{M+K-1}{\ell-1} + 2\left(\frac{\ell}{M} - 1\right)}\nonumber\\
																										 & \myleq{(e)} 1 + \frac{ 2}{\frac{2M-1}{\ell-1} + 2\left(\frac{\ell}{M} - 1\right)}\nonumber\\
																										 & \myleq{(f)} 1 + \frac{2}{\frac{4M - 2}{M-2} -1} = 1 + \frac{2(M-2)}{3M} \leq 1 + \frac{2}{3}< 2,
\end{align}
where step \textsf{(a)} follows by addition and subtraction of the term $\frac{\ell^2-\ell}{M}\left(1 + \frac{1}{r}\right)\mu + 2\mu(1-\ell)$ from the numerator; step \textsf{(b)} follows from the fact that the term $\frac{\ell^2-\ell}{M}\left(1 + \frac{1}{r}\right)\mu$ in the numerator of the second term is positive in the regime of interest; step \textsf{(c)} follows from the fact that the term $\frac{K(1- \mu M)}{M}\left(1 + \frac{1}{r}\right)$ in the denominator of the second term is positive and omitting it leads to an upper bound; step \textsf{(d)} follows by setting $r=1$ which, in turn, follows from the fact that we are interested in the low fronthaul regime with $r\leq 1$; step \textsf{(e)} follows from the fact that $M\leq K$ in the regime of interest and step \textsf{(f)} follows from the fact that for $r\in(0,1]$, we have $\ell\leq M/2$ and putting $\ell=M/2$ minimizes the denominator of the second term which is a decreasing function of $\ell$. 
We conclude that the maximum multiplicative gap between the achievable NDT and the minimum NDT is at most $2$ for all the sub-intervals in Fig. \ref{fig:front_reg}.


\n $\bullet$~ \textbf{\textit{Case} $\bm{\mathit{2} \left(\mathit{M\geq K}\right)}$:}~
For this case, we obtain a lower bound by considering the sum of the constraints in \eqref{eq:c1} with $\ell=0$ and \eqref{eq:c2}, yielding
\begin{align}\label{eq:lpsol2_s2}
\delta^*(\mu,r) \geq \delta_E + \delta_F = 1 + \frac{K\left(1 - \mu M \right)}{Mr} .
\end{align}

For an achievable NDT, we consider again, the first term inside the $\min(\cdot)$ function in \eqref{eq:propub1}, which for $K = \min\{M,K\}$ gives the following upper bound
\begin{align}\label{eq:achub2}
\delta_{\mathsf{Ach}}(\mu,r) \leq (M+K-1)\mu + (1- \mu M)\left(1 + \frac{K}{Mr}\right) = 1 + \frac{K\left(1 - \mu M\right)}{Mr} + (K-1)\mu.
\end{align}
Thus, from \eqref{eq:lpsol2_s2} and \eqref{eq:achub2}, we have
\begin{align}
\frac{\delta_{\mathsf{Ach}}(\mu,r)}{\delta^*(\mu,r)} & \leq  1 + \frac{(K-1)\mu}{1 + \frac{K\left(1 - \mu M \right)}{Mr}}~ \myleq{(a)} 1 + \frac{(K-1)}{M} ~~\leq 1 + \frac{K}{M} ~~\myleq{(b)} 2,
\end{align}
where step \textsf{(a)} follows from setting the fractional cache size $\mu = 1/M$ which is the maximum value it can assume in the low-cache memory regime at hand; and step \textsf{(b)} follows from the fact that $M \geq K$.  Next, we consider the regime of high cache i.e., $\mu \in [1/M,1]$. \vspace{5pt}

\n \textbf{\textit{High-Cache Regime} $\bm{\left( \mathit{\mu \in [1/M,1]}\right)}$:}
In this regime, we consider 
the achievable NDT in \eqref{eq:ndtub2}, i.e., the first term in \eqref{eq:propub2}, which, using $\mu = 1/M$ yields the upper bound
\begin{align}
\delta_{\mathsf{Ach}}(\mu,r) \leq \frac{M+K-1}{M}.
\end{align}
For the lower bounds, we first consider the case of $M\leq K$. From constraint \eqref{eq:c1} in Proposition \ref{prop:glb}, using $\ell = M$, we have $\delta_E\geq K/M$ which yields the lower bound on the minimum NDT
\begin{align}
\delta^*(\mu,r)\geq \frac{K}{M},
\end{align}
where we have used the fact that $\delta_F\geq 0$. Thus we have the desired gap
\begin{align}\label{eq:gapr2_1}
\frac{\delta_{\mathsf{Ach}}(\mu,r)}{\delta^*(\mu,r)} \leq \frac{(M+K-1)}{M}\frac{M}{K} \leq  1 + \frac{M}{K} \leq 2.
\end{align}
Next consider the case of $M\geq K$, From constraint \eqref{eq:c2} in Proposition \ref{prop:glb}, we have $\delta_E \geq 1$. Again, using the fact that that $\delta_F\geq 0$, we have
\begin{align}
\delta^*(\mu,r) \geq 1.
\end{align}
Thus we have the desired gap
\begin{align}\label{eq:gapr2_2}
\frac{\delta_{\mathsf{Ach}}(\mu,r)}{\delta^*(\mu,r)} \leq \frac{(M+K-1)}{M}\leq  1 + \frac{K}{M} \leq 2.
\end{align}
 This concludes the proof of Proposition \ref{prop:gap}.

\section{Proof of Proposition \ref{prop:lowmu}}\label{ap:lowmu}
The minimum NDT is first proved to be upper bounded by the right-hand side of \eqref{eq:lowmu} by substituting $M = \min\{M,K\}$ into the achievable rate in Proposition \ref{prop:ub} for the regime $\mu\in[0,1/M]$. 
For the matching lower bound, in the LP of Proposition \ref{prop:glb}, we substitute $\ell=1$ and $\ell = 0$ in \eqref{eq:c1}, yielding respectively:
\begin{align}
&\mathsf{Ineq~ 1}: ~ \delta_E + (M-1)r\delta_F \geq (M+K-1)\mu + K(1 - \mu M),\\
&\mathsf{Ineq~ 2}: ~ \delta_F \geq K(1 - \mu M)/Mr.
\end{align}
Since $r\in (0, 1/(M-1)]$, we obtain a lower bound by considering the linear combination $\mathsf{Ineq~ 1} + (1 - (M-1)r)\times \mathsf{Ineq~ 2}$, leading to the expression on the right-hand side of \eqref{eq:lowmu}.
This concludes the proof.

\section{Converse for Corollary \ref{cor:mk22}}\label{ap:mk22} 

We characterize the lower bounds for the $2\times 2$ F-RAN in order to show the optimality of the achievable schemes discussed in Section \ref{sec:ub}. We again consider each of the fronthaul regimes separately.

 {\textit{Cache-Only F-RAN} ($r=0$):} For the cache-only F-RAN, considering the lower bound from Corollary \ref{cor:lb} and using $\ell=1$, we get
\begin{align}
\delta^*(\mu,r) \geq 2 - \mu,
\end{align}
which is identical to the achievable NDT in \cite{RT_ISIT16}. Next, we consider the more general case when fronthaul is available i.e., $r>0$. To this end, we consider the LP in Proposition \ref{prop:glb}. The constraints of the LP can be rewritten as:
\begin{align}
\mathsf{Ineq~ 1}: &~~~~ (\delta_E + r\delta_F) \geq (2 - \mu)\\
\mathsf{Ineq~ 2}: &~~~~ \delta_F \geq (1 - 2\mu)/r\\
\mathsf{Ineq~ 3}: &~~~~ \delta_E \geq 1.
\end{align}
$\mathsf{Ineq~ 1}$ and $\mathsf{Ineq~ 2}$ are obtained from \eqref{eq:c1} by substituting $\ell=1$ and $\ell=0$ respectively, while $\mathsf{Ineq~ 3}$ follow directly from \eqref{eq:c2}. We next utilize these inequalities to prove the converse for different regimes of $r$.

{\textit{Low Fronthaul} $(r\in (0,1])$:} In this regime, using $\mathsf{Ineq~ 1} + (1-r)\times \mathsf{Ineq~ 2}$ gives the lower bound:
\begin{align}
\delta^*(\mu,r) \geq 1 + \mu + \frac{1-2\mu}{r}.
\end{align}
Substituting $r=1$ in $\mathsf{Ineq~ 1}$ gives the lower bound
\begin{align}
\delta^*(\mu,r) \geq 2-\mu.
\end{align}
Combining this with the upper bounds presented in \cite{RT_ISIT16} gives the minimum NDT for the low fronthaul regime as shown in Fig. \ref{fig:MK22_r0p5}.

{\textit{High Fronthaul} ($r\geq 1$):} For this regime, using $\mathsf{Ineq~ 1} + (r-1)\times \mathsf{Ineq~ 3}$, we have
\begin{align}
\delta^*(\mu,r)\geq 1 + \frac{1-\mu}{r}.
\end{align}
Combining with the upper bound presented in \cite{RT_ISIT16} gives the minimum NDT for the high fronthaul regime as shown in Fig. \ref{fig:MK22_r1p5}. Thus, the lower bound on the NDT in Proposition \ref{prop:glb} and the achievable scheme presented in Proposition \ref{prop:ub} completely characterizes the minimum NDT for the $2\times 2$ F-RAN.

\section{Lemmas used in Appendix \ref{ap:thglb}}\label{ap:lemma}
In this section, we state and prove the lemmas used in the proof of Proposition \ref{prop:glb}. First, we state and prove Lemma \ref{lem:1} which was used in \eqref{eq:A} in Appendix \ref{ap:thglb}.
\begin{lemma}\label{lem:1}
For the cloud and cache-aided wireless network under consideration, the differential entropy of any $\ell$ channel outputs $\mathbf{Y}^{T_E}_{[1:\ell]}$ can be upper bounded as
\begin{align}\label{eq:lem1}
h\left(\mathbf{Y}^{T_E}_{[1:\ell]}\right) \leq  \ell{T_E}\log \Big(2\pi e \left( \Lambda P + 1\right)\Big),
\end{align}
where the parameter $\Lambda$ is a function of the channel coefficients in $\mathbf{H}$ and is defined as
\begin{align*}
\Lambda = \left(\max\limits_{k\in [1:\ell]} \left[\sum_{m=1}^M h_{km}^2 + \sum_{m\neq \tilde{m}}h_{km}h_{k\tilde{m}}\right] \right).\vspace{-3pt}
\end{align*}
\end{lemma}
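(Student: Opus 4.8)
\textbf{Proof plan for Lemma~\ref{lem:1}.}

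The plan is to bound the differential entropy of the $\ell\times T_E$ matrix $\mathbf{Y}^{T_E}_{[1:\ell]}$ by the standard argument that, among all random vectors with a given covariance, the Gaussian maximizes differential entropy. First I would recall from the channel law \eqref{eq:chop} that each entry $Y_k[t]=\sum_{m=1}^M h_{km}X_m[t]+n_k[t]$, so the collection of all $\ell T_E$ scalar outputs is a linear function of the channel inputs plus independent $\mathcal{CN}(0,1)$ noise. I would then write $h\left(\mathbf{Y}^{T_E}_{[1:\ell]}\right)\leq \frac12\log\big((2\pi e)^{\ell T_E}\det(\mathbf{K})\big)$, where $\mathbf{K}$ is the covariance matrix of the stacked output vector, and reduce the determinant bound to a bound on the diagonal entries via Hadamard's inequality, $\det(\mathbf{K})\leq \prod \mathbf{K}_{ii}$. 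This turns the problem into bounding the per-symbol output power at each receiver $k$.

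The key step is then to bound $\mathbb{E}[|Y_k[t]|^2]$. Expanding, $\mathbb{E}[|Y_k[t]|^2]=\mathbb{E}\big[\big|\sum_m h_{km}X_m[t]\big|^2\big]+1$, and the first term expands into $\sum_m h_{km}^2\,\mathbb{E}[|X_m[t]|^2]+\sum_{m\neq\tilde m}h_{km}h_{k\tilde m}\,\mathbb{E}[X_m[t]X_{\tilde m}^*[t]]$. Using the average power constraint $\mathbb{E}[|X_m[t]|^2]\leq P$ from Definition~\ref{def:pol} (applied per symbol after noting the constraint is on the whole codeword, so it holds on average across time; one can either invoke a per-symbol argument or absorb the averaging into the $T_E$ factor), together with Cauchy--Schwarz to bound each cross term $|\mathbb{E}[X_m X_{\tilde m}^*]|\leq P$, gives $\mathbb{E}[|Y_k[t]|^2]\leq \big(\sum_m h_{km}^2+\sum_{m\neq\tilde m}h_{km}h_{k\tilde m}\big)P+1$. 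Taking the maximum over $k\in[1:\ell]$ yields exactly $\Lambda P+1$ as the bound on every diagonal entry of $\mathbf{K}$, whence $\det(\mathbf{K})\leq(\Lambda P+1)^{\ell T_E}$ and the claimed bound $h\left(\mathbf{Y}^{T_E}_{[1:\ell]}\right)\leq \ell T_E\log\big(2\pi e(\Lambda P+1)\big)$ follows.

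The main obstacle, such as it is, is a bookkeeping point rather than a deep one: the power constraint in Definition~\ref{def:pol} is stated as an average over the codeword $\mathbf{X}_m^{T_E}$, i.e. $\frac{1}{T_E}\sum_t\mathbb{E}[|X_m[t]|^2]\leq P$, not symbol-by-symbol. I would handle this by noting that Hadamard's inequality combined with concavity of $\log$ lets one replace the product of per-symbol powers by their average, so only the aggregate constraint is needed; alternatively, since $\Lambda$ can be taken to dominate uniformly, one can bound $\sum_{k,t}\log(\text{power}_{k}[t]+1)$ directly by $\ell T_E\log(\Lambda P+1)$ using Jensen. Everything else is routine, and it is worth a remark that $\Lambda$ depends only on $\mathbf{H}$ and not on $P$ or $L$, which is what makes it harmless when the limits $L\to\infty$ and $P\to\infty$ are taken in \eqref{eq:p3}.
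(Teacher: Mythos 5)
Your proposal is correct and follows essentially the same route as the paper: the paper bounds the joint differential entropy by the sum of the per-symbol marginal entropies and then maximizes each scalar entropy by a Gaussian of the same variance, which is exactly your Gaussian-max-entropy-plus-Hadamard decomposition in a different dress, and the variance computation (expand, Cauchy--Schwarz on the cross-covariances, apply the power constraint, maximize over $k$) is identical. Your extra care about the codeword-average versus per-symbol power constraint is a valid refinement that the paper glosses over, but it does not change the argument.
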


\begin{proof}
The entropy of the received signals $\mathbf{Y}^{T_E}_{[1:\ell]}$ can be upper bounded as follows:
\begin{align}\label{eq:mi_ub1}
 h\left(\mathbf{Y}_{[1:\ell]}^{T_E}\right) \leq \sum_{k=1}^\ell \sum_{t=1}^{T_E} h\Big({Y}_k[t]\Big).
\end{align}
Now, we upper bound the inner sum as follows:
\begin{align}\label{eq:sub1}
\sum_{t=1}^{T_E} h\Big({Y}_k[t]\Big)
&= \sum_{t=1}^{T_E} h\left( \sum_{m=1}^M h_{km}X_m[t] + n_k[t] \right) \nonumber\\
&\leq \sum_{t=1}^{T_E} \log \left(2\pi e ~\text{Var}\left[\sum_{m=1}^M h_{km}X_m[t] + n_k[t] \right]\right)\nonumber\\
& \myeq{(a)} \sum_{t=1}^{T_E} \log \left(2\pi e \left(\text{Var}\left[\sum_{m=1}^M h_{km}X_m[t]\right] + \text{Var}\left[n_k[t]\right] \right)\right)\nonumber\\
& \myeq{(b)} \sum_{t=1}^{T_E} \log \Bigg(2\pi e \Bigg(\sum_{m=1}^M h_{km}^2\text{Var}\left[X_m[t]\right] +  \sum_{m\neq \tilde{m}}h_{km}h_{k\tilde{m}}\text{Cov}(X_m[t],X_{\tilde{m}}[t]) + 1 \Bigg) \Bigg)\nonumber\\
& \myleq{(c)} \sum_{t=1}^{T_E} \log \Bigg(2\pi e \Bigg(\sum_{m=1}^M h_{km}^2\text{Var}\left[X_m[t]\right]  +  \sum_{m\neq  \tilde{m}}h_{km}h_{k\tilde{m}}\sqrt{\text{Var}[X_m[t]]\text{Var}[X_{\tilde{m}}[t]]} + 1 \Bigg)\Bigg)\nonumber\\
& \myleq{(d)} \sum_{t=1}^{T_E} \log \left(2\pi e \left(\sum_{m=1}^M h_{km}^2 P  +  \sum_{m\neq  \tilde{m}}h_{km}h_{k\tilde{m}} P + 1 \right)\right)\nonumber \\
& = \sum_{t=1}^{T_E} \log \Big(2\pi e \big( {\Lambda} P + 1 \big)\Big)  = T_E\log \Big(2\pi e \big( {\Lambda} P + 1 \big)\Big)
\end{align}

\n where ${\Lambda} = \max_{k\in [1:\ell]}\left[\sum_{m=1}^M h_{km}^2 + \sum_{m\neq  \tilde{m}}h_{km}h_{k\tilde{m}}\right]$. The steps in \eqref{eq:sub1} as explained as follows:
\begin{itemize}
\item Step \textsf{(a)} follows from the fact that noise is i.i.d. and uncorrelated with the input symbols.
\item Step \textsf{(b)} follows from the fact that $\text{Var}\left[n_k[t]\right]=1$.
\item Step \textsf{(c)} follows from the Cauchy-Schwartz Inequality.
\item Step \textsf{(d)} follows from the average power constraint $P$ on the input symbols.
\end{itemize}
Substituting (\ref{eq:sub1}) into (\ref{eq:mi_ub1}), we have
\begin{align}
h\left(\mathbf{Y}^{T_E}_{[1:\ell]}\right)
&~\leq~  \sum_{k=1}^\ell T_E \log  \Big(2\pi e \big( {\Lambda} P + 1 \big)\Big)  ~=~\ell{T_E}\log  \Big(2\pi e \big( {\Lambda} P + 1 \big)\Big) , \label{eq:mi_ub}
\end{align}
which completes the proof of the Lemma \ref{lem:1}.
\end{proof} \vspace{5pt}

\n Next, we state and prove Lemma \ref{lem:termB} which was used to bound the second term in \eqref{eq:p1} in Appendix \ref{ap:thglb}.
\begin{lemma}\label{lem:termB}
For the cloud and cache-aided wireless network under consideration, for any feasible policy $\pi = (\pi_f,\pi_c,\pi_e,\pi_d)$, the entropy of the $K$ requested files $F_{[1:K]}$, conditioned on the channel outputs $\mathbf{Y}^{T_E}_{[1:\ell]}$, on any $(M-\ell)$ fronthaul transmissions $\mathbf{U}^{T_F}_{[1:(M-\ell)]}$ with corresponding cache contents $S_{[1:(M-\ell)]}$ and on the remaining files $F_{[K+1:M]}$, can be upper bounded as
\begin{align}
&H\left(F_{[1:K]}|\mathbf{Y}^{T_E}_{[1:\ell]},\mathbf{U}^{T_F}_{[1:(M-\ell)]},S_{[1:(M-\ell)]},F_{[K+1:N]} \right) \leq~~ L\epsilon_L + T_E\log\det \left( \mathbf{I}_{\Scale[0.7]{[K-\ell]}} + \tilde{\mathbf{H}}\tilde{\mathbf{H}}^H\right),
\end{align}
where $\epsilon_L$ is a function of the probability of error $P_e$ that vanishes as $L\rightarrow \infty$, the matrix $\tilde{\mathbf{H}}$ is a function solely of the channel matrix $\mathbf{H}$ and $\mathbf{I}_{\Scale[0.7]{[K-\ell]}}$ is a $(K-\ell)\times(K-\ell)$ identity matrix.
\end{lemma}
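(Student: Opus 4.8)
The plan is to construct, purely from the variables appearing in the conditioning, a degraded copy of the channel outputs of all $K$ users that differs from the true outputs only by an additive \emph{Gaussian} perturbation whose covariance depends on $\mathbf{H}$ alone (and not on $P$ or $L$). Feasibility of $\pi$ together with Fano's inequality then forces the residual uncertainty about $F_{[1:K]}$ to be a vanishing $L\epsilon_L$ term plus the claimed constant-per-symbol MIMO term.

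\emph{Step 1 (genie reconstruction).} I would fix the demand $\mathbf{D}$ and work with any $\mathbf{H}$ for which the $\ell\times\ell$ submatrix $\mathbf{H}_{[1:\ell]}^{[(M-\ell+1):M]}$ is invertible; since $\ell\le\min\{M,K\}$ and the channel coefficients are drawn from a continuous distribution, this holds for almost all $\mathbf{H}$, which is exactly the regime in which a feasible policy must have vanishing error. Given $S_{[1:(M-\ell)]}$, $\mathbf{U}^{T_F}_{[1:(M-\ell)]}$, $\mathbf{D}$ and $\mathbf{H}$, the edge policy $\pi_e$ produces the codewords $\mathbf{X}^{T_E}_{[1:(M-\ell)]}$ deterministically; subtracting their contribution from $\mathbf{Y}^{T_E}_{[1:\ell]}$ leaves $\mathbf{H}_{[1:\ell]}^{[(M-\ell+1):M]}\mathbf{X}^{T_E}_{[(M-\ell+1):M]}+\mathbf{n}^{T_E}_{[1:\ell]}$, and inverting the square submatrix yields $\widehat{\mathbf{X}}^{T_E}_{[(M-\ell+1):M]}=\mathbf{X}^{T_E}_{[(M-\ell+1):M]}+\bigl(\mathbf{H}_{[1:\ell]}^{[(M-\ell+1):M]}\bigr)^{-1}\mathbf{n}^{T_E}_{[1:\ell]}$. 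Passing $\mathbf{X}^{T_E}_{[1:(M-\ell)]}$ together with $\widehat{\mathbf{X}}^{T_E}_{[(M-\ell+1):M]}$ through the noiseless channel of users $[\ell+1:K]$ produces surrogate outputs $\widehat{\mathbf{Y}}^{T_E}_{[\ell+1:K]}$, and a one-line computation shows $\widehat{\mathbf{Y}}^{T_E}_{[\ell+1:K]}=\mathbf{Y}^{T_E}_{[\ell+1:K]}+\tilde{\mathbf{N}}$ with $\tilde{\mathbf{N}}=\tilde{\mathbf{H}}\,\mathbf{n}^{T_E}_{[1:\ell]}-\mathbf{n}^{T_E}_{[\ell+1:K]}$ and $\tilde{\mathbf{H}}=\mathbf{H}_{[\ell+1:K]}^{[(M-\ell+1):M]}\bigl(\mathbf{H}_{[1:\ell]}^{[(M-\ell+1):M]}\bigr)^{-1}$, a $(K-\ell)\times\ell$ matrix depending only on $\mathbf{H}$. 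The crucial observation is that $\bigl(\mathbf{Y}^{T_E}_{[1:\ell]},\widehat{\mathbf{Y}}^{T_E}_{[\ell+1:K]}\bigr)$ is a deterministic function of the variables in the conditioning of the lemma.

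\emph{Step 2 (entropy bookkeeping).} Since $\widehat{\mathbf{Y}}^{T_E}_{[\ell+1:K]}$ is a deterministic function of the conditioning variables, adding it to the conditioning does not change the conditional entropy, and dropping $\mathbf{U}^{T_F}_{[1:(M-\ell)]},S_{[1:(M-\ell)]}$ only increases it, so the left side of the lemma is at most $H(F_{[1:K]}\mid Z)$ with $Z:=\bigl(\mathbf{Y}^{T_E}_{[1:\ell]},\widehat{\mathbf{Y}}^{T_E}_{[\ell+1:K]},F_{[K+1:N]}\bigr)$. I would split $H(F_{[1:K]}\mid Z)=I\bigl(F_{[1:K]};\mathbf{Y}^{T_E}_{[1:K]}\mid Z\bigr)+H\bigl(F_{[1:K]}\mid \mathbf{Y}^{T_E}_{[1:K]},Z\bigr)$; the second term is at most $H\bigl(F_{[1:K]}\mid\mathbf{Y}^{T_E}_{[1:K]}\bigr)\le L\epsilon_L$ by Fano, since a feasible policy decodes each demanded file from that user's own output, and the first term equals $I\bigl(F_{[1:K]};\mathbf{Y}^{T_E}_{[\ell+1:K]}\mid Z\bigr)$ because $\mathbf{Y}^{T_E}_{[1:\ell]}$ is part of $Z$. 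Writing the latter as $h\bigl(\mathbf{Y}^{T_E}_{[\ell+1:K]}\mid Z\bigr)-h\bigl(\mathbf{Y}^{T_E}_{[\ell+1:K]}\mid Z,F_{[1:K]}\bigr)$, I would use $\mathbf{Y}^{T_E}_{[\ell+1:K]}=\widehat{\mathbf{Y}}^{T_E}_{[\ell+1:K]}-\tilde{\mathbf{N}}$, drop conditioning, and invoke Gaussianity to bound the first piece by $h(\tilde{\mathbf{N}})$; for the subtracted piece I note $Z\cup F_{[1:K]}=\bigl(\mathbf{Y}^{T_E}_{[1:\ell]},\widehat{\mathbf{Y}}^{T_E}_{[\ell+1:K]},F_{[1:N]}\bigr)$, and given all files and $\mathbf{H}$ every $\mathbf{X}^{T_E}_{[1:M]}$ is deterministic, so this conditioning is informationally equivalent to conditioning on $\mathbf{n}^{T_E}_{[1:\ell]}$ (and $F_{[1:N]}$), which is independent of $\mathbf{n}^{T_E}_{[\ell+1:K]}$; hence the subtracted piece equals $h\bigl(\mathbf{n}^{T_E}_{[\ell+1:K]}\bigr)$. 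Because $\tilde{\mathbf{N}}$ is jointly Gaussian with per-symbol covariance $\mathbf{I}_{[K-\ell]}+\tilde{\mathbf{H}}\tilde{\mathbf{H}}^H$ while $\mathbf{n}^{T_E}_{[\ell+1:K]}$ has per-symbol covariance $\mathbf{I}_{[K-\ell]}$, the difference $h(\tilde{\mathbf{N}})-h\bigl(\mathbf{n}^{T_E}_{[\ell+1:K]}\bigr)$ equals $T_E\log\det\bigl(\mathbf{I}_{[K-\ell]}+\tilde{\mathbf{H}}\tilde{\mathbf{H}}^H\bigr)$ (the $\pi e$ normalizing factors cancel), which is independent of $P$ and $L$; combining the two pieces gives the asserted bound.

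\emph{Main obstacle.} The delicate part will be orienting each entropy inequality so that the positive differential-entropy term is exactly ``paid for'' by the genie perturbation $\tilde{\mathbf{N}}$ while the noise $\mathbf{n}^{T_E}_{[\ell+1:K]}$ at the surrogate receivers is ``refunded'' through the subtracted term, and verifying that augmenting the conditioning by $F_{[1:K]}$ collapses both $\mathbf{Y}^{T_E}_{[1:\ell]}$ and $\widehat{\mathbf{Y}}^{T_E}_{[\ell+1:K]}$ to deterministic functions of $\mathbf{n}^{T_E}_{[1:\ell]}$ alone. One also has to confirm that $\ell\le\min\{M,K\}$ is precisely what makes the submatrices in Step 1 have the right shapes and the inverse exist almost surely, consistent with the range of $\ell$ used in Proposition \ref{prop:glb}.
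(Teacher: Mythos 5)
Your proposal is correct and follows essentially the same route as the paper's proof: a genie-aided reconstruction of the outputs of users $[\ell+1:K]$ from $\mathbf{Y}^{T_E}_{[1:\ell]}$, $\mathbf{U}^{T_F}_{[1:(M-\ell)]}$ and $S_{[1:(M-\ell)]}$ up to an additive Gaussian perturbation with per-symbol covariance $\mathbf{I}_{[K-\ell]}+\tilde{\mathbf{H}}\tilde{\mathbf{H}}^H$ (this is exactly the paper's Lemma 3, with $\tilde{\mathbf{H}}=\mathbf{H}_2\mathbf{H}_1^{\dagger}$), followed by Fano's inequality and the entropy difference $h(\tilde{\mathbf{N}})-h(\mathbf{n}^{T_E}_{[\ell+1:K]})=T_E\log\det(\mathbf{I}_{[K-\ell]}+\tilde{\mathbf{H}}\tilde{\mathbf{H}}^H)$. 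The only deviations are bookkeeping: the paper applies Fano twice (to $F_{[1:\ell]}$ given $\mathbf{Y}^{T_E}_{[1:\ell]}$, then to $F_{[\ell+1:K]}$ given the surrogate) and forms its surrogate by adding $\tilde{\mathbf{H}}\mathbf{n}^{T_E}_{[1:\ell]}$ to the true noisy outputs after injecting $\mathbf{n}^{T_E}_{[\ell+1:K]}$ as independent side information, whereas you apply Fano once to $F_{[1:K]}$ given $\mathbf{Y}^{T_E}_{[1:K]}$ and build the surrogate from the noiseless channel, which leads to the same $\log\det$ term.
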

\begin{proof}
In order to prove this lemma, we first consider the following set of inequalities:
\begin{align}\label{eq:B}
						 &H\left(F_{[1:K]}|\mathbf{Y}^{T_E}_{[1:\ell]},\mathbf{U}^{T_F}_{[1:(M-\ell)]},S_{[1:(M-\ell)]},F_{[K+1:N]} \right) \nonumber\\
						 &\myeq{(a)}H\left(F_{[1:K]}|\mathbf{Y}^{T_E}_{[1:\ell]},\mathbf{U}^{T_F}_{[1:(M-\ell)]},S_{[1:(M-\ell)]},\mathbf{X}^{T_E}_{[1:(M-\ell)]},F_{[K+1:N]} \right) \nonumber\\
						 &\myleq{(b)} H\left(F_{[1:K]}|\mathbf{Y}^{T_E}_{[1:\ell]},\mathbf{X}^{T_E}_{[1:(M-\ell)]},F_{[K+1:N]} \right)\nonumber\\
						 &\myleq{(c)} H\left(F_{[1:\ell]}|\mathbf{Y}^{T_E}_{[1:\ell]}\right)	+ H\left(F_{[\ell+1:K]}|\mathbf{Y}^{T_E}_{[1:\ell]},\mathbf{X}^{T_E}_{[1:(M-\ell)]},F_{[1:\ell]},F_{[K+1:N]} \right)\nonumber\\
						 &\myleq{(d)} L\epsilon_L + H\left(F_{[\ell+1:K]}|\mathbf{Y}^{T_E}_{[1:\ell]},\mathbf{X}^{T_E}_{[1:(M-\ell)]},F_{[1:\ell]\cup[K+1:N]} \right),
\end{align}
\n where the steps in \eqref{eq:B} are explained as follows:
\begin{itemize}
\item Step \textsf{(a)} follows from the fact that the channel inputs $\mathbf{X}^{T_E}_{[1:(M-\ell)]}$ are functions of the fronthaul transmissions $\mathbf{U}^{T_F}_{[1:(M-\ell)]}$ and the corresponding cache contents $S_{[1:(M-\ell)]}$.
\item Step \textsf{(b)} follows from the fact that conditioning reduces entropy.
\item Step \textsf{(c)} follows from the chain rule of entropy and from the fact that conditioning reduces entropy.
\item In step $\textsf{(d)}$, we use Fano's inequality on the first term where $\epsilon_L$ is a function, independent of $P$, that vanishes as $L\rightarrow \infty$.
\end{itemize}
Next, we consider the second term in \eqref{eq:B}. We have
\begin{align}\label{eq:c_prel}
& H\left(F_{[\ell+1:K]}|\mathbf{Y}^{T_E}_{[1:\ell]},\mathbf{X}^{T_E}_{[1:(M-\ell)]},F_{[1:\ell]\cup[K+1:N]} \right)\nonumber\\
&\myeq{(a)} H\left(F_{[\ell+1:K]}|\mathbf{Y}^{T_E}_{[1:\ell]},\mathbf{X}^{T_E}_{[1:(M-\ell)]},\mathbf{n}^{T_E}_{[\ell+1:K]},F_{[1:\ell]\cup[K+1:N]} \right)\nonumber\\
&\myleq{(b)} H\Big(F_{[\ell+1:K]}| \mathbf{Y}^{T_E}_{[\ell+1:K]} + \tilde{\mathbf{n}}^{T_E}_{[\ell+1:K]},\mathbf{Y}^{T_E}_{[1:\ell]},F_{[1:\ell]\cup[K+1:N]} \Big)\nonumber\\
& \myleq{(c)} H\Big(F_{[\ell+1:K]}| \mathbf{Y}^{T_E}_{[\ell+1:K]} + \tilde{\mathbf{n}}^{T_E}_{[\ell+1:K]},F_{[1:\ell]\cup[K+1:N]} \Big) - H\Big(F_{[\ell+1:K]}|\mathbf{Y}^{T_E}_{[\ell+1:K]},F_{[1:\ell]\cup[K+1:N]} \Big) \nonumber\\
&~~~~~~~~~~~~~~~~~~~~+  H\Big(F_{[\ell+1:K]}|\mathbf{Y}^{T_E}_{[\ell+1:K]},F_{[1:\ell]\cup[K+1:N]} \Big)\nonumber\\
						&\myleq{(d)}  H\Big(F_{[\ell+1:K]}| \mathbf{Y}^{T_E}_{[\ell+1:K]} + \tilde{\mathbf{n}}^{T_E}_{[\ell+1:K]},F_{[1:\ell]\cup[K+1:N]} \Big) - H\Big(F_{[\ell+1:K]}|\mathbf{Y}^{T_E}_{[\ell+1:K]},F_{[1:\ell]\cup[K+1:N]} \Big) + L\epsilon_L
\end{align}
where the steps in \eqref{eq:c_prel} are explained as follows:
\begin{itemize}
\item Step \textsf{(a)} follows from the fact that the noise term $\mathbf{n}^{T_E}_{[\ell+1:K]}$ is independent of all the other random variables in the entropy term and can be introduced into the conditioning.
\item In Step \textsf{(b)}, we use Lemma \ref{lem:2} stated in Appendix \ref{ap:lemma} and the fact that conditioning reduces entropy. We observe that $\mathbf{n}^{T_E}_{[\ell+1:K]} \rightarrow (\mathbf{Y}^{T_E}_{[1:\ell]},\mathbf{X}^{T_E}_{[1:(M-\ell)]},F_{[1:\ell]\cup[K+1:N]}) \rightarrow F_{[\ell+1:K]}$ forms a Markov chain and as a result, the data-processing inequality \cite{cover} applies. The additive noise term $\tilde{\mathbf{n}}^{T_E}_{[\ell+1:K]}$ is defined as
\begin{align*}
\tilde{\mathbf{n}}^{T_E}_{[\ell+1:K]} = \left(\mathbf{H}_2 \cdot {\mathbf{H}_1}^{\dagger}\right)\mathbf{n}^{T_E}_{[1:\ell]},
\end{align*}
which is a $[K-\ell]\times {T_E}$ matrix, where each column is an independent Gaussian random vector distributed as $\mathcal{N}\left(0,\tilde{\mathbf{H}}\tilde{\mathbf{H}}^H\right)$ with $\tilde{\mathbf{H}} = \left(\mathbf{H}_2 \cdot {\mathbf{H}_1}^{\dagger}\right)$, where the matrices $\mathbf{H}_1$ and $\mathbf{H}_2$ are sub-matrices of the channel matrix $\mathbf{H}$ and are defined in Lemma \ref{lem:2} (see \eqref{eq:h1h2}), and $\mathbf{H_1}^{\dagger}$ is the Moore-Penrose pseudo-inverse. We note here that the noise term $\tilde{\mathbf{n}}^{T_E}_{[\ell+1:K]}$ is independent of channel inputs $\mathbf{X}^{T_E}_{[1:M]}$ and noise terms $\mathbf{n}^{T_E}_{[\ell+1:K]}$.
\item Step \textsf{(c)} follows from the fact that conditioning reduces entropy.
\item Step \textsf{(d)} follows from applying Fano's inequality to the last entropy term in the previous step, where $\epsilon_L$ is again, a function independent of $P$ that vanishes as $L\rightarrow\infty$.
\end{itemize}
Now, from \eqref{eq:c_prel}, considering the first and second entropy terms together we have:
\begin{align}\label{eq:c12}
& H\Big(F_{[\ell+1:K]}| \mathbf{Y}^{T_E}_{[\ell+1:K]} + \tilde{\mathbf{n}}^{T_E}_{[\ell+1:K]},F_{[1:\ell]\cup[K+1:N]} \Big) - H\Big(F_{[\ell+1:K]}|\mathbf{Y}^{T_E}_{[\ell+1:K]},F_{[1:\ell]\cup[K+1:N]} \Big) \nonumber\\
& = I\left(F_{[\ell+1:K]};\mathbf{Y}^{T_E}_{[\ell+1:K]} |F_{[1:\ell]\cup[K+1:N]} \right) - I\left(F_{[\ell+1:K]}; \mathbf{Y}^{T_E}_{[\ell+1:K]} + \tilde{\mathbf{n}}^{T_E}_{[\ell+1:K]}|F_{[1:\ell]\cup[K+1:N]} \right)\nonumber\\
& = h\left( \mathbf{Y}^{T_E}_{[\ell+1:K]} + \tilde{\mathbf{n}}^{T_E}_{[\ell+1:K]}|F_{[1:N]} \right) - h\left( \mathbf{Y}^{T_E}_{[\ell+1:K]} + \tilde{\mathbf{n}}^{T_E}_{[\ell+1:K]}|F_{[1:\ell]\cup[K+1:N]} \right)\nonumber\\
&~~~~~~~ + h\left(\mathbf{Y}^{T_E}_{[\ell+1:K]} |F_{[1:\ell]\cup[K+1:N]} \right) - h\left(\mathbf{Y}^{T_E}_{[\ell+1:K]}|F_{[1:N]} \right)\nonumber\\
& \myleq{(a)} h\left( \mathbf{Y}^{T_E}_{[\ell+1:K]} + \tilde{\mathbf{n}}^{T_E}_{[\ell+1:K]}|F_{[1:N]} \right) - h\left( \mathbf{Y}^{T_E}_{[\ell+1:K]} + \tilde{\mathbf{n}}^{T_E}_{[\ell+1:K]}|\tilde{\mathbf{n}}^{T_E}_{[\ell+1:K]},F_{[1:\ell]\cup[K+1:N]} \right)\nonumber\\
&~~~~~~~ + h\left(\mathbf{Y}^{T_E}_{[\ell+1:K]} |F_{[1:\ell]\cup[K+1:N]} \right) - h\left(\mathbf{Y}^{T_E}_{[\ell+1:K]}|F_{[1:N]} \right)\nonumber\\
& = h\left( \mathbf{Y}^{T_E}_{[\ell+1:K]} + \tilde{\mathbf{n}}^{T_E}_{[\ell+1:K]}|F_{[1:N]} \right) - h\left(\mathbf{Y}^{T_E}_{[\ell+1:K]}|F_{[1:N]} \right)\nonumber\\
&\myeq{(b)}  h\Big(\mathbf{n}^{T_E}_{[\ell+1:K]} + \tilde{\mathbf{n}}^{T_E}_{[\ell+1:K]} \Big) - h\Big(\mathbf{n}^{T_E}_{[\ell+1:K]}\Big)\nonumber\\
&\myeq{(c)} {T_E}\log \Big((2\pi e)^{K-\ell} \left| \mathbf{I}_{\Scale[0.7]{[K-\ell]}} + \tilde{\mathbf{H}}\tilde{\mathbf{H}}^H\right|\Big) - {T_E}\log \Big((2\pi e)^{K-\ell}\Big) \nonumber\\
&= {T_E}\log\det \left( \mathbf{I}_{\Scale[0.7]{[K-\ell]}} + \tilde{\mathbf{H}}\tilde{\mathbf{H}}^H\right).
\end{align}
\n The steps in \eqref{eq:c12} are explained as follows:
\begin{itemize}
\item Step \textsf{(a)} follows from the fact that conditioning reduces entropy.
\item Step \textsf{(b)} follows from the fact that, given all the files $F_{[1:N]}$, the channel outputs are functions of the channel noise.
\item Step \textsf{(c)} follows from the fact that the noise terms are jointly Gaussian and are i.i.d. across time ${T_E}$. The function $|\cdot|$ is the determinant.
\end{itemize}
Thus, using \eqref{eq:c_prel} and \eqref{eq:c12} in \eqref{eq:B}, we have
\begin{align}\label{eq:C}
&H\left(F_{[1:K]}|\mathbf{Y}^{T_E}_{[1:\ell]},\mathbf{U}^{T_F}_{[1:(M-\ell)]},S_{[1:(M-\ell)]},F_{[K+1:N]} \right) ~\leq~ L\epsilon_L + {T_E}\log\det \left( \mathbf{I}_{\Scale[0.7]{[K-\ell]}} + \tilde{\mathbf{H}}\tilde{\mathbf{H}}^H\right),
\end{align}
which completes the proof of the Lemma \ref{lem:termB}.
\end{proof}
Finally, we state and prove Lemma \ref{lem:2} which was used in \eqref{eq:c_prel} for the proof of Lemma \ref{lem:termB}.
\begin{lemma}\label{lem:2}
Given any $\ell \in [1:\min\{M,K\}]$, there exists a (deterministic) function of the channel outputs $\mathbf{Y}^{T_E}_{[1:\ell]}$, input symbols $\mathbf{X}^{T_E}_{[1:(M-\ell)]}$ and channel noise $\mathbf{n}^{T_E}_{[\ell+1:K]}$, that yields
\begin{align}\label{eq:lem2}
\mathbf{Y}^{T_E}_{[\ell+1:K]} + \tilde{\mathbf{n}}^{T_E}_{[\ell+1:K]},
\end{align}
where we have defined $\tilde{\mathbf{n}}^{T_E}_{[\ell+1:K]} = \left(\mathbf{H}_2 \cdot {\mathbf{H}_1}^{\dagger}\right)\mathbf{n}^{T_E}_{[1:\ell]}$ and $\mathbf{H_1}^{\dagger}$ is the Moore-Penrose pseudo-inverse. The matrices $\mathbf{H}_1$ and $\mathbf{H}_2$ are sub-matrices of the channel matrix $\mathbf{H}$ and are defined as
\begin{align}\label{eq:h1h2}
&\mathbf{H}_1 = \mathbf{H}_{[1:\ell]}^{[(M-\ell) +1:M]};~~~ \mathbf{H}_2 = \mathbf{H}_{[\ell+1:K]}^{[(M-\ell) +1:M]}.
\end{align}
\end{lemma}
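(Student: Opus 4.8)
The plan is to establish Lemma~\ref{lem:2} by a direct algebraic construction: we exhibit an explicit deterministic map applied to $\mathbf{Y}^{T_E}_{[1:\ell]}$, $\mathbf{X}^{T_E}_{[1:(M-\ell)]}$ and $\mathbf{n}^{T_E}_{[\ell+1:K]}$ whose output equals $\mathbf{Y}^{T_E}_{[\ell+1:K]} + \tilde{\mathbf{n}}^{T_E}_{[\ell+1:K]}$, where the extra noise term has the claimed form. First I would partition the channel matrix $\mathbf{H}_{[1:K]}^{[1:M]}$ columnwise into the block of columns indexed by $[1:(M-\ell)]$ and the block indexed by $[(M-\ell)+1:M]$, and rowwise into the top $\ell$ rows and the bottom $K-\ell$ rows; this yields the four blocks $\mathbf{H}_{[1:\ell]}^{[1:(M-\ell)]}$, $\mathbf{H}_1 = \mathbf{H}_{[1:\ell]}^{[(M-\ell)+1:M]}$, $\mathbf{H}_{[\ell+1:K]}^{[1:(M-\ell)]}$, and $\mathbf{H}_2 = \mathbf{H}_{[\ell+1:K]}^{[(M-\ell)+1:M]}$. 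From \eqref{eq:chop2} the top $\ell$ received signals satisfy $\mathbf{Y}^{T_E}_{[1:\ell]} = \mathbf{H}_{[1:\ell]}^{[1:(M-\ell)]}\mathbf{X}^{T_E}_{[1:(M-\ell)]} + \mathbf{H}_1 \mathbf{X}^{T_E}_{[(M-\ell)+1:M]} + \mathbf{n}^{T_E}_{[1:\ell]}$, so the key step is to observe that since $\ell \leq \min\{M,K\}$ and the channel coefficients are drawn i.i.d.\ from a continuous distribution, $\mathbf{H}_1$ (an $\ell \times \ell$ matrix) has full rank almost surely and is therefore invertible — more generally we write $\mathbf{H}_1^\dagger$ for the Moore--Penrose pseudo-inverse, which here coincides with $\mathbf{H}_1^{-1}$.

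The next step is the algebraic manipulation. Subtracting the known quantity $\mathbf{H}_{[1:\ell]}^{[1:(M-\ell)]}\mathbf{X}^{T_E}_{[1:(M-\ell)]}$ from $\mathbf{Y}^{T_E}_{[1:\ell]}$ and left-multiplying by $\mathbf{H}_1^\dagger$ recovers $\mathbf{X}^{T_E}_{[(M-\ell)+1:M]} + \mathbf{H}_1^\dagger \mathbf{n}^{T_E}_{[1:\ell]}$. Left-multiplying this by $\mathbf{H}_2$ and adding $\mathbf{H}_{[\ell+1:K]}^{[1:(M-\ell)]}\mathbf{X}^{T_E}_{[1:(M-\ell)]}$ (also known) plus the supplied noise $\mathbf{n}^{T_E}_{[\ell+1:K]}$ produces exactly $\mathbf{H}_{[\ell+1:K]}^{[1:(M-\ell)]}\mathbf{X}^{T_E}_{[1:(M-\ell)]} + \mathbf{H}_2 \mathbf{X}^{T_E}_{[(M-\ell)+1:M]} + \mathbf{n}^{T_E}_{[\ell+1:K]} + (\mathbf{H}_2 \mathbf{H}_1^\dagger)\mathbf{n}^{T_E}_{[1:\ell]}$. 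Comparing with the expression for $\mathbf{Y}^{T_E}_{[\ell+1:K]}$ obtained from \eqref{eq:chop2} restricted to the bottom $K-\ell$ rows, the first three terms are precisely $\mathbf{Y}^{T_E}_{[\ell+1:K]}$, so the overall output is $\mathbf{Y}^{T_E}_{[\ell+1:K]} + \tilde{\mathbf{n}}^{T_E}_{[\ell+1:K]}$ with $\tilde{\mathbf{n}}^{T_E}_{[\ell+1:K]} = (\mathbf{H}_2 \mathbf{H}_1^\dagger)\mathbf{n}^{T_E}_{[1:\ell]}$ as claimed. I would also note, as needed downstream in \eqref{eq:c12}, that because $\mathbf{n}^{T_E}_{[1:\ell]}$ is independent of $\mathbf{n}^{T_E}_{[\ell+1:K]}$ and of all channel inputs, so is $\tilde{\mathbf{n}}^{T_E}_{[\ell+1:K]}$, and each column of $\tilde{\mathbf{n}}^{T_E}_{[\ell+1:K]}$ is a Gaussian vector with covariance $\tilde{\mathbf{H}}\tilde{\mathbf{H}}^H$ where $\tilde{\mathbf{H}} = \mathbf{H}_2 \mathbf{H}_1^\dagger$.

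The only subtlety — and the main obstacle, though a mild one — is justifying the invertibility of $\mathbf{H}_1$: one must check that the $\ell \times \ell$ submatrix $\mathbf{H}_{[1:\ell]}^{[(M-\ell)+1:M]}$ is nonsingular with probability one. This follows because the determinant of this submatrix is a nonzero polynomial in the entries of $\mathbf{H}$, which are drawn i.i.d.\ from a continuous distribution, so the zero set of this polynomial has Lebesgue measure zero; hence the construction is valid for almost all channel realizations $\mathbf{H}$, which is exactly the regime in which the lower bound is claimed. (When $M = K$ and $\ell$ takes its maximal value the same genericity argument applies verbatim; no boundary case requires separate treatment.) This completes the proof of Lemma~\ref{lem:2}.
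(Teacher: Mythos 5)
Your proposal is correct and follows essentially the same route as the paper's proof: cancel the known inputs $\mathbf{X}^{T_E}_{[1:(M-\ell)]}$ from $\mathbf{Y}^{T_E}_{[1:\ell]}$, invert the $\ell\times\ell$ block $\mathbf{H}_1$ (valid almost surely) to recover $\mathbf{X}^{T_E}_{[(M-\ell)+1:M]}$ up to the noise term $\mathbf{H}_1^{\dagger}\mathbf{n}^{T_E}_{[1:\ell]}$, and reassemble $\mathbf{Y}^{T_E}_{[\ell+1:K]}+\tilde{\mathbf{n}}^{T_E}_{[\ell+1:K]}$. The only cosmetic difference is that you build the target expression forward while the paper substitutes backward into $\mathbf{Y}^{T_E}_{[\ell+1:K]}=\mathbf{H}_3\mathbf{X}^{T_E}_{[1:M]}+\mathbf{n}^{T_E}_{[\ell+1:K]}$; the algebra is identical.
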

\begin{proof}
Given any $\ell \in [1:\min\{M,K\}]$, from \eqref{eq:chop2}, the channel outputs $\mathbf{Y}^{T_E}_{[1:\ell]}$ are a function of the $M$ input symbols $\mathbf{X}^{T_E}_{[1:M]}$ and of the noise $\mathbf{n}^{T_E}_{[1:\ell]}$. Given the input symbols $\mathbf{X}^{T_E}_{[1:(M-\ell)]}$, we can cancel the contribution of these input symbols from the channel outputs $\mathbf{Y}^{T_E}_{[1:\ell]}$ to obtain
\begin{align}\label{eq:ty}
\tilde{\mathbf{Y}}^{T_E}_{[1:\ell]} &= \mathbf{H}^{[1:M]}_{[1:\ell]}
 \mathbf{X}_{[1:M]}^{T_E}  +  \mathbf{n}_{[1:\ell]}^{T_E}  - \mathbf{H}^{[1:M]}_{[1:\ell]}
\begin{bmatrix} \mathbf{X}_{[1:(M-\ell)]}^{T_E} \\ \mathbf{0}^{T_E}_{[(M-\ell) +1:M]} \end{bmatrix}\nonumber\\[5pt]
&  = \mathbf{H}_1 \begin{bmatrix} \mathbf{X}_{[(M-\ell) + 1:M]}^{T_E} \end{bmatrix} + \begin{bmatrix} \mathbf{n}_{[1:\ell]}^{T_E}\end{bmatrix},
\end{align}
where $\mathbf{0}^{T_E}_{[(M-\ell) +1:M]}$ is an $\ell \times {T_E}$ matrix of zeros. As a result, multiplying both sides of \eqref{eq:ty} by $\mathbf{H_1}^{\dagger}$, we get
\begin{align}\label{eq:X}
\mathbf{H_1}^{\dagger}\tilde{\mathbf{Y}}^{T_E}_{[1:\ell]}  =  \mathbf{X}_{[(M-\ell) + 1:M]}^{T_E} + \mathbf{H_1}^{\dagger}\mathbf{n}_{[1:\ell]}^{T_E}.
\end{align}
Now let
\begin{align}
\mathbf{H}_3 = \mathbf{H}^{[1:M]}_{[\ell+1:K]}.
\end{align}
Using this definition, we have
\begin{align}\label{eq:last}
&\mathbf{Y}_{[\ell + 1:K]}^{T_E}
= \mathbf{H}_3 \mathbf{X}_{[1:M]}^{T_E} + \mathbf{n}_{[\ell+1:K]}^{T_E}\nonumber\\[5pt]
& = \mathbf{H}_3
\begin{bmatrix}\mathbf{X}_{[1:(M-\ell)]}^{T_E}\\ \mathbf{H_1}^{\dagger}\tilde{\mathbf{Y}}^{T_E}_{[1:\ell]} - \mathbf{H_1}^{\dagger}\mathbf{n}^{T_E}_{[1:\ell]} \end{bmatrix}
+ \mathbf{n}_{[\ell+1:K]}^{T_E} \nonumber\\[5pt]
& \myeq{(a)} \mathbf{H}_3 \begin{bmatrix}\mathbf{X}_{[1:(M-\ell)]}^{T_E}\\ \mathbf{H_1}^{\dagger}\tilde{\mathbf{Y}}^{T_E}_{[1:\ell]} \end{bmatrix}
-  \mathbf{H}_3 \begin{bmatrix} \mathbf{0}^{T_E}_{[1:(M-\ell)]}\\ \mathbf{H_1}^{\dagger}\mathbf{n}^{T_E}_{[1:\ell]}\end{bmatrix}
+ \mathbf{n}_{[\ell+1:K]}^{T_E} \nonumber\\[5pt]
& = \mathbf{H}_3 \begin{bmatrix}\mathbf{X}_{[1:(M-\ell)]}^{T_E}\\ \mathbf{H_1}^{\dagger}\tilde{\mathbf{Y}}^{T_E}_{[1:\ell]} \end{bmatrix}
-  \mathbf{H}_2 \begin{bmatrix} \mathbf{H_1}^{\dagger}\mathbf{n}^{T_E}_{[1:\ell]}\end{bmatrix}
+ \mathbf{n}_{[\ell+1:K]}^{T_E},
\end{align}
where, in \textsf{(a)}, $\mathbf{0}^{T_E}_{[1:(M-\ell)]}$ is a $[(M-\ell)]\times {T_E}$ matrix of zeros. Rearranging \eqref{eq:last}, we obtain
\begin{align}
\mathbf{Y}^{T_E}_{[\ell+1:K]} + \tilde{\mathbf{n}}^{T_E}_{[\ell+1:K]}  &=  \mathbf{H}_3 \begin{bmatrix}\mathbf{X}_{[1:(M-\ell)]}^{T_E}\\ \mathbf{H_1}^{\dagger}\tilde{\mathbf{Y}}^{T_E}_{[1:\ell]} \end{bmatrix} + \mathbf{n}_{[\ell+1:K]}^{T_E},
\end{align}
where the RHS is a function of the $\ell$ channel outputs $\mathbf{Y}^{T_E}_{[1:\ell]}$, input symbols $\mathbf{X}^{T_E}_{[1:(M-\ell)]}$ and channel noise $\mathbf{n}^{T_E}_{[\ell+1:K]}$.
This completes the proof Lemma \ref{lem:2}. Note that we assumed in \eqref{eq:X} that the sub-matrix $\mathbf{H}_1$ is invertible, which is true for almost all channel realizations, i.e., it is true with probability $1$.
\end{proof}


\section{Pipelined Fronthaul-Edge Transmission}\label{ap:pipe}

\subsection{Proof of Proposition \ref{prop:ach_pipe}}\label{ap_ssec:ach_pipe}
We adopt block-Markov coding as explained in Section \ref{ssec:pipe_ub}, whereby, in each block, the constituent policies for fronthaul and edge transmission are obtained by file-splitting between two policies. To elaborate, for some $\alpha \in [0,1]$ fraction of each file, a (serial) policy requiring fronthaul and edge NDTs $\delta_F^{(1)}$ and $\delta_E^{(1)}$ is used, and for the remaining $(1-\alpha)$ fraction of each file, a (serial) policy requiring NDTs $\delta_F^{(2)}$ and $\delta_E^{(2)}$ is used. From \eqref{eq:ach_pipeline}, the achievable NDT with the resulting block-Markov policy is given by
\begin{align}\label{eq:pipe_ts}
\delta_{\mathsf{P,Ach}} = \max\left(\alpha\delta_F^{(1)} + (1-\alpha)\delta_F^{(2)}, \alpha\delta_E^{(1)} + (1-\alpha)\delta_E^{(2)}\right).
\end{align}
We next identify the constituent policies used to prove \eqref{eq:ach_pipe} for the three regimes namely $(i)$ \textit{low cache regime} with $\mu\in[0,\mu_1]$; $(ii)$ \textit{intermediate cache regime} with $\mu\in[\mu_1,\mu_2]$; and \textit{high cache regime} with $\mu\in[\mu_2,1]$. From the definition of $\mu_1, \mu_2$ in \eqref{eq:mu12}, note that we have $\mu_1\leq \mu_2$ when $\mu_1=0$, we have $r = K/\max\{M,K\}$ and hence $\mu_2=0$. Next we study each regime separately.

\subsubsection{Low Cache Regime $\mathit{(\mu\in[0,\mu_1])}$} 
For the regime $\mu\in[0,\mu_1]$, we set $\alpha = \mu M$, and perform file-splitting between cache-aided EN coordination, as described in Lemma \ref{lem:ach1M}, which is characterized by the NDTs
\begin{align}\label{eq:pipe_lc1}
\delta_F^{(1)} = 0;~~~~~\delta_E^{(1)} =\delta_{\mathsf{Ca-IA}} = \frac{M+K-1}{M};
\end{align}
and soft-transfer fronthauling in conjunction with EN cooperation via ZF-beamforming, as described in Proposition \ref{prop:stf}, which yields the NDTs
\begin{align}\label{eq:pipe_lc2}
\delta_F^{(2)} = \frac{K}{Mr};~~~~~\delta_E^{(2)} =\frac{K}{\min\{M,K\}}.
\end{align}
Note that we have $\mu_1\leq 1/M$ and hence $\alpha \leq 1$ for $\mu\leq \mu_1$. Substituting \eqref{eq:pipe_lc1}-\eqref{eq:pipe_lc2} in \eqref{eq:pipe_ts}, we obtain 
\begin{align}\label{eq:pipe_ia}
\delta_{\mathsf{P-IA}} &= \max\left\{ \frac{(1-\mu M)K}{Mr}~,~ \frac{(1-\mu M)K}{\min\{M,K\}} +  \mu(M+K-1)\right\},\nonumber\\
											 &= \frac{(1-\mu M)K}{Mr}, ~~~~~~~~~~~\text{for}~~~\mu \leq \mu_1.
\end{align}

\subsubsection{High Cache Regime $\mathit{(\mu\in [\mu_2,1])}$} 
For the regime $\mu\in[\mu_2,1]$, we set $\alpha = \mu$, and perform file-splitting between cache-aided EN cooperation in the form of ZF-beamforming, as described in Lemma \ref{lem:ach1}, which yields the NDTs
\begin{align}\label{eq:pipe_hc1}
\delta_F^{(1)} = 0;~~~~~\delta_E^{(1)} =\delta_{\mathsf{Ca-ZF}} = \frac{K}{\min\{M,K\}};
\end{align}
and soft-transfer fronthauling with ZF-beamforming on the edge, as described in Proposition \ref{prop:stf} leading to 
\begin{align}\label{eq:pipe_hc2}
\delta_F^{(2)} = \frac{K}{Mr};~~~~~\delta_E^{(2)} = \frac{K}{\min\{M,K\}}.
\end{align}
 Substituting \eqref{eq:pipe_hc1}-\eqref{eq:pipe_hc2} in \eqref{eq:pipe_ts}, we obtain 
\begin{align}\label{eq:pipe_zf}
\delta_{\mathsf{P-ZF}} &= \max\left\{\frac{(1-\mu)K}{Mr}, \frac{K}{\min\{M,K\}} \right\},\nonumber\\
											 &= \frac{K}{\min\{M,K\}}, \hspace{60pt} \text{for}~~~\mu \geq \mu_2.
\end{align}

\subsubsection{Intermediate Cache Regime $\mathit{(\mu\in [\mu_1,\mu_2])}$} 
For the intermediate cache regime of $\mu\in [\mu_1,\mu_2]$, we consider a strategy which performs file-splitting between the schemes achieving $\delta_{\mathsf{P-IA}}$ at $\mu=\mu_1$ and $\delta_{\mathsf{P-ZF}}$ at $\mu=\mu_2$ discussed above. Specifically, using first scheme which yields an NDT
\begin{align}
\delta_{\mathsf{P-IA}} = \frac{(1-\mu_1 M)K}{Mr}
\end{align}
for a fraction $\left(\frac{\mu_2 - \mu}{\mu_2 - \mu_1}\right)^+$ of the files, and the second scheme for the remaining fraction, which yields an NDT
\begin{align}
\delta_{\mathsf{P-ZF}} = \frac{(1-\mu_2)K}{Mr},
\end{align}
we obtain the achievable NDT
\begin{align}\label{eq:ach_pipe_ts}
\delta_{\mathsf{P-FS}} &= \left(\frac{\mu_2 - \mu}{\mu_2 - \mu_1}\right)^+  \frac{(1-\mu_1 M)K}{Mr} + \left(1 - \left(\frac{\mu_2 - \mu}{\mu_2 - \mu_1}\right)^+\right) \frac{(1-\mu_2)K}{Mr}\nonumber\\
											 &= \frac{K}{Mr} \left[1 - \mu_2 - \left[\mu_1 M - \mu_2\right]\left(\frac{\mu_2-\mu}{\mu_2 - \mu_1} \right)^+ \right].
\end{align}
This concludes the proof of Proposition \ref{prop:ach_pipe}.

\subsection{Proof of Proposition \ref{prop:pipe_gen}}\label{ap_ssec:pipe_gen}

In the regime of low cache size $\mu\in[0,\mu_1]$, the upper bound \eqref{eq:ach_pipe}, rewritten here as
\begin{align}
\delta_{\mathsf{P,Ach}}(\mu,r) \leq \delta_{\mathsf{P-IA}} = \frac{(1 - \mu M)K}{Mr},
\end{align}
matches the lower bound in Corollary \ref{cor:pipe_glb} by setting $\ell=0$, thereby characterizing the minimum NDT for the low cache regime with $\mu\in[0,\mu_1]$.

For the regime of high cache size $\mu\in[\mu_2,1]$, from \eqref{eq:ach_pipe} we have the upper bound 
\begin{align}\label{eq:gap_ub2}
\delta_{\mathsf{P,Ach}}(\mu,r) \leq \delta_{\mathsf{P-ZF}} = \frac{K}{\min\{M,K\}}.
\end{align}
For a matching lower bound, when $M\geq K$, from \eqref{eq:pipe_glb}, we have $\delta_{\mathsf{P}}^*(\mu,r)\geq 1$, while, for $M \leq K$, using $\ell=M$ in the first term inside the $\max(\cdot)$ function in \eqref{eq:pipe_glb} yields $\delta_{\mathsf{P}}^*(\mu,r)\geq K/M$. Combining the two bounds yields the following lower bound on the minimum NDT:
\begin{align}\label{eq:gap_lb2}
\delta_{\mathsf{P}}^*(\mu,r)\geq \frac{K}{\min\{M,K\}},
\end{align}
which matches the upper bound \eqref{eq:gap_ub2}, thereby characterizing the minimum NDT for the high cache regime with $\mu\in[\mu_2,1]$.

Finally, we consider the high fronthaul regime i.e., $r \geq {((1-\mu)\min\{M,K\})}/{M}$. In this regime, considering the NDT in \eqref{eq:pipe_zf}, which is achieved by file-splitting between cloud-aided soft-transfer fronthauling and cache-aided ZF beamforming, it can be seen that the second term inside the $\max(\cdot)$ dominates and we have
\begin{align}\label{eq:ach_pipe_zf}
\delta_{\mathsf{P,Ach}}(\mu,r)\leq \delta_{\mathsf{P-ZF}} = \frac{K}{\min\{M,K\}},
\end{align}
which matches the lower bound in \eqref{eq:gap_lb2}. This completes the proof of Proposition \ref{prop:pipe_gen}.

\subsection{Proof of Proposition \ref{prop:gap_pipe}}\label{ap_ssec:gap_pipe}
In this section, we present the proof of the approximate optimality of the achievable schemes presented in Section \ref{ssec:pipe_ub} in the regime of intermediate fractional cache sizes with $\mu \in [\mu_1,\mu_2]$. To this end, we consider two sub-regimes for the fractional cache size $\mu$ namely  $(i)$ the \textit{intermediate cache regime $\mathit{1}$} with $\mu \in [\mu_1, 1/M]$; and $(ii)$ the \textit{intermediate cache regime $\mathit{2}$} with $\mu \in [1/M, \mu_2]$. We consider each of the two regimes separately.

\n \textbf{\textit{Intermediate Cache Regime} $\bm{\mathit{1}\left(\mathit{\mu \in [\mu_1, 1/M]}\right)}$:} For this regime, considering the achievable NDT presented in \eqref{eq:pipe_ia}, we have the upper bound
\begin{align}\label{eq:gap_ach1}
\delta_{\mathsf{P,Ach}}(\mu,r) & \leq \max\left\{ \frac{(1-\mu M)K}{Mr}~,~ \frac{(1-\mu M)K}{\min\{M,K\}} +  \mu(M+K-1)\right\}\nonumber\\
																												 &= \frac{(1-\mu M)K}{\min\{M,K\}} +  \mu(M+K-1),
\end{align}
since the edge latency i.e., the second term inside the $\max(\cdot)$ function, dominates when $\mu \geq \mu_1$ (see Appendix \ref{ap_ssec:ach_pipe} for details). A lower bound is given by \eqref{eq:gap_lb2}. Using the mentioned upper and lower bounds on the minimum NDT, we have
\begin{align}\label{eq:pgap2}
\frac{\delta_{\mathsf{P,Ach}}(\mu,r)}{\delta_{\mathsf{P}}^*(\mu,r)}  &\leq \left[\frac{(1-\mu M)K}{\min\{M,K\}} +  \mu(M+K-1)\right] \times \frac{\min\{M,K\}}{K}\nonumber\\
																							 & = (1 - \mu M) + \mu M \left[\frac{\min\{M,K\}(M+K-1)}{KM}\right]\nonumber\\		
																							 & \leq (1 - \mu M) + \mu M \left[\frac{M+K}{\max\{M,K\}}\right] ~= (1 - \mu M) + \mu M \left[1+\frac{\min\{M,K\}}{\max\{M,K\}}\right] \nonumber\\
																							 &\leq (1 - \mu M) + 2\mu M \leq 1 + \mu M \nonumber\\
																							 &\myleq{(a)} 2,	
\end{align}
where step \textsf{(a)} follows by using $\mu\leq 1/M$.

\n \textbf{\textit{Intermediate Cache Regime} $\bm{\mathit{2}\left(\mathit{\mu \in [1/M,\mu_2]}\right)}$:}
For this regime, considering the achievable NDT presented in Proposition \ref{prop:ach_pipe}, we have the upper bound
\begin{align}
\delta_{\mathsf{P,Ach}}(\mu,r) \myleq{(a)} \delta_{\mathsf{P,Ach}}(\mu_1,r) = \frac{(1 - \mu_1 M)K}{Mr}\myleq{(b)} \frac{M+K-1}{M},
\end{align}
for any $M,K\geq 1$ and $r>0$, and where steps \textsf{(a)} and \textsf{(b)} follow from the fact that the NDT is a non-increasing function of the cache size $\mu$ and that an NDT of $\delta_{\mathsf{Ca-IA}} = (M+K-1)/M$ is achievable using cache-aided EN coordination via interference alignment at $\mu=1/M$. Again, for this regime, considering the lower bound in \eqref{eq:gap_lb2}, we have
\begin{align}\label{eq:pgap3}
\frac{\delta_{\mathsf{P,Ach}}(\mu,r)}{\delta_{\mathsf{P}}^*(\mu,r)} &\leq \frac{M+K-1}{M} \times \frac{\min\{M,K\}}{K} \nonumber\\
																							 & \leq \frac{M+K}{\max\{M,K\}} ~= 1 + \frac{\min\{M,K\}}{\max\{M,K\}} \leq 2.
\end{align}
Finally combining \eqref{eq:pgap2} and \eqref{eq:pgap3} concludes the proof of Proposition \ref{prop:gap_pipe}.

\subsection{Proof of Corollary \ref{cor:mk22_pipe}}\label{ap_ssec:mk22_pipe}
Using $M=K=2$ in \eqref{eq:pipe_gen1}, we obtain the minimum NDT 
\begin{align}
\delta_{\mathsf{P}}^*(\mu,r)= \begin{cases}
								 ~\dfrac{1-2\mu}{r},~~~~~~ \text{for}~~~~ \mu\in [0,\mu_1=(1-r)/(2+r)] \\ 
								 ~1,~~~~~~~~~~~~~ \text{for}~~~~ \mu\in[\mu_2=(1-r),1]. 
								 \end{cases}
\end{align}

For the remaining intermediate cache regime with $\mu\in[\mu_1,\mu_2]$, we adopt the achievable NDT $\delta_{\mathsf{P-FS}}$ given in \eqref{eq:ach_pipe}, which yields the upper bound
\begin{align}\label{eq:mk22_ach}
\delta_{\mathsf{P,Ach}}(\mu,r) \leq \frac{2-\mu}{1+r}, ~~~~~~ \text{for}~~~~ \mu\in[\mu_1,\mu_2],
\end{align}
 and the lower bound in Corollary \ref{cor:pipe_glb} with $\ell=1$ which can be seen to match \eqref{eq:mk22_ach}. In the high fronthaul regime, i.e., $r\geq 1$, using $M=K=2$ in \eqref{eq:pipe_gen2} yields the minimum NDT. This concludes the proof.

\section*{Acknowledgements}
The authors would like to thank Seyyed Mohammadreza Azimi for his helpful inputs on the proof in Appendix \ref{ap:gap} and Roy Karasik for helping improve the quality of presentation of our work. 
\bibliographystyle{IEEEtran}
\bibliography{references}

\begin{thebibliography}{10}
\providecommand{\url}[1]{#1}
\csname url@samestyle\endcsname
\providecommand{\newblock}{\relax}
\providecommand{\bibinfo}[2]{#2}
\providecommand{\BIBentrySTDinterwordspacing}{\spaceskip=0pt\relax}
\providecommand{\BIBentryALTinterwordstretchfactor}{4}
\providecommand{\BIBentryALTinterwordspacing}{\spaceskip=\fontdimen2\font plus
\BIBentryALTinterwordstretchfactor\fontdimen3\font minus
  \fontdimen4\font\relax}
\providecommand{\BIBforeignlanguage}[2]{{%
\expandafter\ifx\csname l@#1\endcsname\relax
\typeout{** WARNING: IEEEtran.bst: No hyphenation pattern has been}%
\typeout{** loaded for the language `#1'. Using the pattern for}%
\typeout{** the default language instead.}%
\else
\language=\csname l@#1\endcsname
\fi
#2}}
\providecommand{\BIBdecl}{\relax}
\BIBdecl

\bibitem{bastug}
E.~Bastug, M.~Bennis, and M.~Debbah, ``{Living on the edge: The role of
  proactive caching in 5G wireless networks},'' \emph{IEEE Communications
  Magazine}, vol.~52, no.~8, pp. 82--89, Aug 2014.

\bibitem{cache_5g}
F.~Boccardi, R.~W. Heath~Jr., A.~Lozano, T.~L. Marzetta, and P.~Popovski,
  ``{Five disruptive technology directions for 5G},'' \emph{IEEE Communications
  Magazine}, vol.~52, no.~2, pp. 74--80, February 2014.

\bibitem{fronthaul}
A.~Checko, H.~Christiansen, Y.~Yan, L.~Scolari, G.~Kardaras, M.~Berger, and
  L.~Dittmann, ``{Cloud RAN for mobile network - A technology overview},''
  \emph{IEEE Communications Surveys Tutorial}, vol.~17, no.~1, pp. 405--426,
  March 2015.

\bibitem{SimeoneMPSY15}
\BIBentryALTinterwordspacing
O.~Simeone, A.~Maeder, M.~Peng, O.~Sahin, and W.~Yu, ``Cloud radio access
  network: Virtualizing wireless access for dense heterogeneous systems,''
  \emph{arXiv:1512.07743}, Dec 2015. [Online]. Available:
  \url{http://arxiv.org/abs/1512.07743}
\BIBentrySTDinterwordspacing

\bibitem{fog_commag}
H.~Freeman and T.~Zhang, ``The emerging era of fog computing and networking
  [the president's page],'' \emph{IEEE Communications Magazine}, vol.~54,
  no.~6, pp. 4--5, June 2016.

\bibitem{NFV_commag}
B.~Han, V.~Gopalakrishnan, L.~Ji, and S.~Lee, ``Network function
  virtualization: Challenges and opportunities for innovations,'' \emph{IEEE
  Communications Magazine}, vol.~53, no.~2, pp. 90--97, Feb 2015.

\bibitem{Peng_fog}
\BIBentryALTinterwordspacing
M.~Peng, S.~Yan, K.~Zhang, and C.~Wang, ``Fog computing based radio access
  networks: Issues and challenges,'' \emph{arXiv:1506.04233}, 2015. [Online].
  Available: \url{http://arxiv.org/abs/1506.04233}
\BIBentrySTDinterwordspacing

\bibitem{Leconte}
\BIBentryALTinterwordspacing
M.~Leconte, G.~S. Paschos, L.~Gkatzikis, M.~Draief, S.~Vassilaras, and
  S.~Chouvardas, ``Placing dynamic content in caches with small population,''
  \emph{CoRR}, vol. abs/1601.03926, 2016. [Online]. Available:
  \url{http://arxiv.org/abs/1601.03926}
\BIBentrySTDinterwordspacing

\bibitem{Maddah-Ali}
M.~A. Maddah-Ali and U.~Niesen, ``Fundamental limits of caching,'' \emph{IEEE
  Transactions on Information Theory}, vol.~60, no.~5, pp. 2856--2867, May
  2014.

\bibitem{Maddah-Ali-decentralized}
\BIBentryALTinterwordspacing
------, ``Decentralized coded caching attains order-optimal memory-rate
  tradeoff,'' \emph{IEEE/ACM Transactions on Networking}, vol.~23, no.~4, pp.
  1029--1040, Aug. 2015. [Online]. Available:
  \url{http://dx.doi.org/10.1109/TNET.2014.2317316}
\BIBentrySTDinterwordspacing

\bibitem{Maddah-Ali-nonuniform}
U.~{Niesen} and M.~A. {Maddah-Ali}, ``Coded caching with nonuniform demands,''
  in \emph{Proc. IEEE Conference on Computer Communications Workshops (INFOCOM
  WKSHPS)}, April 2014, pp. 221--226.

\bibitem{Maddah-Ali-online}
R.~Pedarsani, M.~Maddah-Ali, and U.~Niesen, ``Online coded caching,''
  \emph{IEEE/ACM Transactions on Networking}, vol.~24, no.~2, pp. 836--845,
  April 2016.

\bibitem{ISWCS_Ji}
M.~Ji, A.~M. Tulino, J.~Llorca, and G.~Caire, ``On the average performance of
  caching and coded multicasting with random demands,'' in \emph{The Eleventh
  International Symposium on Wireless Communication Systems (ISWCS)}, Aug 2014,
  pp. 922--926.

\bibitem{shanmugam_finite}
K.~Shanmugam, M.~Ji, A.~M. Tulino, J.~Llorca, and A.~G. Dimakis, ``Finite
  length analysis of caching-aided coded multicasting,'' in \emph{Proc. 52nd
  Annual Allerton Conference on Communication, Control, and Computing}, Sept
  2014, pp. 914--920.

\bibitem{aviksg-isit15}
A.~Sengupta, R.~Tandon, and T.~Clancy, ``Improved approximation of storage-rate
  tradeoff for caching via new outer bounds,'' in \emph{Proc. IEEE
  International Symposium on Information Theory}, June 2015, pp. 1691--1695.

\bibitem{ghasemi}
H.~Ghasemi and A.~Ramamoorthy, ``Improved lower bounds for coded caching,''
  \emph{arXiv:1501.06003}, 2015.

\bibitem{tifr}
N.~Ajaykrishnan, N.~S. Prem, V.~M. Prabhakaran, and R.~Vaze, ``Critical
  database size for effective caching,'' \emph{arXiv:1501.02549}, 2015.

\bibitem{sahraei_multi-library}
\BIBentryALTinterwordspacing
S.~Sahraei and M.~Gastpar, ``Multi-library coded caching,''
  \emph{arXiv:1601.06016}, 2016. [Online]. Available:
  \url{http://arxiv.org/abs/1601.06016}
\BIBentrySTDinterwordspacing

\bibitem{Motahari_multi-server_arxiv}
\BIBentryALTinterwordspacing
S.~P. Shariatpanahi, A.~S. Motahari, and B.~H. Khalaj, ``Multi-server coded
  caching,'' \emph{arXiv: 1503.00265}, 2015. [Online]. Available:
  \url{http://arxiv.org/abs/1503.00265}
\BIBentrySTDinterwordspacing

\bibitem{wan_moreusers}
\BIBentryALTinterwordspacing
K.~Wan, D.~Tuninetti, and P.~Piantanida, ``On caching with more users than
  files,'' \emph{arXiv:1601.06383}, 2016. [Online]. Available:
  \url{http://arxiv.org/abs/1601.06383}
\BIBentrySTDinterwordspacing

\bibitem{Gastpar_newconv}
\BIBentryALTinterwordspacing
C.-Y. Wang, S.~H. Lim, and M.~Gastpar, ``A new converse bound for coded
  caching,'' \emph{arXiv:1601.05690}, 2016. [Online]. Available:
  \url{http://arxiv.org/abs/1601.05690}
\BIBentrySTDinterwordspacing

\bibitem{tuninetti_optimal}
K.~Wan, D.~Tuninetti, and P.~Piantanida, ``On the optimality of uncoded cache
  placement,'' \emph{arXiv preprint arXiv:1511.02256}, 2015.

\bibitem{aviksg-tifs}
A.~Sengupta, R.~Tandon, and T.~C. Clancy, ``Fundamental limits of caching with
  secure delivery,'' \emph{IEEE Transactions on Information Forensics and
  Security}, vol.~10, no.~2, pp. 355--370, Feb 2015.

\bibitem{fund_ji}
M.~Ji, G.~Caire, and A.~F. Molisch, ``Fundamental limits of caching in wireless
  {D2D} networks,'' \emph{arXiv:1405.5336}, 2014.

\bibitem{Femto-journal}
N.~Golrezaei, K.~Shanmugam, A.~G. Dimakis, A.~F. Molisch, and G.~Caire,
  ``Femtocaching: Wireless video content delivery through distributed caching
  helpers,'' \emph{IEEE Transactions on Information Theory}, vol.~59, no.~12,
  pp. 8402--8413, Dec. 2013.

\bibitem{Gunduz_ML}
P.~Blasco and D.~G{\"u}nd{\"u}z, ``Learning-based optimization of cache content
  in a small cell base station,'' in \emph{IEEE International Conference on
  Communications (ICC)}, June 2014, pp. 1897--1903.

\bibitem{Gunduz_ISIT}
------, ``Multi-armed bandit optimization of cache content in wireless
  infostation networks,'' in \emph{Proc. IEEE International Symposium on
  Information Theory (ISIT)}, June 2014, pp. 51--55.

\bibitem{aviksg-iswcs}
A.~Sengupta, S.~Amuru, R.~Tandon, R.~M. Buehrer, and T.~C. Clancy, ``Learning
  distributed caching strategies in small cell networks,'' in \emph{Proc. The
  Eleventh International Symposium on Wireless Communication Systems (ISWCS)},
  Aug 2014, pp. 917--921.

\bibitem{mds-cache}
V.~Bioglio, F.~Gabry, and I.~Land, ``Optimizing mds codes for caching at the
  edge,'' in \emph{Proc. IEEE Global Communications Conference (GLOBECOM)}, Dec
  2015, pp. 1--6.

\bibitem{PES_journal}
\BIBentryALTinterwordspacing
B.~N. Bharath, K.~G. Nagananda, and H.~V. Poor, ``A learning-based approach to
  caching in heterogenous small cell networks,'' \emph{arXiv: 1508.03517},
  2015. [Online]. Available: \url{http://arxiv.org/abs/1508.03517}
\BIBentrySTDinterwordspacing

\bibitem{diggavi_hcc}
N.~Karamchandani, U.~Niesen, M.~Maddah-Ali, and S.~Diggavi, ``Hierarchical
  coded caching,'' in \emph{Proc. IEEE International Symposium on Information
  Theory (ISIT)}, June 2014, pp. 2142--2146.

\bibitem{diggavi_hetnet}
\BIBentryALTinterwordspacing
J.~Hachem, N.~Karamchandani, and S.~N. Diggavi, ``Coded caching for
  heterogeneous wireless networks with multi-level access,'' \emph{arXiv:
  1404.6560}, 2014. [Online]. Available: \url{http://arxiv.org/abs/1404.6560}
\BIBentrySTDinterwordspacing

\bibitem{diggavi_mlc}
J.~Hachem, N.~Karamchandani, and S.~Diggavi, ``Multi-level coded caching,'' in
  \emph{Proc. IEEE International Symposium on Information Theory (ISIT)}, June
  2014, pp. 56--60.

\bibitem{MA-CAIC}
M.~A. Maddah-Ali and U.~Niesen, ``Cache aided interference channels,'' in
  \emph{Proc. IEEE International Symposium on Information Theory}, June 2015,
  pp. 809--813.

\bibitem{MA-CAIC-arxiv}
\BIBentryALTinterwordspacing
------, ``Cache-aided interference channels,'' \emph{arXiv: 1510.06121}, Oct
  2015. [Online]. Available: \url{http://arxiv.org/abs/1510.06121}
\BIBentrySTDinterwordspacing

\bibitem{avestimehr-caim}
\BIBentryALTinterwordspacing
N.~{Naderializadeh}, M.~A. Maddah-Ali, and A.~{Salman Avestimehr},
  ``Fundamental limits of cache-aided interference management,''
  \emph{arXiv:1602.04207}, Feb. 2016. [Online]. Available:
  \url{http://arxiv.org/pdf/1602.04207v1}
\BIBentrySTDinterwordspacing

\bibitem{letaief}
X.~Peng, J.~C. Shen, J.~Zhang, and K.~B. Letaief, ``Joint data assignment and
  beamforming for backhaul limited caching networks,'' in \emph{Proc. IEEE 25th
  Annual International Symposium on Personal, Indoor, and Mobile Radio
  Communication (PIMRC)}, Sept 2014, pp. 1370--1374.

\bibitem{yu_cran}
\BIBentryALTinterwordspacing
M.~Tao, E.~Chen, H.~Zhou, and W.~Yu, ``Content-centric sparse multicast
  beamforming for cache-enabled cloud {RAN},'' \emph{arXiv:1512.06938}, 2015.
  [Online]. Available: \url{http://arxiv.org/abs/1512.06938}
\BIBentrySTDinterwordspacing

\bibitem{Azari15}
B.~Azari, O.~Simeone, U.~Spagnolini, and A.~M. Tulino, ``Hypergraph-based
  analysis of clustered cooperative beamforming with application to edge
  caching,'' \emph{IEEE Wireless Communications Letters}, vol.~5, no.~1, pp.
  84--87, Feb 2016.

\bibitem{osv_fog}
S.~H. Park, O.~Simeone, and S.~S. Shitz, ``Joint optimization of cloud and edge
  processing for fog radio access networks,'' \emph{IEEE Transactions on
  Wireless Communications}, vol.~15, no.~11, pp. 7621--7632, Nov 2016.

\bibitem{Sezgin_cran}
\BIBentryALTinterwordspacing
Y.~Ugur, Z.~H. Awan, and A.~Sezgin, ``Cloud radio access networks with coded
  caching,'' \emph{arXiv: 1512.02385}, 2015. [Online]. Available:
  \url{http://arxiv.org/abs/1512.02385}
\BIBentrySTDinterwordspacing

\bibitem{aviksg-ciss}
A.~Sengupta, R.~Tandon, and O.~Simeone, ``Cache aided wireless networks:
  Tradeoffs between storage and latency,'' in \emph{Proc. 50th Annual
  Conference on Information Sciences and Systems (CISS)}, March 2016.

\bibitem{aviksg-ciss-arxiv}
\BIBentryALTinterwordspacing
------, ``Cache aided wireless networks: Tradeoffs between storage and
  latency,'' \emph{arXiv:1512.07856}, Dec 2015. [Online]. Available:
  \url{http://arxiv.org/pdf/1512.07856v1.pdf}
\BIBentrySTDinterwordspacing

\bibitem{aviksg-spawc}
------, ``Cloud ran and edge caching: Fundamental performance trade-offs,'' in
  \emph{Proc. IEEE International workshop on Signal Processing advances in
  Wireless Communications (SPAWC)}, July 2016, pp. 1--5.

\bibitem{RT_ISIT16}
R.~Tandon and O.~Simeone, ``{Cloud-aided wireless networks with edge caching:
  Fundamental latency trade-offs in fog radio access networks},'' in
  \emph{Proc. IEEE International Symposium on Information Theory}, July 2016.

\bibitem{aviksg-gc16}
A.~Sengupta, R.~Tandon, and O.~Simeone, ``Pipelined fronthaul-edge content
  delivery in fog radio access networks,'' in \emph{IEEE Globecom Workshops:
  Emerging Technologies for 5G Wireless Cellular Networks}, 2016.

\bibitem{osv_com_mag}
R.~Tandon and O.~Simeone, ``Harnessing cloud and edge synergies: toward an
  information theory of fog radio access networks,'' \emph{IEEE Communications
  Magazine}, vol.~54, no.~8, pp. 44--50, August 2016.

\bibitem{meixia_fdt}
\BIBentryALTinterwordspacing
F.~{Xu}, M.~{Tao}, and K.~{Liu}, ``{Fundamental Tradeoff between Storage and
  Latency in Cache-Aided Wireless Interference Networks},'' \emph{arXiv:
  1605.00203}, May 2016. [Online]. Available:
  \url{http://arxiv.org/pdf/1605.00203v1.pdf}
\BIBentrySTDinterwordspacing

\bibitem{erkip_comptime}
Y.~Liu and E.~Erkip, ``Completion time in multi-access channel: An information
  theoretic perspective,'' in \emph{Proc. IEEE Information Theory Workshop
  (ITW)}, Oct 2011, pp. 708--712.

\bibitem{erkip_comptime2}
------, ``Completion time in broadcast channel and interference channel,'' in
  \emph{Proc. 49th Annual Allerton Conference on Communication, Control, and
  Computing (Allerton)}, Sept 2011, pp. 1694--1701.

\bibitem{BC_Capacity}
H.~Weingarten, Y.~Steinberg, and S.~Shamai~(Shitz), ``{The capacity region of
  the Gaussian multiple-input multiple-output broadcast channel},'' \emph{IEEE
  Transactions on Information Theory}, vol.~52, no.~9, pp. 3936--3964, Sept
  2006.

\bibitem{CJ_Xch}
V.~R. Cadambe and S.~A. Jafar, ``{Interference alignment and the degrees of
  freedom of wireless X-networks},'' \emph{IEEE Transactions on Information
  Theory}, vol.~55, no.~9, pp. 3893--3908, Sept 2009.

\bibitem{MA_Xch}
A.~S. Motahari, S.~Oveis-Gharan, M.~A. Maddah-Ali, and A.~Khandani, ``Real
  interference alignment: Exploiting the potential of single antenna systems,''
  \emph{IEEE Transactions on Information Theory}, vol.~60, no.~8, pp.
  4799--4810, Aug 2014.

\bibitem{stf}
\BIBentryALTinterwordspacing
O.~Simeone, O.~Somekh, H.~V. Poor, and S.~Shamai, ``Downlink multicell
  processing with limited-backhaul capacity,'' \emph{EURASIP Journal on
  Advances in Signal Processing}, vol. 2009, pp. 3:1--3:10, Feb 2009. [Online].
  Available: \url{http://dx.doi.org/10.1155/2009/840814}
\BIBentrySTDinterwordspacing

\bibitem{improve_fund}
Z.~Chen, P.~Fan, and K.~B. Letaief, ``{Fundamental Limits of Caching: Improved
  Bounds For Small Buffer Users},'' \emph{arxiv:1407.1935}, August 2014.

\bibitem{delayed_csit_Xch}
A.~Ghasemi, A.~S. Motahari, and A.~K. Khandani, ``{On the degrees of freedom of
  X-channel with delayed CSIT},'' in \emph{Proc. IEEE International Symposium
  on Information Theory}, July 2011, pp. 767--770.

\bibitem{delayed_csit_BC}
M.~A. Maddah-Ali and D.~Tse, ``Completely stale transmitter channel state
  information is still very useful,'' \emph{IEEE Transactions on Information
  Theory}, vol.~58, no.~7, pp. 4418--4431, July 2012.

\bibitem{Vaze_BC_Dof}
C.~S. Vaze and M.~K. Varanasi, ``{The degree-of-freedom regions of MIMO
  broadcast, interference, and cognitive radio channels with no CSIT},''
  \emph{IEEE Transactions on Information Theory}, vol.~58, no.~8, pp.
  5354--5374, Aug 2012.

\bibitem{cpri}
Ericsson, {Huawei Technologies}, {NEC Corporation}, {Alcatel Lucent}, and
  {Nokia Siemens Networks}, ``{Common public radio interface (CPRI); Interface
  specification},'' \emph{CPRI specification version 5.0}, Sep 2011.

\bibitem{cover}
T.~M. Cover and J.~A. Thomas, \emph{{Elements of Information Theory, 2nd
  Edition}}.\hskip 1em plus 0.5em minus 0.4em\relax Hoboken, NJ, USA:
  Wiley-Interscience, John Wiley and Sons. Inc., 2006.

\end{thebibliography}
\end{document}